\documentclass[11pt,a4paper]{article}
\usepackage[a4paper,margin=1.05in]{geometry}
\usepackage{amsmath,amssymb,amsfonts,amsthm,mathtools,bm}
\usepackage{mathrsfs}
\usepackage{graphicx}
\usepackage{booktabs}
\usepackage{enumitem}
\usepackage{xcolor}
\usepackage{hyperref}
\usepackage[nameinlink,noabbrev]{cleveref}
\usepackage{microtype}
\usepackage{float}
\usepackage[section]{placeins}
\usepackage{array}
\usepackage{caption}
\usepackage{subcaption}
\usepackage{algorithm}
\usepackage{algpseudocode}

\hypersetup{colorlinks=true,linkcolor=blue!50!black,citecolor=blue!50!black,urlcolor=blue!50!black}

\newcommand{\cH}{\mathcal H}

\newcommand{\cL}{\mathcal L}
\newcommand{\cR}{\mathcal R}
\newcommand{\cF}{\mathcal F}
\newcommand{\cA}{\mathcal A}
\newcommand{\cB}{\mathcal B}

\newcommand{\cV}{\mathcal V}
\newcommand{\cE}{\mathcal E}
\newcommand{\bbE}{\mathbb E}
\newcommand{\bbP}{\mathbb P}
\newcommand{\R}{\mathbb R}

\newcommand{\Tr}{\operatorname{Tr}}
\newcommand{\rank}{\operatorname{rank}}
\newcommand{\Comm}{\operatorname{Comm}}
\newcommand{\Herm}{\operatorname{Herm}}
\newcommand{\Var}{\operatorname{Var}}
\newcommand{\Cov}{\operatorname{Cov}}
\newcommand{\diag}{\operatorname{diag}}

\newcommand{\op}{\mathrm{op}}
\newcommand{\HS}{\mathrm{HS}}
\newcommand{\eff}{\mathrm{eff}}
\newcommand{\shot}{\mathrm{shot}}
\newcommand{\br}{\mathrm{br}}
\newcommand{\sym}{\mathrm{sym}}
\newcommand{\comm}{\mathrm{comm}}
\newcommand{\var}{\mathrm{var}}
\newcommand{\disc}{\mathrm{disc}}
\newcommand{\gen}{\mathrm{gen}}
\newcommand{\opt}{\mathrm{opt}}
\newcommand{\bias}{\mathrm{bias}}
\newcommand{\val}{\mathrm{val}}

\newcommand{\dd}{\mathrm d}
\newcommand{\norm}[1]{\left\lVert #1\right\rVert}
\newcommand{\abs}[1]{\left\lvert #1\right\rvert}
\newcommand{\ip}[2]{\left\langle #1,#2\right\rangle}

\newcommand{\srmscore}{\mathfrak S}
\newcommand{\Rad}{\operatorname{Rad}}
\newcommand{\TV}{\operatorname{TV}}
\newcommand{\argmin}{\operatorname*{arg\,min}}

\newcommand{\fit}{\mathrm{fit}}
\newcommand{\rel}{\mathrm{rel}}
\newcommand{\valsplit}{\mathrm{val}}
\newcommand{\sh}{\mathrm{sh}}

\theoremstyle{plain}
\newtheorem{theorem}{Theorem}[section]
\newtheorem{proposition}[theorem]{Proposition}
\newtheorem{lemma}[theorem]{Lemma}
\newtheorem{corollary}[theorem]{Corollary}
\theoremstyle{definition}
\newtheorem{definition}[theorem]{Definition}
\newtheorem{assumption}[theorem]{Assumption}
\theoremstyle{remark}
\newtheorem{remark}[theorem]{Remark}

\crefname{theorem}{theorem}{theorems}
\Crefname{theorem}{Theorem}{Theorems}
\crefname{proposition}{proposition}{propositions}
\Crefname{proposition}{Proposition}{Propositions}
\crefname{lemma}{lemma}{lemmas}
\Crefname{lemma}{Lemma}{Lemmas}
\crefname{corollary}{corollary}{corollaries}
\Crefname{corollary}{Corollary}{Corollaries}
\crefname{definition}{definition}{definitions}
\Crefname{definition}{Definition}{Definitions}
\crefname{assumption}{assumption}{assumptions}
\Crefname{assumption}{Assumption}{Assumptions}
\crefname{remark}{remark}{remarks}
\Crefname{remark}{Remark}{Remarks}

\setlist[itemize]{leftmargin=1.5em}
\setlist[enumerate]{leftmargin=1.7em}

\title{\bfseries Statistical Symmetry Release for Equivariant Quantum Learning}
\author{Zeyu Chen}
\date{}

\begin{document}
\maketitle

\begin{abstract}
Hard symmetry constraints reduce model complexity, but can also erase label information. Statistical symmetry release determines when finite data and quantum measurements justify relaxing such a constraint, which directions to open, and how far to move. We connect global signal detection to local, loss-dependent improvement. A two-copy twirl--swap gate estimates task information in the symmetry-breaking complement with a dimension-independent copy count under paired-state and group-unitary access; reweighting the same records resolves representation sectors. An exact duality distinguishes this Hilbert--Schmidt signal from the larger signal accessible to bounded-outcome readouts. Local improvement is governed by the release gradient and a loss-corrected double-commutator matrix. Simultaneous confidence bounds convert empirical direction selection into certified descent, using either shared Pauli measurements or scalar probes with state-independent truncation bounds. Gaussian testing lower bounds quantify the cost of searching over unknown directions in the calibrated local experiment. Independent validation controls adaptively generated models, and a fast squared-loss bound preserves the approximation--estimation rate of a nested release path. On an eight-qubit Ising model, shared measurements certify release with 6300 times fewer shots than the specified scalar estimator on the tested budget grids. Quotient quantum natural gradient then controls parameter redundancy during training. Together, these results turn symmetry relaxation into a statistically justified model-selection decision.
\end{abstract}

{\small\noindent\textit{Keywords:} equivariant quantum learning; symmetry release; quantum statistical testing; structural risk; quantum natural gradient.}

\section{Introduction}

Symmetry determines which distinctions a quantum learning model can represent. Enforcing it can reduce model complexity and improve trainability under suitable architecture and task assumptions \cite{Schatzki2024,McClean2018,Cerezo2021Cost,Ragone2024LieBP}. An overly restrictive symmetry also creates a precise failure mode: states with the same group twirl are indistinguishable to every invariant single-copy readout. Under global spin flip, two classes polarized in opposite directions have identical twirls, although their odd magnetization reveals the label. Optimization within the invariant class leaves this information inaccessible.

The resulting question is when a finite data set and measurement budget justify relaxing the constraint. Two decisions enter. A global test asks whether the symmetry-breaking readout space contains label information; a local test asks whether an available circuit perturbation reduces the chosen loss. Their distinction is essential: a global signal may lie outside the implementable tangent directions, and searching over noisy local estimates can produce apparent improvement even when none exists.

Statistical symmetry release connects these decisions through quantities that can be measured and validated. A two-copy twirl--swap gate estimates the squared task signal over Hilbert--Schmidt-normalized breaking observables, with a copy count independent of Hilbert dimension under the specified access model. Character reweighting resolves its symmetry sectors from the same records. We also characterize the strongest bounded-outcome breaking signal as a trace-norm distance to the commutant. This duality explains the rank-dependent conversion between the two readout normalizations and makes a retention decision meaningful at a declared signal scale.

For local release, the descent gradient and the negative loss Hessian measure the benefit of an implementable perturbation. A linear loss observable $C$ gives the directional curvature $\Tr([B,[B,C]]\rho_0)$, while nonlinear supervised losses require a chain-rule correction. A simultaneous confidence bound remains valid after selecting a direction from the estimated curvature matrix and supplies a quantitative descent step. We make the certificate computable through shared Pauli records and through scalar loss probes with an operator bound on finite-difference bias. On the tested eight-qubit Ising instance, the shared-record estimator certifies the leading release direction at $10^4$ total shots, compared with $6.3\times10^7$ for the specified scalar estimator.

The decision extends from one step to a family of trained models. A structure specifies a backbone symmetry, an activated breaking subspace, and soft amplitudes. Gaussian testing bounds quantify the cost of searching unknown directions in a calibrated local experiment. Independent validation then compares the adaptively generated predictors. For bounded squared loss, the excess-loss variance relation gives a fast validation bound that preserves the approximation--estimation rate of a nested release path. These guarantees compare the realized candidate family and certify local steps within their stated smoothness range.

Statistical release complements symmetry discovery and geometric training. Symmetry discovery \cite{CompanionDiscovery} supplies candidate physical groups and label actions. The quotient quantum natural gradient (QNG) developed in \cite{ChenQNGQuotient2026} builds on information geometry \cite{Amari1998,BraunsteinCaves1994,Stokes2020} and handles state-preserving parameter redundancy after release. Measurement evidence determines model enlargement, validation evaluates its predictive benefit, and quotient geometry controls the ensuing update. Quantum decision theory separates the statistical information from the measurement access used to extract it \cite{Holevo1973,Helstrom1976}.

\subsection{Relation to existing quantum equivariant learning}

Group-invariant models enforce a prescribed action on the predictor \cite{Larocca2022}. Representation theory organizes the corresponding operator and circuit spaces \cite{Ragone2022}, and equivariant channels supply compatible layers \cite{Nguyen2024EQNN}. Fixed permutation symmetry can yield trainability and generalization guarantees \cite{Schatzki2024}, while quantum convolutional architectures exploit locality and hierarchy \cite{CongChoiLukin2019}. Horizontal quantum gates provide geometric directions that relax strict equivariance \cite{Wiersema2025}. These constructions specify what a symmetry-compatible or relaxed model can express. The release decision studied here determines whether its additional capacity is supported by the task and the available measurements.

Quantum symmetry tests estimate fidelity or Hilbert--Schmidt asymmetry \cite{LaBordeRethinasamyWilde2023,Bandyopadhyay2023}, and coherent projector circuits resolve symmetric and other representation sectors \cite{LaBordeRethinasamyWilde2024}. We use this measurement perspective for a label-conditional class difference, then connect the detected signal to loss curvature and independent model validation. The statistical ingredients---Gaussian testing, simultaneous concentration, and structural-risk comparison---retain their standard roles. Their combination identifies the distinct evidence required for global signal detection, a local release step, and selection of the trained predictor.

\section{Problem Setting: Symmetry as a Model-Selection Variable}
\label{sec:setup}

Treating symmetry as a variable requires a candidate set to select from, a measurement interface that can score candidates, and an accounting of what each choice costs. The candidate classes and access assumptions fix the scope of every guarantee below.

\subsection{Data, group actions, and candidate structures}

All quantum states act on a finite-dimensional Hilbert space $\cH$ of dimension $d$. The Hilbert--Schmidt inner product is $\ip{A}{B}=\Tr(A^\dagger B)$, with unnormalized trace; $\|A\|_1$, $\|A\|_{\HS}$, and $\|A\|_{\op}$ denote the trace, Hilbert--Schmidt, and operator norms. Matrix eigenvalues are ordered decreasingly. Groups used for twirling are finite or compact, their representations are unitary and continuous, and subgroup integrals use normalized counting or Haar measure on closed subgroups. All risks are minimized. Positive release curvature means a favorable second-order loss variation, with first-order terms and activation costs treated separately.

Let $Z_i=(\rho_i,y_i)$ be independent and identically distributed training examples. The input $\rho_i$ is either a quantum state or a classical datum after a quantum feature map; $y_i$ is a label. A candidate group $G$ acts on $\cH$ by $U_g$. For a retained subgroup $H\leq G$, the hard invariant Hermitian readout space is
\begin{equation}
\cA_H=\Comm(H)\cap\Herm(\cH),\qquad
\Comm(H)=\{O:[O,U_h]=0\ \text{for all}\ h\in H\}.
\end{equation}
For a linear readout $f_O(\rho)=\Tr(O\rho)$, hard $H$-symmetry restricts $O\in\cA_H$. For a unitary ansatz, the same formal role is played by $H$-equivariant layers, intertwiners, orbit averages, or sectorwise trainable blocks. The statistical results below require only that each candidate structure produces an auditable finite-dimensional or effectively finite-dimensional class $\cF_a$.

Let $B_1,\ldots,B_p$ be Hermitian breaking generators. A local released state family around a hard backbone is
\begin{equation}
\rho_\theta(\beta)=e^{-iB(\beta)}\rho_\theta e^{iB(\beta)},
\qquad
B(\beta)=\sum_{a=1}^p\beta_aB_a.
\label{eq:release-family}
\end{equation}
Soft release is represented by gates
\begin{equation}
\beta_j=s_j(\alpha_j)u_j,\qquad
s_j(\alpha_j)=\frac{1}{1+\alpha_j},\qquad \alpha_j\in[0,\infty],
\label{eq:soft-scale}
\end{equation}
with the convention $s_j(\infty)=0$. Thus $\alpha_j=\infty$ is hard retention in direction $j$, while $\alpha_j=0$ is full local release. A gate changes the feasible class when the ungated coordinates $u$ obey a fixed radius constraint, as in \cref{thm:soft-effective-dimension}; with unrestricted $u$, a finite nonzero gate is only a reparameterization.

\begin{definition}[Structure variable]
A releasable equivariant model is indexed by
\begin{equation}
 a=(H,S,\alpha),
\end{equation}
where $H$ is the symmetry imposed on the hard backbone, $S\subseteq\operatorname{span}\{B_j\}_{j=1}^p$ is the activated breaking subspace, and $\alpha$ contains its soft-release penalties. Write $\cF_a$ for the resulting predictor class and define $R(f)=\bbE\ell(f(\rho),Y)$, $\widehat R_N(f)=N^{-1}\sum_i\ell(f(\rho_i),y_i)$, and $R(a)=\inf_{f\in\cF_a}R(f)$. A trained output is denoted $\widehat f_a$; its risk can exceed $R(a)$.

\end{definition}

The index $H$ records the reference hard constraint. Opening an $H$-breaking direction generally reduces the exact symmetry of the resulting model. For unitary release, a sufficient residual subgroup is $\{h\in H:[U_h,B]=0\text{ for every active }B\in S\}$, provided the backbone and readout respect it. A positive soft gate changes the allowed amplitude, while an exactly closed gate removes that direction altogether.

\subsection{Sample splitting, measurement access, and adaptive candidate families}
\label{subsec:splitting}

The main statistical risk in adaptive release is post-selection bias. If the same data are used to search over breaking directions and to report the final risk, maximizing curvature over the dictionary systematically amplifies noise. The high-probability guarantees in this paper therefore use three independent data streams,
\begin{equation}
\mathcal D=\mathcal D_{\fit}\sqcup \mathcal D_{\rel}\sqcup \mathcal D_{\valsplit}.
\end{equation}
The fitting stream \(\mathcal D_{\fit}\) trains the parameters inside each candidate structure. The release stream \(\mathcal D_{\rel}\), together with its shot budget, discovers candidate breaking subspaces and estimates curvature, Gram, and Fisher matrices. The validation stream \(\mathcal D_{\valsplit}\) is used only for the final structural-risk comparison and for post-selection confidence intervals. A random candidate family \(\widehat{\mathscr A}\) may depend on \(\mathcal D_{\fit}\cup\mathcal D_{\rel}\), but after conditioning on these two streams the validation stream must remain independent. The fitting and release streams include their auxiliary measurement records and training randomness. Conditioning on this complete prevalidation information gives the independence required in \cref{thm:post-selection-oracle}.

We use three measurement-access modes. In the \emph{paired-copy gate mode}, two independently prepared states are measured through the SWAP observable, implemented by destructive Bell measurements for qubit registers. This mode estimates a global quadratic breaking mass and its representation-sector decomposition before a direction is chosen. In the \emph{matrix-element mode}, one estimates
\begin{equation}
O_{ab}=\frac12\bigl([B_a,[B_b,C]]+[B_b,[B_a,C]]\bigr),
\qquad
K_{ab}=\Tr(O_{ab}\rho_0),
\end{equation}
which is appropriate when \(O_{ab}\) decomposes into low-weight Pauli sums. In the \emph{scalar-probe mode}, one implements \(e^{-i\delta B}\), measures the scalar losses \(\cL(\delta)\) and \(\cL(-\delta)\), and uses a central difference to estimate directional curvature. Every mode supplies a simultaneous confidence radius before its output enters the release criterion. The ambient probe family must be fixed independently of those shots, or the confidence event must cover the entire family from which it is selected. Empirical eigendirections within a simultaneously estimated matrix can reuse its shots; proposing additional observables adaptively requires a fresh batch or an enlarged uniform bound. Paired-copy access is used as a gate when its predicted boundary is favorable.

\begin{assumption}[Post-selection validity discipline]
\label[assumption]{ass:postselection}
The main finite-sample results are stated under the following conditions.
\begin{enumerate}
\item The loss is bounded or sub-exponential, and in the linear readout results it is Lipschitz in the prediction.
\item The candidate structure family is finite before validation, or can be replaced by a finite \(\varepsilon\)-net with its risk-approximation error included. A random family \(\widehat{\mathscr A}\) may depend on \(\mathcal D_{\fit}\cup\mathcal D_{\rel}\) only.
\item The curvature estimator satisfies a simultaneous event such as \(\|\widehat K-K\|_{\op}\le r_\delta\), generated by classical shadows, grouped Pauli measurements, scalar finite differences, or randomized Hessian sketches.
\item For QNG guarantees, the physical Fisher matrix in the horizontal space of the released structure has a gap \(\lambda_{\min}^+(F_S)>0\), certified by the truncation rule of \cite{ChenQNGQuotient2026}. If this gap is below the estimation noise, the algorithm reports the branch as statistically ambiguous rather than interpreting it as a physical release direction; an exact zero mode and a physical mode at the estimation-noise scale are not separable from the estimated Fisher matrix alone \cite{ChenQNGQuotient2026}.
\end{enumerate}
\end{assumption}

\subsection{Twirling blind spots and the release complement}
\label{subsec:twirling-blind-spots}

For a finite or compact retained group $H$, define the Heisenberg twirl and Schr\"odinger twirl by
\begin{equation}
\mathcal T_H(A)=\int_H U_h^\dagger A U_h\,\dd h,
\qquad
\mathcal T_H^\ast(\rho)=\int_H U_h\rho U_h^\dagger\,\dd h,
\label{eq:twirl}
\end{equation}
with normalized Haar measure, or the uniform average in the finite case. The range of $\mathcal T_H$ is $\Comm(H)$.

\begin{proposition}[Twirling blind spot and release complement]
\label[proposition]{prop:twirling-blind-spot}
Let $O\in\Comm(H)$ and let $\rho,\rho'$ be states on $\cH$.
\begin{enumerate}
\item Hard $H$-invariant readouts depend only on the twirled state:
\begin{equation}
\Tr(O\rho)=\Tr\bigl(O\mathcal T_H^\ast(\rho)\bigr).
\label{eq:hard-sees-twirl}
\end{equation}
Consequently, if $\mathcal T_H^\ast(\rho)=\mathcal T_H^\ast(\rho')$, then no hard $H$-invariant readout can distinguish $\rho$ from $\rho'$.
\item In squared-loss supervised learning, any label component orthogonal to the $\sigma$-algebra generated by the twirled feature $\mathcal T_H^\ast(\rho)$ contributes irreducible hard-symmetry bias. Among all predictors measurable with respect to that twirled feature, the optimum is the conditional expectation of the label given the feature. A restricted linear readout class can have additional approximation error. General nonlinear invariant functions need not factor through the single-copy twirl.
\item A release dictionary can be audited by decomposing every candidate generator as
\begin{equation}
B_a=\mathcal T_H(B_a)+(I-\mathcal T_H)B_a.
\end{equation}
The commutant component $\mathcal T_H(B_a)$ preserves $H$ and can be assigned to the backbone. The physical local release component is $(I-\mathcal T_H)B_a$ modulo the null space of the Fisher Gram $G^F$, the quantum Fisher information of the release tangents defined in \eqref{eq:gram}. If $(I-\mathcal T_H)B_a$ has zero $G^F$ norm at the probing state, that candidate is locally statistically indistinguishable and cannot be certified by a pointwise release statistic.
\end{enumerate}
\end{proposition}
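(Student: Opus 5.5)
The plan is to treat the three parts separately, since each reduces to a standard fact about twirls once the adjoint relation between $\mathcal T_H$ and $\mathcal T_H^\ast$ is in place. First I will check that, for every $A$ and $\rho$,
\begin{equation}
\Tr\bigl(\mathcal T_H(A)\rho\bigr)=\Tr\bigl(A\,\mathcal T_H^\ast(\rho)\bigr).
\end{equation}
This follows from cyclicity of the trace inside the Haar integral, exchanging integral and trace (legitimate in finite dimension by continuity of $h\mapsto U_h$). Next I will show that $\mathcal T_H$ fixes $\Comm(H)$ pointwise: if $[O,U_h]=0$ then $U_h^\dagger OU_h=O$, so $\mathcal T_H(O)=O$ after normalization of the measure. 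Combining the two gives \eqref{eq:hard-sees-twirl}. The indistinguishability claim is then immediate, because a hard readout is a function of $\Tr(O\rho)$ for $O\in\cA_H$. The same identity applied to POVM elements in $\Comm(H)$ covers invariant measurements, not only expectation values.

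For part 2, I will set $X=\mathcal T_H^\ast(\rho)$, a random element of the finite-dimensional real space of Hermitian matrices, and $\mathcal G=\sigma(X)$. The key point is that, by part 1, every linear $H$-invariant readout and every measurable function of such readouts is $\mathcal G$-measurable. Among all square-integrable $\mathcal G$-measurable predictors, the $L^2$ projection theorem identifies $\bbE[Y\mid\mathcal G]$ as the minimizer, with risk
\begin{equation}
\bbE\bigl(Y-\bbE[Y\mid \mathcal G]\bigr)^2 .
\end{equation}
This risk is exactly the squared norm of the component of $Y$ orthogonal to $L^2(\mathcal G)$, which gives the irreducible bias. A restricted linear class is a subset of $L^2(\mathcal G)$, so its infimum can only be larger, and the gap is its approximation error. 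For the last sentence I will give a counterexample: a nonlinear invariant such as $\Tr(O\rho^{\otimes 2})$ with $O$ commuting with $U_h^{\otimes2}$. Taking $\mathbb Z_2$ spin flip and the two opposite polarized states, this quantity separates them even though their twirls coincide.

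For part 3, I will use that $\mathcal T_H$ is an idempotent map, self-adjoint with respect to the HS inner product by unitary invariance of Haar measure, with range $\Comm(H)$. It is therefore the orthogonal projection onto $\Comm(H)$, and the decomposition is orthogonal; it also preserves Hermiticity. The commutant part yields $e^{-i\beta\mathcal T_H(B_a)}$, which commutes with all $U_h$ and so preserves $H$-equivariance of the backbone. For the Fisher statement, I will use the local family \eqref{eq:release-family}, whose tangent at $\beta=0$ is $-i[B,\rho_0]$. The Gram $G^F$ from \eqref{eq:gram} is a positive semidefinite quadratic form on these tangents. If the complementary component has zero $G^F$ norm, the QFI along it vanishes, so $\rho_0$ is unchanged to first order in Bures distance (in particular, zero QFI for pure $\rho_0$ forces $[B,\rho_0]=0$). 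The Cramér--Rao bound then gives unbounded variance for any unbiased pointwise estimator of that amplitude, so no pointwise release statistic can certify it. Modding out the null space is legitimate because $G^F$ is a seminorm.

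I expect part 3's Fisher clause to be the main obstacle. The precise meaning of "locally statistically indistinguishable" depends on the definition in \eqref{eq:gram}, which appears later in the paper. If that definition is not directly the QFI, I will first relate it to the SLD Fisher information. I will also state explicitly that I claim only first-order indistinguishability, because higher-order effects of the generator may remain.
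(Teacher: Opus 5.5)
Your arguments for parts 1 and 3, and the projection-theorem core of part 2, are the paper's proof with the steps spelled out. The paper computes $\Tr(O\mathcal T_H^\ast(\rho))=\int_H\Tr(U_h^\dagger OU_h\rho)\,\dd h=\Tr(O\rho)$ directly, and applies the Hilbert-space projection theorem to functions measurable with respect to the twirled feature. It uses that $\mathcal T_H$ is the Hilbert--Schmidt orthogonal projector onto $\Comm(H)$, and reads the Fisher clause as zero local statistical length of the tangent.

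Two small corrections there. First, self-adjointness of $\mathcal T_H$ rests on invariance of Haar measure under $h\mapsto h^{-1}$, because the Hilbert--Schmidt adjoint of $\mathcal T_H$ is $\mathcal T_H^\ast$. Second, the Fisher clause needs neither Cram\'er--Rao nor purity. The identity $\Tr(\rho_0L^2)=\norm{\rho_0^{1/2}L}_{\HS}^2=0$ forces $\rho_0L=L\rho_0=0$. Hence the tangent $-i[(I-\mathcal T_H)B_a,\rho_0]=\tfrac12(\rho_0L+L\rho_0)$ vanishes for mixed states too, and every outcome probability is stationary to first order in that direction.

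The step that fails is your counterexample for the last sentence of part 2. The opposite polarized states $\rho_\pm=((I\pm mZ)/2)^{\otimes n}$ lie in a single $H$-orbit, since $\rho_-=X^{\otimes n}\rho_+X^{\otimes n}$. So every $H$-invariant function, linear or not, takes the same value on them. In particular, if $O$ commutes with $U_h^{\otimes 2}$ for all $h$, then $\Tr(O\rho_-^{\otimes2})=\Tr(O\rho_+^{\otimes2})$. Separating this pair requires an odd function of the state, which is exactly a released readout.

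To show that invariant nonlinear functions need not factor through $\mathcal T_H^\ast$, take two states with equal twirls in different orbits. For one qubit and $H=\{I,X\}$, both $|0\rangle\langle0|$ and $I/2$ twirl to $I/2$. The invariant $\Tr(Z\rho)^2=\Tr\bigl((Z\otimes Z)\rho^{\otimes2}\bigr)$, with $[Z\otimes Z,X\otimes X]=0$, equals $1$ and $0$ on them; purity works as well. The paper asserts this sentence without an example, so with this replacement your plan covers everything the paper's proof does.
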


\begin{proof}
For $O\in\Comm(H)$,
\begin{equation}
\Tr\bigl(O\mathcal T_H^\ast(\rho)\bigr)
=\int_H\Tr(O U_h\rho U_h^\dagger)\,\dd h
=\int_H\Tr(U_h^\dagger O U_h\rho)\,\dd h
=\Tr(O\rho),
\end{equation}
which proves the blind-spot claim. The squared-loss statement is the Hilbert-space projection theorem applied to the closed subspace of functions measurable with respect to the twirled feature. Finally, $\mathcal T_H$ is the orthogonal projector onto $\Comm(H)$ under the Hilbert--Schmidt inner product. Its complement is therefore the part of the generator that transforms in nontrivial $H$ sectors. A zero Fisher-Gram norm means that the induced tangent vector has zero local statistical length at $\rho_0$, so no local estimator at that state can distinguish it from a gauge or null direction.

\end{proof}

\begin{corollary}[Binary separation by a released symmetry sector]
\label[corollary]{cor:binary-release-separation}
Let $Y\in\{+1,-1\}$ have equal prior probabilities and class-conditional states $\rho_+$ and $\rho_-$. Suppose
\begin{equation}
\mathcal T_H^\ast(\rho_+)=\mathcal T_H^\ast(\rho_-).
\end{equation}
Then every two-outcome positive operator-valued measure $\{E_+,E_-\}$ with $E_\pm\in\Comm(H)$ has success probability exactly $1/2$. If there exists a Hermitian released observable $A$ with $\norm{A}_{\op}\le1$ and
\begin{equation}
\Delta_A=\Tr(A\rho_+)-\Tr(A\rho_-)>0,
\end{equation}
then measuring $A$ independently on $M$ copies and thresholding the empirical mean at $(\Tr(A\rho_+)+\Tr(A\rho_-))/2$ has average misclassification probability at most
\begin{equation}
\exp\left(-\frac{M\Delta_A^2}{8}\right).
\label{eq:binary-release-error}
\end{equation}
For $0<m\le1$, the spin-flip example $H=\{I,X^{\otimes n}\}$ and
\begin{equation}
\rho_y=\left(\frac{I+ymZ}{2}\right)^{\otimes n},\qquad y\in\{+1,-1\},
\end{equation}
the two twirls coincide, every hard spin-flip-invariant measurement is at chance for the sign label, and the released order parameter $A=M_z/n$, with $M_z=\sum_{i=1}^n Z_i$, has $\Delta_A=2m$. Here $X_i,Y_i,Z_i$ denote single-site Pauli matrices; $M_x=\sum_iX_i$ and $M_y=\sum_iY_i$ use the same convention.
\end{corollary}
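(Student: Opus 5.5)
The proof splits into three independent parts: a chance-level statement for invariant POVMs, a concentration bound for the released observable, and a direct computation for the spin-flip example.

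\emph{Chance level.} For the first claim I will apply part 1 of \cref{prop:twirling-blind-spot} to each effect. Since $E_\pm\in\Comm(H)$,
\[
\Tr(E_\pm\rho_+)=\Tr\bigl(E_\pm\mathcal T_H^\ast(\rho_+)\bigr)=\Tr\bigl(E_\pm\mathcal T_H^\ast(\rho_-)\bigr)=\Tr(E_\pm\rho_-).
\]
The success probability is $\tfrac12\Tr(E_+\rho_+)+\tfrac12\Tr(E_-\rho_-)$. By the identity above this equals $\tfrac12\Tr\bigl((E_++E_-)\rho_+\bigr)=\tfrac12$, using $E_++E_-=I$. Any decision rule that outputs the opposite label is covered by relabelling the effects, so the value is exactly $1/2$.

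\emph{Concentration bound.} Measuring $A$ on one copy of $\rho_y$ produces i.i.d.\ outcomes in $[-1,1]$, because $\norm{A}_{\op}\le1$. Their mean is $\mu_y=\Tr(A\rho_y)$.
\begin{itemize}
\item The threshold $t=(\mu_++\mu_-)/2$ lies at distance $\Delta_A/2$ from each class mean.
\item Hoeffding's inequality for variables of range $2$ gives, for each class, misclassification probability at most $\exp\bigl(-2M(\Delta_A/2)^2/4\bigr)=\exp(-M\Delta_A^2/8)$.
\item Averaging over the equal priors preserves this bound.
\end{itemize}
The only thing to get right here is the bookkeeping of the range constant, which is what produces the $8$ in the exponent.

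\emph{Spin-flip example.} I will show the twirls coincide and then compute $\Delta_A$.
\begin{itemize}
\item Since $XZX=-Z$, conjugation by $X^{\otimes n}$ maps $\rho_+$ to $\rho_-$.
\item Hence $\mathcal T_H^\ast(\rho_\pm)=\tfrac12(\rho_++\rho_-)$ for both classes, so the chance-level statement applies.
\item By linearity and the product structure, $\Tr(Z_i\rho_y)=ym$, so $\Tr(M_z\rho_y/n)=ym$ and $\Delta_A=2m$.
\item Finally $\norm{M_z/n}_{\op}=1$, so $A$ meets the hypothesis of the concentration bound.
\end{itemize}

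None of the steps is a real obstacle.
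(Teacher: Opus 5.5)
Your proposal is correct and follows the paper's proof essentially step for step: the blind-spot identity of \cref{prop:twirling-blind-spot} applied to invariant effects with $E_++E_-=I$, Hoeffding's inequality with range $2$ at distance $\Delta_A/2$ from each class mean, and $XZX=-Z$ together with $\Tr(Z_i\rho_y)=ym$ for the spin-flip example. Your explicit check that $\norm{M_z/n}_{\op}=1$, so that $A$ meets the hypothesis of the concentration bound, is a small detail the paper leaves implicit.
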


\begin{proof}
Let $\omega=\mathcal T_H^\ast(\rho_+)=\mathcal T_H^\ast(\rho_-)$. Since $E_+\in\Comm(H)$ and $E_-=I-E_+$ also lies in $\Comm(H)$, \cref{prop:twirling-blind-spot} gives
\begin{equation}
\frac12\Tr(E_+\rho_+)+\frac12\Tr(E_-\rho_-)
=\frac12\Tr(E_+\omega)+\frac12\Tr((I-E_+)\omega)=\frac12.
\end{equation}
For the released observable, each shot outcome lies in $[-1,1]$. Under either class, Hoeffding's inequality \cite{Hoeffding1963} bounds the probability that the empirical mean crosses the midpoint threshold by $\exp(-M\Delta_A^2/8)$. The spin-flip example follows from $XZX=-Z$ and $\Tr((M_z/n)\rho_y)=ym$.
\end{proof}

Release is useful when the task law uses information erased by the twirl and the available measurements resolve that information above sampling uncertainty.

\subsection{The two costs of retaining or releasing symmetry}
\label{subsec:two-costs}

For squared loss, let $f_\star(x)=\bbE[Y\mid X=x]$. Let $\Pi_H$ be the $L^2(P_X)$ orthogonal projection onto a closed linear subspace of $H$-equivariant predictors. Its irreducible symmetry bias is
\begin{equation}
\cE^{\sym}_{\bias}(H)=\norm{f_\star-\Pi_Hf_\star}_{L^2(P_X)}^2.
\label{eq:projection-bias}
\end{equation}
For an $H$-invariant input law and a unitary label action $V_g$, define $(T_gf)(x)=V_gf(g^{-1}x)$. This is a unitary action on $L^2(P_X)$, and its invariant subspace is the equivariant class. The identity $f_\star-T_gf_\star=(I-T_g)(f_\star-\Pi_Hf_\star)$ gives $\norm{f_\star-\Pi_Hf_\star}\ge\tfrac12\norm{f_\star-T_gf_\star}$ for every $g\in H$ \cite{CompanionDiscovery}. Thus one broken group element already forces approximation bias.

If $H$ is too small, or if $S$ is too large, the readout and tangent spaces become larger. Let $\{A_j\}_{j=1}^{m_H}$ be a Hilbert--Schmidt orthonormal basis for $\cA_H$ and define
\begin{equation}
\Phi_H(\rho)=(\Tr(A_1\rho),\ldots,\Tr(A_{m_H}\rho))\in\R^{m_H},
\qquad
\widehat\Sigma_H=\frac1N\sum_{i=1}^N\Phi_H(\rho_i)\Phi_H(\rho_i)^\top.
\end{equation}
The data-dependent commutant complexity used throughout the paper is
\begin{equation}
\widehat{\mathfrak C}_{\comm}(H)=R_H\sqrt{\frac{\Tr\widehat\Sigma_H}{N}},
\label{eq:effective-comm}
\end{equation}
with readout radius $R_H$. If the covariance is unavailable, the conservative replacement is $R_H\sqrt{m_H/N}$.

Release also has shot cost. If a release gradient or curvature is estimated from $M$ measurements with noise scale $\sigma^2$, an effective breaking dimension $D_{\br}(S,\alpha)$ contributes
\begin{equation}
\cE_{\shot}(S,\alpha)\simeq \frac{\sigma^2}{2M}D_{\br}(S,\alpha).
\label{eq:shot-cost}
\end{equation}
Here $D_{\br}(S,\alpha)$ denotes the number of hard-opened breaking coordinates, or its soft effective analogue defined in \eqref{eq:soft-shot-dim}. The scale of the soft penalty is fixed by the following local benchmark, which also supplies the effective-dimension formula used in \cref{thm:soft-effective-dimension}.

\begin{proposition}[Finite-shot ridge scale of the soft release]
\label[proposition]{prop:ridge-scale}
Work in a local quadratic surrogate around the released optimum,
\begin{equation}
\cL_{\mathrm{quad}}(u)=\cL_0+\tfrac12(u-u_\star)^\top \mathsf Q(u-u_\star),
\qquad
\mathsf Q=\mathsf Q_\parallel\oplus\mu I_{D_{\br}},
\qquad \mathsf Q_\parallel\succ0,
\qquad \mu>0,
\end{equation}
where the first block spans the $D_{\sym}$ retained coordinates and the second the $D_{\br}$ breaking coordinates, and write $u_\perp$ for the breaking block of $u$. Train with the soft penalty $\tfrac{\lambda}{2}\norm{u_\perp}_2^2$, write $\alpha=\lambda/\mu$ and $s=s(\alpha)=1/(1+\alpha)$ as in \eqref{eq:soft-scale}, and let the additive gradient noise obey $\bbE\xi=0$ and $\Cov(\xi)=\sigma^2\mathsf Q/M$. Define the noisy penalized solution by $\widehat u_\alpha=(\mathsf Q+\lambda P_\perp)^{-1}(\mathsf Q u_\star-\xi)$, where $P_\perp$ projects onto the breaking coordinates. Put $\mathcal B^2=\tfrac12\mu\norm{u_{\star\perp}}_2^2>0$ for the task signal carried by the breaking block. Then for every $M>0$ the expected excess risk is exactly
\begin{equation}
\bbE\bigl[\cL_{\mathrm{quad}}(\widehat u_\alpha)\bigr]-\cL_0
=\mathcal B^2(1-s)^2
+\frac{\sigma^2}{2M}\bigl(D_{\sym}+s^2D_{\br}\bigr)
,
\label{eq:ridge-risk}
\end{equation}
and the minimizer over $\alpha\ge0$ is
\begin{equation}
\alpha_\star=\frac{\sigma^2D_{\br}}{2M\mathcal B^2},
\qquad
s_\star=\frac{1}{1+\alpha_\star}.
\label{eq:alpha-star}
\end{equation}
If the breaking block is opened with individual gates $s_j$ rather than one common gate, the variance term becomes $\tfrac{\sigma^2}{2M}\bigl(D_{\sym}+\sum_js_j^2\bigr)$, which is the effective breaking dimension
\begin{equation}
D_{\br}(S,\alpha)=\sum_{j}s_j(\alpha_j)^2 .
\label{eq:soft-shot-dim-origin}
\end{equation}
\end{proposition}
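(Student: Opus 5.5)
Because $\mathsf Q=\mathsf Q_\parallel\oplus\mu I_{D_{\br}}$ and $P_\perp$ acts only on the breaking block, the linear system defining $\widehat u_\alpha$ decouples. The plan is to solve each block in closed form, write the excess loss as a quadratic form in the error $e=\widehat u_\alpha-u_\star$, and take expectations using only the first two moments of $\xi$.

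\textbf{Step 1 (blockwise solution).} On the retained block, $\widehat u_\parallel=u_{\star\parallel}-\mathsf Q_\parallel^{-1}\xi_\parallel$. On the breaking block, $(\mu+\lambda)\widehat u_\perp=\mu u_{\star\perp}-\xi_\perp$. With $\alpha=\lambda/\mu$ and $s=1/(1+\alpha)$ this reads
\begin{equation}
\widehat u_\perp=s\,u_{\star\perp}-\frac{s}{\mu}\xi_\perp .
\end{equation}
The errors are therefore
\begin{equation}
e_\parallel=-\mathsf Q_\parallel^{-1}\xi_\parallel,
\qquad
e_\perp=-(1-s)u_{\star\perp}-\frac{s}{\mu}\xi_\perp .
\end{equation}

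\textbf{Step 2 (excess loss).} The excess loss is
\begin{equation}
\cL_{\mathrm{quad}}(\widehat u_\alpha)-\cL_0=\tfrac12 e_\parallel^\top\mathsf Q_\parallel e_\parallel+\tfrac{\mu}{2}\norm{e_\perp}_2^2 .
\end{equation}
The cross-block covariance of $\xi$ never enters, since the quadratic form has no off-diagonal block. We now take expectations term by term.

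\begin{itemize}
\item The retained term equals $\tfrac12\xi_\parallel^\top\mathsf Q_\parallel^{-1}\xi_\parallel$. Its expectation is $\tfrac12\Tr(\mathsf Q_\parallel^{-1}\sigma^2\mathsf Q_\parallel/M)=\sigma^2D_{\sym}/(2M)$.
\item In the breaking term, the cross term between $(1-s)u_{\star\perp}$ and $\xi_\perp$ vanishes because $\bbE\xi=0$.
\item The deterministic part gives $\tfrac{\mu}{2}(1-s)^2\norm{u_{\star\perp}}_2^2=\mathcal B^2(1-s)^2$.
\item Since $\bbE\norm{\xi_\perp}_2^2=\sigma^2\mu D_{\br}/M$, the noise part gives $\tfrac{\mu}{2}\cdot\tfrac{s^2}{\mu^2}\cdot\sigma^2\mu D_{\br}/M=\sigma^2s^2D_{\br}/(2M)$.
\end{itemize}
Summing these contributions gives \eqref{eq:ridge-risk} exactly, for every $M>0$.

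\textbf{Step 3 (optimal gate).} The map $\alpha\mapsto s(\alpha)$ is a decreasing bijection from $[0,\infty)$ onto $(0,1]$. It therefore suffices to minimize the strictly convex function $\varphi(s)=\mathcal B^2(1-s)^2+c\,s^2$ with $c=\sigma^2D_{\br}/(2M)$. Setting $\varphi'(s)=0$ gives $s_\star=\mathcal B^2/(\mathcal B^2+c)$. This lies in $(0,1]$ because $\mathcal B^2>0$, so it is the constrained minimizer. Inverting $s_\star=1/(1+\alpha_\star)$ yields $\alpha_\star=c/\mathcal B^2$, which is \eqref{eq:alpha-star}.

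\textbf{Step 4 (individual gates).} With coordinate penalties $\lambda_j$, the breaking block decouples further into scalar problems. Repeating Step 1 coordinatewise gives $e_j=-(1-s_j)u_{\star j}-(s_j/\mu)\xi_j$. Each $\xi_j$ has variance $\sigma^2\mu/M$, so the noise contribution becomes $\tfrac{\sigma^2}{2M}\bigl(D_{\sym}+\sum_js_j^2\bigr)$, which identifies \eqref{eq:soft-shot-dim-origin}. The bias correspondingly becomes $\tfrac{\mu}{2}\sum_j(1-s_j)^2u_{\star j}^2$.

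There is no genuine obstacle. The only points needing care are that the exactness claim relies on block-separability, so no assumption on the cross-covariance of $\xi$ is needed, and that the stationary point $s_\star$ lies in the admissible range $(0,1]$, which is where $\mathcal B^2>0$ is used.
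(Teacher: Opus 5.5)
Your proof is correct and follows essentially the same route as the paper. Both solve the decoupled blocks in closed form, split the breaking-block excess into the bias $\mathcal B^2(1-s)^2$ and the variance $s^2\sigma^2D_{\br}/(2M)$, add the retained-block trace term $\sigma^2D_{\sym}/(2M)$, and minimize in $s$. Your added checks are small points the paper leaves implicit: that $s_\star=\mathcal B^2/(\mathcal B^2+c)$ lies in $(0,1]$, and that it is the global minimizer by strict convexity.
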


\begin{proof}
The breaking block of the solution is $\widehat u_{\alpha\perp}=s u_{\star\perp}-s\xi_\perp/\mu$. Its mean error contributes $\tfrac12\mu(1-s)^2\norm{u_{\star\perp}}^2=\mathcal B^2(1-s)^2$, and its covariance contributes $s^2\sigma^2D_{\br}/(2M)$. On the retained block, the error covariance is $\sigma^2\mathsf Q_\parallel^{-1}/M$, so its risk contribution is $\tfrac12\Tr(\mathsf Q_\parallel\sigma^2\mathsf Q_\parallel^{-1}/M)=\sigma^2D_{\sym}/(2M)$. Adding the terms gives \eqref{eq:ridge-risk}; individual gates give \eqref{eq:soft-shot-dim-origin}. Differentiation with respect to $s$ yields $-2\mathcal B^2(1-s)+\sigma^2sD_{\br}/M=0$, proving \eqref{eq:alpha-star}.
\end{proof}

For one breaking coordinate, this calculation agrees with the scalar Fisher-mode ridge rule in \cite{ChenQNGQuotient2026}: if the mode has Fisher eigenvalue $f$, gradient component $c$, and gradient-noise variance $\sigma_g^2$, its optimal damping is $\gamma_\star=\sigma_g^2 f/c^2$. The conventions coincide under $\alpha=\gamma/f$, $\sigma_g^2=\sigma^2f/M$, and $\mathcal B^2=c^2/(2f)$. The block calculation makes the dependence on the number and amplitudes of opened coordinates explicit.

\section{Quantum-Native Gate before Local Release}
\label{sec:quantum-native}

A global release gate asks whether the symmetry-breaking readout space contains label information before an implementable direction is chosen. The group twirl makes this an overlap-estimation problem: subtracting the twirled overlap removes the symmetric component, and character weights resolve the remaining signal by representation sector.

For normalized Haar or counting measure, inversion invariance implies that the Heisenberg and Schr\"odinger twirls in \eqref{eq:twirl} are the same Hilbert--Schmidt orthogonal projector when acting on matrices. We write this projector as \(\mathcal T\).

\subsection{Task-conditional breaking mass}

\begin{definition}[Breaking mass]
\label[definition]{def:breaking-mass}
For binary class-conditional states \(\rho_+\) and \(\rho_-\), let \(\Delta=\rho_+-\rho_-\) and define
\begin{equation}
D_H^{\mathrm{task}}
=\norm{(\mathrm{id}-\mathcal T)\Delta}_{\HS}^2.
\label{eq:task-breaking-mass}
\end{equation}
For a single state \(\rho\), define the auxiliary state-asymmetry mass
\begin{equation}
D_H(\rho)=\norm{(\mathrm{id}-\mathcal T)\rho}_{\HS}^2.
\label{eq:state-breaking-mass}
\end{equation}
\end{definition}

The single-state quantity is one half of the group-averaged squared commutator asymmetry used in Hilbert--Schmidt symmetry testing \cite{Bandyopadhyay2023}. The task quantity is different: it measures only symmetry breaking that separates the labels.

\begin{lemma}[Projection identity and extremal task signal]
\label{lem:breaking-mass-duality}
The twirl is self-adjoint and idempotent in the Hilbert--Schmidt inner product. Consequently,
\begin{align}
D_H^{\mathrm{task}}
&=\Tr(\Delta^2)-\Tr\bigl(\Delta\mathcal T(\Delta)\bigr),
\label{eq:task-breaking-overlaps}\\
D_H(\rho)
&=\Tr(\rho^2)-\Tr\bigl(\rho\mathcal T(\rho)\bigr).
\label{eq:state-breaking-overlaps}
\end{align}
Moreover,
\begin{equation}
\sqrt{D_H^{\mathrm{task}}}
=\max_{\substack{A=A^\dagger,\ \norm{A}_{\HS}\le1\\ \mathcal T(A)=0}}
\abs{\Tr(A\Delta)}.
\label{eq:breaking-mass-duality}
\end{equation}
If \(D_H^{\mathrm{task}}>0\), the maximum is attained by the normalized operator \((\mathrm{id}-\mathcal T)\Delta\), up to sign.
\end{lemma}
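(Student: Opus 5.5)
The plan is to first establish the two structural properties of $\mathcal T$, then read off the overlap identities and the duality from the Pythagorean decomposition. For self-adjointness, write $\ip{A}{\mathcal T(B)}=\int_H\Tr(A^\dagger U_h^\dagger BU_h)\,\dd h=\int_H\Tr((U_hAU_h^\dagger)^\dagger B)\,\dd h$. The substitution $h\mapsto h^{-1}$, which preserves the normalized Haar or counting measure, turns $U_hAU_h^\dagger$ into $U_h^\dagger AU_h$, so the right side equals $\ip{\mathcal T(A)}{B}$. For idempotence, use left invariance of the measure: $U_k^\dagger\mathcal T(A)U_k=\int_H U_{hk}^\dagger AU_{hk}\,\dd h=\mathcal T(A)$, hence $\mathcal T(A)\in\Comm(H)$. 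Twirling an element of $\Comm(H)$ returns it unchanged, because the measure is normalized. So $\mathcal T^2=\mathcal T$, and $\mathcal T$ is the HS-orthogonal projector onto $\Comm(H)$. The twirl maps Hermitian operators to Hermitian operators, so we may restrict to the real inner-product space $\Herm(\cH)$, where $\ip{A}{B}=\Tr(AB)$ is real.

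For \eqref{eq:task-breaking-overlaps}, expand $\norm{(\mathrm{id}-\mathcal T)\Delta}_{\HS}^2=\ip{\Delta}{(\mathrm{id}-\mathcal T)^\dagger(\mathrm{id}-\mathcal T)\Delta}$. Since $\mathrm{id}-\mathcal T$ is also a self-adjoint idempotent, this equals $\ip{\Delta}{\Delta}-\ip{\Delta}{\mathcal T\Delta}=\Tr(\Delta^2)-\Tr(\Delta\mathcal T(\Delta))$, using that $\Delta$ is Hermitian. The identity \eqref{eq:state-breaking-overlaps} is the same computation with $\rho$ in place of $\Delta$.

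For the duality \eqref{eq:breaking-mass-duality}, take any feasible $A$: Hermitian, with $\norm{A}_{\HS}\le1$ and $\mathcal T(A)=0$. Then $A=(\mathrm{id}-\mathcal T)A$, and by self-adjointness
\begin{equation*}
\Tr(A\Delta)=\ip{(\mathrm{id}-\mathcal T)A}{\Delta}=\ip{A}{(\mathrm{id}-\mathcal T)\Delta}.
\end{equation*}
Cauchy--Schwarz then bounds $\abs{\Tr(A\Delta)}$ by $\norm{(\mathrm{id}-\mathcal T)\Delta}_{\HS}=\sqrt{D_H^{\mathrm{task}}}$.

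For attainment when $D_H^{\mathrm{task}}>0$, set $A_\star=(\mathrm{id}-\mathcal T)\Delta/\norm{(\mathrm{id}-\mathcal T)\Delta}_{\HS}$. This operator is Hermitian, has unit norm, and satisfies $\mathcal T(A_\star)=0$ by idempotence. Its value is $\Tr(A_\star\Delta)=\norm{(\mathrm{id}-\mathcal T)\Delta}_{\HS}$, which attains the bound. If $D_H^{\mathrm{task}}=0$, both sides vanish.

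For uniqueness up to sign, use the equality case of Cauchy--Schwarz: equality forces $A$ to be parallel to $(\mathrm{id}-\mathcal T)\Delta$ with unit norm. In the real space $\Herm(\cH)$ this leaves only $\pm A_\star$. The only delicate step is the inversion-invariance argument for self-adjointness. For compact groups it relies on unimodularity of the Haar measure, which holds for every compact group, so no further obstacle is expected.
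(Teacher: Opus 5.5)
Your proof is correct and follows the paper's argument essentially step for step: inversion invariance of the Haar measure for self-adjointness, averaging twice for idempotence, the Pythagorean expansion for the overlap identities, and Cauchy--Schwarz with the normalized projected difference as the maximizer. Two small remarks: your uniqueness-up-to-sign argument via the Cauchy--Schwarz equality case is a modest addition that the paper leaves implicit, and the substitution $h\mapsto hk$ in your idempotence step uses right invariance rather than left invariance of the measure, which is harmless because compact groups are unimodular.
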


\begin{proof}
Haar invariance under \(g\mapsto g^{-1}\) gives \(\ip{A}{\mathcal T(B)}=\ip{\mathcal T(A)}{B}\), and averaging twice gives \(\mathcal T^2=\mathcal T\). Hence
\begin{equation}
\norm{(\mathrm{id}-\mathcal T)X}_{\HS}^2
=\Tr(X^\dagger X)-\Tr\bigl(X^\dagger\mathcal T(X)\bigr),
\end{equation}
which proves \eqref{eq:task-breaking-overlaps} and \eqref{eq:state-breaking-overlaps}. For every admissible \(A\), self-adjointness gives
\begin{equation}
\Tr(A\Delta)=\ip{A}{(\mathrm{id}-\mathcal T)\Delta}.
\end{equation}
Cauchy--Schwarz gives the upper bound in \eqref{eq:breaking-mass-duality}, and the normalized projected difference attains it.
\end{proof}

Thus \(D_H^{\mathrm{task}}=0\) if and only if every Hilbert--Schmidt-normalized breaking observable has zero label gap. If \(D_H^{\mathrm{task}}\ge r^2\), an observable with \(\norm{A}_{\op}\le\norm{A}_{\HS}\le1\) has gap at least \(r\), so \cref{cor:binary-release-separation} supplies the corresponding readout guarantee. This exact duality is the link between a global symmetry test and the release decision.

The classical--quantum discrepancy used for symmetry discovery measures symmetry of the joint input--label state \cite{CompanionDiscovery}. Its group average equals the breaking mass in that joint representation. With a trivial label action, exact joint invariance implies zero binary task breaking mass; approximate invariance gives a bound weighted by the class probabilities. A sign-equivariant binary task can carry odd-sector label signal even when its joint symmetry test passes.

The normalization determines which observable signals the gate controls. An observable with outcomes in $[-1,1]$ can have Hilbert--Schmidt norm as large as $\sqrt d$. Its strongest breaking signal is characterized by a different norm duality.

\begin{proposition}[Breaking signal for bounded-outcome readouts]
\label[proposition]{prop:operational-gap}
Let $\Delta_{\perp}=(I-\mathcal T)\Delta$ and define
\begin{equation}
\Gamma_H=\max_{\substack{A=A^\dagger,\ \|A\|_{\op}\le1\\\mathcal T(A)=0}}|\Tr(A\Delta)|.
\end{equation}
Then
\begin{equation}
\Gamma_H=\inf_{Z=Z^\dagger\in\Comm(H)}\|\Delta-Z\|_1,
\qquad
\sqrt{D_H^{\mathrm{task}}}\le\Gamma_H\le\|\Delta_{\perp}\|_1
\le\sqrt{\rank(\Delta_{\perp})D_H^{\mathrm{task}}}.
\label{eq:operational-gap}
\end{equation}
For an involutive conjugation action $\mathcal T(X)=(X+UXU^\dagger)/2$, one has $\Gamma_H=\|\Delta_{\perp}\|_1$. In particular, if $P$ is a traceless Hermitian Pauli and $UPU^\dagger=-P$, the states $\rho_\pm=(I\pm aP)/d$, $0<a\le1$, satisfy
\begin{equation}
D_H^{\mathrm{task}}=4a^2/d,\qquad \Gamma_H=2a.
\label{eq:mixed-gap-separation}
\end{equation}
\end{proposition}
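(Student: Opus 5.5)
The proof runs through norm duality for the Hilbert--Schmidt pairing on Hermitian matrices. The operator norm and the trace norm are dual on that space, and the constraint \(\mathcal T(A)=0\) says \(A\) lies in the orthogonal complement of \(\Comm(H)\cap\Herm\). The constraint set is therefore the unit operator-norm ball intersected with the annihilator of the subspace \(W=\Comm(H)\cap\Herm(\cH)\).

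The standard duality between subspace annihilators and quotient norms gives
\[
\max_{A\in W^\perp,\ \|A\|_{\op}\le1}\Tr(A\Delta)=\inf_{Z\in W}\|\Delta-Z\|_1 .
\]
This is finite dimensional, so Hahn--Banach (or an elementary compactness argument) applies with no duality gap. The absolute value is harmless because the feasible set is symmetric under \(A\mapsto-A\). For the weak direction, \(\Tr(A\Delta)=\Tr(A(\Delta-Z))\le\|\Delta-Z\|_1\) holds for every feasible \(A\) and every \(Z\in W\); the equality is the only nontrivial part. This gives the first identity in \eqref{eq:operational-gap}.

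The chain of inequalities then follows in order.
\begin{itemize}
\item \(\sqrt{D_H^{\mathrm{task}}}\le\Gamma_H\): take the maximizer \(A\) from \cref{lem:breaking-mass-duality}. It satisfies \(\|A\|_{\op}\le\|A\|_{\HS}\le1\), so it is feasible for \(\Gamma_H\).
\item \(\Gamma_H\le\|\Delta_\perp\|_1\): choose \(Z=\mathcal T(\Delta)\). It is Hermitian, because the twirl preserves Hermiticity, and it lies in \(\Comm(H)\).
\item \(\|\Delta_\perp\|_1\le\sqrt{\rank(\Delta_\perp)}\,\|\Delta_\perp\|_{\HS}\): apply Cauchy--Schwarz to the singular values.
\end{itemize}

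For the involutive case \(\mathcal T(X)=(X+UXU^\dagger)/2\), the goal is an \(A\) that achieves \(\|\Delta_\perp\|_1\). Take \(A=\operatorname{sgn}(\Delta_\perp)\), the difference of the projectors onto the positive and negative eigenspaces. Then \(\Tr(A\Delta)=\Tr(A\Delta_\perp)+\Tr(A\mathcal T\Delta)\), and the first term equals \(\|\Delta_\perp\|_1\). It remains to show the second term vanishes and that \(A\) is feasible.

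Involutivity gives \(U^2\) acting trivially by conjugation, and \(\Delta_\perp=(\Delta-U\Delta U^\dagger)/2\) satisfies \(U\Delta_\perp U^\dagger=-\Delta_\perp\). Conjugation by \(U\) therefore swaps the positive and negative spectral projectors, so \(UAU^\dagger=-A\). Hence \(\mathcal T(A)=0\), so \(A\) is feasible. Since \(\mathcal T\) is self-adjoint, \(\Tr(A\mathcal T\Delta)=\Tr(\mathcal T(A)\Delta)=0\).

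This spectral-swap step is the only point needing care, since it relies on the functional calculus commuting with unitary conjugation. If the group is \(\{I,U\}\) with \(U^2\propto I\), the twirl takes exactly this form and the argument goes through.

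For the Pauli example, \(\Delta=2aP/d\). Because \(UPU^\dagger=-P\), we get \(\mathcal T(P)=0\) and so \(\Delta_\perp=\Delta\). Using \(P^2=I\):
\begin{itemize}
\item \(\|\Delta\|_{\HS}^2=4a^2d/d^2=4a^2/d\);
\item \(\|\Delta\|_1=(2a/d)\Tr|P|=2a\), since \(P\) has eigenvalues \(\pm1\).
\end{itemize}
These give \(D_H^{\mathrm{task}}=4a^2/d\) and, by the involutive case, \(\Gamma_H=2a\). Positivity of \(\rho_\pm\) holds because \(a\le1\). The main obstacle is making the strong duality rigorous and stating it cleanly in the real Hermitian setting; the remaining steps are short.
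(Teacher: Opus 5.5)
Your proposal is correct and follows essentially the same route as the paper: operator/trace-norm duality on Hermitian matrices restricted to $\ker\mathcal T=(\Comm(H)\cap\Herm(\cH))^\perp$ gives the quotient trace norm. The bounds then use the normalized projected difference for the lower bound, $Z=\mathcal T\Delta$ with Cauchy--Schwarz on singular values for the upper bounds, and the odd matrix sign of $\Delta_\perp$ in the involutive case. The only cosmetic difference is that you invoke Hahn--Banach for the absence of a duality gap, whereas the paper cites Lagrange duality with $A=0$ as a strictly feasible point.
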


\begin{proof}
On the real space of Hermitian matrices, the operator norm is dual to the trace norm. Restricting its unit ball to $\ker\mathcal T$ makes the support function the quotient trace norm modulo $(\ker\mathcal T)^\perp=\Comm(H)\cap\Herm(\cH)$. Equivalently, Lagrange duality for $\mathcal T(A)=0$ gives the infimum in \eqref{eq:operational-gap}; $A=0$ is a strictly feasible point of the norm ball, so finite-dimensional convex duality gives equality. The lower bound uses $A=\Delta_{\perp}/\|\Delta_{\perp}\|_{\HS}$ when $\Delta_{\perp}\ne0$. Taking $Z=\mathcal T\Delta$ and applying Cauchy--Schwarz to the singular values gives the upper bounds. For an involution, $U\Delta_{\perp}U^\dagger=-\Delta_{\perp}$, so its matrix sign, with value zero on its kernel, is also odd. This admissible observable attains $\Tr(\operatorname{sign}(\Delta_{\perp})\Delta)=\|\Delta_{\perp}\|_1$. The Pauli states give \eqref{eq:mixed-gap-separation} directly.
\end{proof}

An upper bound $u_H$ on the mass excludes Hilbert--Schmidt-normalized gaps above $\sqrt{u_H}$. For all operator-norm-bounded readouts, it gives the bound $\sqrt{r_Hu_H}$ whenever $r_H\ge\rank(\Delta_{\perp})$; $r_H=d$ always suffices. The mixed-state example attains this rank dependence. Retention is therefore certified relative to both the declared readout norm and the task-relevance scale.

\subsection{Two-copy twirl--swap gate}

The class-conditional states are population mean states when labeled inputs themselves vary. Preparing each register from a fresh independent example in its requested class gives the product-state expectations used below. If a finite data set is reused, the shot interval instead targets its empirical class means. To transfer that interval to the population, let $\bar\rho_\pm$ average $N_\pm$ independent labeled states and suppose $\|\bar\rho_y-\rho_y\|_{\HS}\le e_y$. Since the difference of two density matrices has Hilbert--Schmidt norm at most $\sqrt2$, contraction of the twirl complement gives
\begin{equation}
\left|\|(I-\mathcal T)(\bar\rho_+-\bar\rho_-)\|_{\HS}^2-D_H^{\mathrm{task}}\right|
\le 2\sqrt2\,(e_++e_-).
\label{eq:task-data-transfer}
\end{equation}
For example, $\bbE\|\bar\rho_y-\rho_y\|_{\HS}^2\le1/N_y$ and Markov's inequality give the simultaneous choice $e_y=\sqrt{2/(N_y\delta_{\mathrm{data}})}$ with failure probability at most $\delta_{\mathrm{data}}$. Adding \eqref{eq:task-data-transfer} to the shot radius gives a population certificate under data reuse; sharper Hilbert-space concentration can replace this simple sampling bound.

Assume that independent copies of the relevant states can be prepared, a Haar-random group element can be sampled and implemented, and the SWAP observable can be measured on a pair. For qubit registers the last operation is a destructive Bell-basis measurement with classical parity post-processing \cite{Bandyopadhyay2023}.

\begin{theorem}[Twirl--swap release gate]
\label{thm:twirl-swap-gate}
For a single state \(\rho\), let \(X_i\in\{-1,1\}\) be a SWAP outcome on \(\rho\otimes\rho\), and let \(Y_i\in\{-1,1\}\) be a SWAP outcome on \(\rho\otimes U_{g_i}\rho U_{g_i}^\dagger\), where \(g_i\) is Haar-random. With \(M\) pairs in each arm,
\begin{equation}
\widehat D_H=\frac1M\sum_{i=1}^M(X_i-Y_i)
\end{equation}
is unbiased for \(D_H(\rho)\). The threshold test \(\varphi=\mathbf1\{\widehat D_H>r^2/2\}\) satisfies
\begin{equation}
\bbP_{D_H=0}\{\varphi=1\}
+\sup_{D_H\ge r^2}\bbP\{\varphi=0\}
\le8\exp\left(-\frac{Mr^4}{32}\right).
\label{eq:twirl-swap-state-error}
\end{equation}

For the task gate, draw \(A_i,B_i\) independently and uniformly from \(\{+,-\}\), set \(c_+=1\), \(c_-=-1\), and multiply each SWAP outcome by \(4c_{A_i}c_{B_i}\). Use \(\rho_{A_i}\otimes\rho_{B_i}\) in the first arm and \(\rho_{A_i}\otimes U_{g_i}\rho_{B_i}U_{g_i}^\dagger\) in the second. The difference of the two empirical means, denoted \(\widehat D_H^{\mathrm{task}}\), is unbiased for \(D_H^{\mathrm{task}}\), and
\begin{equation}
\bbP_{D_H^{\mathrm{task}}=0}\{\widehat D_H^{\mathrm{task}}>r^2/2\}
+\sup_{D_H^{\mathrm{task}}\ge r^2}
\bbP\{\widehat D_H^{\mathrm{task}}\le r^2/2\}
\le8\exp\left(-\frac{Mr^4}{512}\right).
\label{eq:twirl-swap-task-error}
\end{equation}
Both gates use \(4M\) state copies and have \(M=O(r^{-4}\log(1/\delta))\) pair complexity independent of the Hilbert dimension, the breaking dimension, and the dictionary size. This dimension independence concerns copy-pair count conditional on state preparation, group-unitary access, and SWAP measurement. The access model separately determines implementation and gate costs.
\end{theorem}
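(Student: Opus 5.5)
The plan is to prove unbiasedness with the SWAP overlap identity and \cref{lem:breaking-mass-duality}, then get both error bounds from Hoeffding's inequality applied separately to each arm, with a union bound. Everything rests on the standard fact $\Tr\bigl(\mathrm{SWAP}(\sigma\otimes\tau)\bigr)=\Tr(\sigma\tau)$. It means a $\pm1$ SWAP outcome on $\sigma\otimes\tau$ has mean $\Tr(\sigma\tau)$.

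\textbf{Single-state gate.} I would first compute $\bbE X_i=\Tr(\rho^2)$. Conditioning on $g_i$ gives $\bbE[Y_i\mid g_i]=\Tr(\rho\,U_{g_i}\rho U_{g_i}^\dagger)$. Averaging over Haar-random $g_i$ then gives $\bbE Y_i=\Tr\bigl(\rho\,\mathcal T^\ast_H(\rho)\bigr)=\Tr(\rho\,\mathcal T(\rho))$, using the remark that the two twirls coincide as the projector $\mathcal T$. Subtracting and applying \eqref{eq:state-breaking-overlaps} shows that $\widehat D_H$ is unbiased for $D_H(\rho)$.

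\textbf{Single-state error bound.} The $X_i$ are i.i.d. in $[-1,1]$. Because the $g_i$ are i.i.d., the $Y_i$ are also i.i.d. in $[-1,1]$, and the two arms are independent. The test threshold $r^2/2$ lies at distance at least $r^2/2$ from the true mean under both hypotheses: under $D_H=0$ the mean is $0$, and under $D_H\ge r^2$ it is at least $r^2$. So an error forces one of the two arm means to deviate from its expectation by at least $r^2/4$. Hoeffding with range $2$ bounds each such event by $2\exp\bigl(-2M(r^2/4)^2/4\bigr)=2\exp(-Mr^4/32)$. A union over the two arms bounds each error type by $4\exp(-Mr^4/32)$, and summing the type-I and type-II terms gives \eqref{eq:twirl-swap-state-error}.

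\textbf{Task gate, unbiasedness.} Condition on $(A_i,B_i)$, which are uniform on four pairs. The first arm has mean
\begin{equation}
\bbE\bigl[4c_{A_i}c_{B_i}X_i\bigr]=\sum_{a,b\in\{\pm\}}c_ac_b\Tr(\rho_a\rho_b)=\Tr(\Delta^2).
\end{equation}
The second arm has mean $\sum_{a,b}c_ac_b\Tr(\rho_a\mathcal T(\rho_b))=\Tr(\Delta\,\mathcal T(\Delta))$, by linearity of $\mathcal T$. Subtracting and applying \eqref{eq:task-breaking-overlaps} gives unbiasedness for $D_H^{\mathrm{task}}$.

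\textbf{Task gate, error bound.} The weighted outcomes are i.i.d. in $[-4,4]$, so the range is $8$. Hoeffding gives $2\exp\bigl(-2M(r^2/4)^2/64\bigr)=2\exp(-Mr^4/512)$ per arm and per deviation. The same union bound as in the single-state case then yields \eqref{eq:twirl-swap-task-error}.

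\textbf{Copy count.} Each arm uses $M$ pairs, i.e.\ $2M$ copies, so each gate uses $4M$ copies in total. Setting $8\exp(-cMr^4)\le\delta$ gives $M=O(r^{-4}\log(1/\delta))$. No constant in this count depends on $d$, $p$, or the dictionary; the only dimension-dependent ingredient is the access to state preparation, the group unitaries, and the SWAP measurement.

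\textbf{Main point of care.} I expect the delicate step to be justifying the i.i.d. structure. This needs fresh copies and a fresh $g_i$ for every pair, with $(A_i,B_i)$ resampled each time, so that the weighted outcomes are genuinely bounded i.i.d. variables. It also needs the Haar inversion invariance, which lets the Schr\"odinger-twirled overlap be read as the Heisenberg projector in the lemma. The rest is routine.
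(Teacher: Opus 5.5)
Your proposal is correct and follows the paper's proof essentially step for step: the SWAP overlap identity with the projection identities \eqref{eq:state-breaking-overlaps} and \eqref{eq:task-breaking-overlaps} for unbiasedness, then Hoeffding on each arm at deviation $r^2/4$ (range $2$, resp.\ $8$) with a union bound over the two arms and the two error types. One small note: independence between the two arms is never used, since the union bound suffices. This matters for the task gate, where the statement reuses the same labels $(A_i,B_i)$ in both arms, so the arms are dependent, but each arm is still i.i.d.\ across $i$.
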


\begin{proof}
The SWAP identity gives \(\bbE X_i=\Tr(\rho^2)\) and
\begin{equation}
\bbE Y_i=\int_H\Tr(\rho U_g\rho U_g^\dagger)\,\dd g
=\Tr\bigl(\rho\mathcal T(\rho)\bigr),
\end{equation}
so \eqref{eq:state-breaking-overlaps} proves unbiasedness. Each arm is an average of variables in \([-1,1]\). If their errors are both at most \(r^2/4\), then the error of their difference is at most \(r^2/2\). Hoeffding's inequality and a union bound over two arms and two testing errors give \eqref{eq:twirl-swap-state-error}.

For the task estimator, uniform label sampling gives
\begin{equation}
\bbE\bigl[4c_Ac_BX\bigr]
=\sum_{a,b\in\{+,-\}}c_ac_b\Tr(\rho_a\rho_b)
=\Tr(\Delta^2),
\end{equation}
and the twirled arm similarly has mean \(\Tr(\Delta\mathcal T(\Delta))\). The reweighted variables lie in \([-4,4]\), so the same argument with range eight gives \eqref{eq:twirl-swap-task-error}.
\end{proof}

The same proof yields confidence intervals useful for selection. With probability at least \(1-\delta\),
\begin{equation}
\abs{\widehat D_H-D_H}
\le\sqrt{\frac{8\log(4/\delta)}{M}},
\qquad
\abs{\widehat D_H^{\mathrm{task}}-D_H^{\mathrm{task}}}
\le\sqrt{\frac{128\log(4/\delta)}{M}}.
\label{eq:breaking-mass-confidence}
\end{equation}
An upper confidence bound below a predeclared relevance level \(r_{\min}^2\) certifies retention at that Hilbert--Schmidt-normalized signal scale. An interval containing the relevance threshold is reported as ambiguous; exact retention at zero signal requires an upper bound of zero.

\begin{proposition}[Gate noise floor]
\label[proposition]{prop:twirl-swap-noise}
Suppose the implemented operators $\widetilde U_g$ are unitary, \(\norm{\widetilde U_g-U_g}_{\op}\le\eta_U\) for every implemented group element and each single-state SWAP outcome is independently flipped with known probability \(\eta_m<1/2\). Let \(\widehat D_{\mathrm{deb}}=\widehat D_H/(1-2\eta_m)\). Then
\begin{equation}
\abs{\bbE\widehat D_{\mathrm{deb}}-D_H(\rho)}\le2\eta_U.
\end{equation}
For \(r^2\ge8\eta_U\), the unshifted test \(\mathbf1\{\widehat D_{\mathrm{deb}}>r^2/2\}\) obeys
\begin{equation}
\bbP_{D_H=0}\{\varphi=1\}
+\sup_{D_H\ge r^2}\bbP\{\varphi=0\}
\le8\exp\left(-\frac{M(1-2\eta_m)^2r^4}{128}\right).
\label{eq:twirl-swap-noise}
\end{equation}
\end{proposition}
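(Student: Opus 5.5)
The plan is to split the error of the debiased statistic into a deterministic bias, coming from the imperfect group unitaries, and a stochastic fluctuation, controlled by Hoeffding's inequality exactly as in the proof of \cref{thm:twirl-swap-gate}. The readout flip is handled first, because it acts multiplicatively on the mean.

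\emph{Step 1 (readout flips).} Write \(\widetilde X_i,\widetilde Y_i\in\{-1,1\}\) for the recorded outcomes. If an outcome is flipped independently with probability \(\eta_m\), then conditionally on the ideal outcome its mean is multiplied by \(1-2\eta_m\). Hence
\begin{equation}
\bbE\widetilde X_i=(1-2\eta_m)\Tr(\rho^2),\qquad
\bbE\widetilde Y_i=(1-2\eta_m)\int_H\Tr\bigl(\rho\,\widetilde U_g\rho\widetilde U_g^\dagger\bigr)\,\dd g.
\end{equation}
Dividing by \(1-2\eta_m\) removes this factor exactly. The \(X\) arm then has mean \(\Tr(\rho^2)\) with no bias, since it uses no group element.

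\emph{Step 2 (unitary error).} For each \(g\), we have
\begin{equation}
\abs{\Tr\bigl(\rho(\widetilde U_g\rho\widetilde U_g^\dagger-U_g\rho U_g^\dagger)\bigr)}\le\norm{\rho}_{\op}\,\norm{\widetilde U_g\rho\widetilde U_g^\dagger-U_g\rho U_g^\dagger}_1 .
\end{equation}
Telescope the difference as \((\widetilde U_g-U_g)\rho\widetilde U_g^\dagger+U_g\rho(\widetilde U_g-U_g)^\dagger\). Each term has trace norm at most \(\eta_U\norm{\rho}_1=\eta_U\), using unitarity of \(\widetilde U_g\) and \(U_g\). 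Averaging over \(g\) and invoking \eqref{eq:state-breaking-overlaps} then gives \(\abs{\bbE\widehat D_{\mathrm{deb}}-D_H(\rho)}\le2\eta_U\).

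\emph{Step 3 (concentration and threshold).} The debiased summands \(\widetilde X_i/(1-2\eta_m)\) and \(\widetilde Y_i/(1-2\eta_m)\) are i.i.d. within each arm. They take values in an interval of length \(2/(1-2\eta_m)\); Haar sampling of \(g_i\) is part of the i.i.d. draw. Hoeffding's inequality therefore gives, for each arm,
\begin{equation}
\bbP\{\abs{\text{arm mean}-\text{its expectation}}>t\}\le2\exp\bigl(-M(1-2\eta_m)^2t^2/2\bigr).
\end{equation}
Take \(t=r^2/8\). When both arms are within \(t\), the statistic deviates from its mean by at most \(r^2/4\). The assumption \(r^2\ge8\eta_U\) gives bias at most \(2\eta_U\le r^2/4\). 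So \(\widehat D_{\mathrm{deb}}\) lies within \(r^2/2\) of \(D_H\) in that event. Under \(D_H=0\) this means \(\widehat D_{\mathrm{deb}}\le r^2/2\), and under \(D_H\ge r^2\) it means \(\widehat D_{\mathrm{deb}}\ge r^2/2\). The boundary case \(\widehat D_{\mathrm{deb}}=r^2/2\) is absorbed by taking a strict Hoeffding deviation, which is immaterial since the Hoeffding bound holds for \(\ge t\). Each error probability is then at most \(4\exp(-M(1-2\eta_m)^2r^4/128)\), and summing the two testing errors gives \eqref{eq:twirl-swap-noise}.

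The only delicate point is the bookkeeping of the margins. The noise floor consumes half of the available \(r^2/2\) gap, which forces the per-arm deviation down to \(r^2/8\). Combined with the range inflation \(1/(1-2\eta_m)\), this produces the constant \(128\) in place of \(32\). The trace-norm estimate in Step 2 is routine.
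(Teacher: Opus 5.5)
Your proposal is correct and follows essentially the same route as the paper. The flip factor $1-2\eta_m$ is cancelled exactly by debiasing, the telescoping decomposition bounds the overlap shift by $2\eta_U$, and Hoeffding is applied with the remaining $r^2/4$ margin split across the two arms. Applying Hoeffding to the rescaled variables of range $2/(1-2\eta_m)$ is the same computation as the paper's ``Hoeffding before debiasing'', and your explicit treatment of the strict threshold is a detail the paper leaves implicit.
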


\begin{proof}
Outcome flipping multiplies both arm means by \(1-2\eta_m\). For the group-unitary error,
\begin{equation}
\widetilde U\rho\widetilde U^\dagger-U\rho U^\dagger
=(\widetilde U-U)\rho\widetilde U^\dagger
+U\rho(\widetilde U-U)^\dagger,
\end{equation}
whose trace norm is at most \(2\eta_U\). The mean overlap therefore shifts by at most \(2\eta_U\). Under either hypothesis, the debiased mean remains at least \(r^2/4\) from the threshold when \(r^2\ge8\eta_U\). Hoeffding's inequality applied before debiasing gives \eqref{eq:twirl-swap-noise}.
\end{proof}

\subsection{Shared-sample sector spectroscopy}

Let \(\pi(g)X=U_gXU_g^\dagger\) denote the conjugation representation on \(\cB(\cH)\). Although a compact group may have infinitely many irreducible representations, only finitely many occur in this \(d^2\)-dimensional space. Define
\begin{equation}
\Lambda_{\mathrm{act}}
=\{\lambda:P_\lambda\neq0\text{ on }\cB(\cH)\},
\qquad |\Lambda_{\mathrm{act}}|\le d^2,
\end{equation}
where
\begin{equation}
P_\lambda
=d_\lambda\int_H\overline{\chi_\lambda(g)}\,\pi(g)\,\dd g
=d_\lambda\int_H\chi_\lambda(g^{-1})\,\pi(g)\,\dd g
\label{eq:isotypic-projector}
\end{equation}
is the central isotypic projector, $d_\lambda$ is the irreducible-representation dimension, and $\chi_\lambda$ is its character \cite{FultonHarris1991}. The conjugated character fixes the sector convention; the unconjugated integral selects the dual sector. Coherent symmetry-projector circuits give a complementary route \cite{LaBordeRethinasamyWilde2024}, while the estimator below resolves all sector masses from one destructive-measurement sample set.

\begin{theorem}[Sector-resolved breaking spectroscopy]
\label{thm:sector-spectroscopy}
For \(\lambda\in\Lambda_{\mathrm{act}}\), define
\begin{equation}
D_\lambda(\rho)=\ip{\rho}{P_\lambda\rho}
=\norm{P_\lambda\rho}_{\HS}^2.
\end{equation}
Then
\begin{equation}
\sum_{\lambda\neq\mathrm{triv}}D_\lambda(\rho)=D_H(\rho).
\label{eq:sector-sum-rule}
\end{equation}
Let \(\{(g_i,s_i)\}_{i=1}^M\) be one twirled-arm sample set, where \(s_i\in\{-1,1\}\) is the SWAP outcome for \(\rho\otimes U_{g_i}\rho U_{g_i}^\dagger\). Reuse it for every active sector through
\begin{equation}
\widehat D_\lambda
=\frac1M\sum_{i=1}^M
d_\lambda\operatorname{Re}\chi_\lambda(g_i)s_i.
\label{eq:sector-estimator}
\end{equation}
Each estimator is unbiased, and with probability at least \(1-\delta\), simultaneously for all active nontrivial sectors,
\begin{equation}
\abs{\widehat D_\lambda-D_\lambda}
\le
\sqrt{\frac{2d_\lambda^2\log(2|\Lambda_{\mathrm{act}}|/\delta)}{M}}
+\frac{4d_\lambda^2\log(2|\Lambda_{\mathrm{act}}|/\delta)}{3M}.
\label{eq:sector-confidence}
\end{equation}
Consequently, a single budget
\begin{equation}
M=O\left(d_{\max}^2r^{-4}
\log\frac{|\Lambda_{\mathrm{act}}|}{\delta}\right),
\qquad d_{\max}=\max_{\lambda\in\Lambda_{\mathrm{act}}}d_\lambda,
\end{equation}
estimates every sector mass to accuracy \(r^2\), for $0<r\le1$. The number of active sectors enters only logarithmically.
\end{theorem}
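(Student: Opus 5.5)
The proof has four parts: the algebraic identities for the projectors $P_\lambda$, unbiasedness of the character-weighted estimator, a Bernstein-plus-union bound, and inversion to the budget.

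\emph{Projector algebra.} I will use the standard facts about $P_\lambda$ in \eqref{eq:isotypic-projector}, taken from the representation theory of compact groups \cite{FultonHarris1991}. The operators $\{P_\lambda\}_{\lambda\in\Lambda_{\mathrm{act}}}$ are mutually orthogonal idempotents with $\sum_\lambda P_\lambda=\mathrm{id}$ on $\cB(\cH)$, and the trivial one is $P_{\mathrm{triv}}=\mathcal T$. They are also Hilbert--Schmidt self-adjoint. To see this, note that $\pi(g)^\ast=\pi(g^{-1})$, and inversion invariance of Haar measure maps $\overline{\chi_\lambda(g)}$ to $\overline{\chi_\lambda(g^{-1})}=\chi_\lambda(g)$, so $P_\lambda^\ast=P_\lambda$. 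Self-adjointness and idempotence give $\ip{\rho}{P_\lambda\rho}=\norm{P_\lambda\rho}_{\HS}^2$. Pythagoras then gives $\sum_{\lambda\ne\mathrm{triv}}\norm{P_\lambda\rho}_{\HS}^2=\norm{(\mathrm{id}-\mathcal T)\rho}_{\HS}^2=D_H(\rho)$, which is \eqref{eq:sector-sum-rule}.

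\emph{Unbiasedness.} By the SWAP identity, conditional on $g_i$ we have $\bbE[s_i\mid g_i]=\Tr(\rho U_{g_i}\rho U_{g_i}^\dagger)=\ip{\rho}{\pi(g_i)\rho}$, which is real. Integrating over $g_i$ gives
\[
\bbE\bigl[d_\lambda\chi_\lambda(g_i^{-1})s_i\bigr]=\ip{\rho}{P_\lambda\rho}=D_\lambda,
\]
which is real. The second form of \eqref{eq:isotypic-projector} shows that the correct weight is $\chi_\lambda(g^{-1})=\overline{\chi_\lambda(g)}$. Since $s_i$ and $D_\lambda$ are real, taking real parts gives $\bbE[d_\lambda\operatorname{Re}\overline{\chi_\lambda(g_i)}\,s_i]=D_\lambda$. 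Moreover $\operatorname{Re}\overline{\chi}=\operatorname{Re}\chi$, so this is exactly the estimator \eqref{eq:sector-estimator}.

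\emph{Concentration.} Each summand $W_i=d_\lambda\operatorname{Re}\chi_\lambda(g_i)s_i$ is bounded by $d_\lambda^2$ in absolute value, because $|\chi_\lambda|\le d_\lambda$. Its second moment satisfies $\bbE W_i^2\le d_\lambda^2\,\bbE|\chi_\lambda(g)|^2=d_\lambda^2$ by Schur orthogonality, $\int|\chi_\lambda|^2\,\dd g=1$. This variance bound is the key point: it gives a sub-Gaussian scale of $d_\lambda$ rather than the crude $d_\lambda^2$. The centered variables are bounded by $2d_\lambda^2$ and have variance at most $d_\lambda^2$. Bernstein's inequality at level $\delta/|\Lambda_{\mathrm{act}}|$, two-sided, gives the radius $\sqrt{2d_\lambda^2\log(2|\Lambda_{\mathrm{act}}|/\delta)/M}+\tfrac{2\cdot 2d_\lambda^2}{3M}\log(2|\Lambda_{\mathrm{act}}|/\delta)$, which matches \eqref{eq:sector-confidence}. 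A union bound over the active sectors makes the bound hold simultaneously.

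\emph{Budget.} Set $L=\log(2|\Lambda_{\mathrm{act}}|/\delta)$. Taking $M\ge C d_{\max}^2r^{-4}L$ makes the square-root term at most $r^2/2$. The linear term is then at most $\tfrac43 d_\lambda^2 L/M\le \tfrac{4}{3C}r^4$, and this is at most $r^2/2$ because $r\le1$. Together these give accuracy $r^2$ in every sector, with the number of sectors entering only through $L$.

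\emph{Main obstacle.} I expect the only delicate point to be the conjugation and sector-convention bookkeeping: making sure the real part of the weight reproduces $\ip{\rho}{P_\lambda\rho}$ for the sector as defined, rather than for its dual. This is resolved by the reality of $\ip{\rho}{\pi(g)\rho}$ combined with inversion invariance of Haar measure.
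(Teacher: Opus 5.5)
Your proposal is correct and follows the paper's proof essentially step for step. Both arguments use the same ingredients: orthogonal isotypic projectors summing to the identity with $P_{\mathrm{triv}}=\mathcal T$ for the sum rule; the SWAP identity plus the character-weighted Haar average, together with reality of $\ip{\rho}{P_\lambda\rho}$ and $\operatorname{Re}\overline{\chi_\lambda}=\operatorname{Re}\chi_\lambda$, for unbiasedness; and $|\chi_\lambda|\le d_\lambda$ with $\int|\chi_\lambda|^2=1$ feeding Bernstein and a union bound. You also make explicit two details the paper leaves implicit: the Hilbert--Schmidt self-adjointness of $P_\lambda$ via inversion invariance, and the centered range $2d_\lambda^2$ that produces the $4/3$ constant.
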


\begin{proof}
The operators \(P_\lambda\) are mutually orthogonal Hilbert--Schmidt projectors that sum to the identity on the conjugation representation, with \(P_{\mathrm{triv}}=\mathcal T\). This proves nonnegativity and \eqref{eq:sector-sum-rule}. Moreover,
\begin{equation}
\bbE\left[d_\lambda\operatorname{Re}\chi_\lambda(g)s\right]
=\operatorname{Re}\ip{\rho}{P_\lambda\rho}
=D_\lambda(\rho).
\end{equation}
The inner product is real because $P_\lambda$ is an orthogonal projector, and $\operatorname{Re}\overline{\chi_\lambda}=\operatorname{Re}\chi_\lambda$, which justifies the estimator's real character weight. The bound $|\chi_\lambda(g)|\le d_\lambda$ and character orthogonality imply $|X_\lambda|\le d_\lambda^2$ and $\Var(X_\lambda)\le\bbE X_\lambda^2\le d_\lambda^2$. Scalar Bernstein concentration and a union bound give \eqref{eq:sector-confidence}. Dependence between sector estimates is allowed because the underlying sample pairs are independent.
\end{proof}

For a non-real irrep, $P_\lambda\Delta$ need not be Hermitian; a physical Hermitian release dictionary combines the conjugate pair $\lambda,\bar\lambda$. Their reported masses are projection energies, not two independent Hermitian readouts. The task-sector masses \(D_\lambda^{\mathrm{task}}=\norm{P_\lambda\Delta}_{\HS}^2\) use the same samples and the signed class-pair reweighting from \cref{thm:twirl-swap-gate}. The summand becomes \(4c_Ac_Bd_\lambda\operatorname{Re}\chi_\lambda(g)s\). It has variance at most \(16d_\lambda^2\), so the same simultaneous rate holds with a larger absolute constant. Sector confidence intervals may prioritize the local dictionary, but a sector is excluded only when its upper confidence bound lies below the predeclared relevance scale.

\subsection{Access-model boundary and limits of the gate}

Suppose the local Gaussian experiment is calibrated so that \(\norm{\mu}^2=D_H^{\mathrm{task}}\) on a \(k\)-dimensional branch and its shot variance is \(\tau^2=\sigma^2/M\). At fixed confidence, \cref{thm:detection-boundary} then requires
\begin{equation}
D_H^{\mathrm{task}}\gtrsim\frac{\sigma^2\sqrt{k}}{M},
\end{equation}
whereas \cref{thm:twirl-swap-gate} requires
\begin{equation}
D_H^{\mathrm{task}}\gtrsim\sqrt{\frac{\log(1/\delta)}{M}}.
\end{equation}
The paired-copy gate is favorable in the high-dimensional, shot-limited regime
\begin{equation}
M\lesssim\frac{\sigma^4k}{\log(1/\delta)}.
\end{equation}
Direction-resolved measurements become sharper after enough shots. This crossover is why the algorithm treats the gate as an access-dependent screening step rather than a mandatory test.

The Pauli group makes the role of quantum memory explicit. If \(H_P\) is the \(n\)-qubit Pauli group modulo phases and \(d=2^n\), then
\begin{equation}
\mathcal T_{H_P}(\rho)=I/d,
\qquad
D_{H_P}(\rho)=\Tr(\rho^2)-1/d.
\label{eq:pauli-purity-reduction}
\end{equation}
Thus Pauli-symmetry testing contains purity testing. Distinguishing an unknown pure state from $I/d$ requires $\Omega(\sqrt d)$ copies under arbitrary adaptive single-copy measurements without quantum memory, whereas paired-copy SWAP measurements use a constant number at fixed confidence \cite{ChenCotlerHuangLi2021}.

A structured alternative can admit a sharper measurement strategy. The Pauli-spike construction below detects a breaking mass $r^2$ with a copy count of order $r^{-2}$ up to logarithms, showing that the twirl--swap rate depends on the access model and the alternative family.

\begin{proposition}[Point-null Pauli scan]
\label[proposition]{prop:pauli-point-null}
Let \(d=2^n\), \(\rho_0=I/d\), and let \(P_1,\ldots,P_{d^2-1}\) be the nonidentity Hermitian Pauli operators. For
\begin{equation}
\rho_{J,s}=\frac{I+s r\sqrt d\,P_J}{d},
\qquad s\in\{+1,-1\},
\qquad 0<r\le d^{-1/2},
\end{equation}
one has \(D_{H_P}(\rho_{J,s})=r^2\). There is a preselected nonadaptive measurement performed separately on each copy that distinguishes \(\rho_0\) from every \(\rho_{J,s}\), with false-positive plus worst-case false-negative probability at most \(\delta\), using
\begin{equation}
N=(d+1)\left\lceil
\frac{8\log(4d^2/\delta)}{dr^2}
\right\rceil
=O\left(\frac{\log(d/\delta)}{r^2}\right)
\label{eq:pauli-point-null-copies}
\end{equation}
single copies.
\end{proposition}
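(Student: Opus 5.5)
The plan has two parts. First I will verify the breaking mass and the admissibility of the states. Then I will build a nonadaptive scan whose cost is of order $\log(d/\delta)/r^2$ rather than $d^2/r^2$. The point is that a single computational-basis measurement in the right frame reads out $d-1$ commuting Pauli expectations at once.

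For the mass, I use \eqref{eq:pauli-purity-reduction}, which gives $D_{H_P}(\rho)=\Tr(\rho^2)-1/d$. Set $c=r\sqrt d$. Then $\Tr(P_J)=0$ and $P_J^2=I$ give
\begin{equation}
\Tr\rho_{J,s}^2=\frac{d+c^2d}{d^2}=\frac1d+r^2,
\end{equation}
so $D_{H_P}(\rho_{J,s})=r^2$. The condition $r\le d^{-1/2}$ means $c\le1$. Since $P_J$ has eigenvalues $\pm1$, this makes $I+scP_J\succeq0$, so each $\rho_{J,s}$ is a valid state.

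For the measurement, I invoke the standard fact that the $d^2-1$ nonidentity $n$-qubit Paulis partition into $d+1$ maximal commuting classes $\mathcal C_1,\ldots,\mathcal C_{d+1}$, each of size $d-1$. This is the stabilizer construction of $d+1$ mutually unbiased bases. The factor $d+1$ in \eqref{eq:pauli-point-null-copies} is exactly the number of classes.

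\begin{itemize}
\item Each class has a joint eigenbasis, reachable by a Clifford rotation followed by a computational-basis measurement.
\item Set $m=\lceil 8\log(4d^2/\delta)/(dr^2)\rceil$. The schedule measures $m$ fresh copies in the eigenbasis of each class, so $N=(d+1)m$ copies in total. The schedule is fixed in advance and each copy is measured separately.
\item A single outcome on a copy measured for class $\mathcal C_k$ yields a $\pm1$ eigenvalue for every $P\in\mathcal C_k$ simultaneously.
\item So every Pauli $P_j$ receives an empirical mean $\widehat\mu_j$ of $m$ i.i.d.\ $[-1,1]$ variables with mean $\Tr(P_j\rho)$.
\end{itemize}

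The relevant means are $\Tr(P_j\rho_0)=0$ for all $j$. Under the alternative, $\Tr(P_J\rho_{J,s})=sc=sr\sqrt d$. The test rejects when $\max_j|\widehat\mu_j|>r\sqrt d/2$.

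Under the null, I apply Hoeffding's inequality with threshold $r\sqrt d/2$ and union over the $d^2-1$ Paulis. The false-positive probability is at most $2d^2\exp(-mdr^2/8)$.

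Under any $\rho_{J,s}$, a false negative requires $|\widehat\mu_J-sc|\ge c/2$. This has probability at most $2\exp(-mdr^2/8)$, uniformly over $(J,s)$. The dependence between Paulis inside a class does not matter, because only marginal Hoeffding bounds and a union bound are used.

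With the chosen $m$, the exponential factor is $\exp(-mdr^2/8)\le\delta/(4d^2)$. The two error bounds then sum to at most $\delta/2+\delta/(2d^2)\le\delta$. The $O(\cdot)$ form in \eqref{eq:pauli-point-null-copies} follows from $(d+1)/d\le2$ and $\log(4d^2/\delta)=O(\log(d/\delta))$. The rounding-up adds $d+1\le 2d\le 2/r^2$, using $r\le d^{-1/2}$.

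The main obstacle is justifying the commuting-class partition with the exact class count $d+1$ and size $d-1$, together with the claim that one basis measurement gives unbiased $\pm1$ samples for all $d-1$ members. I would cite the known symplectic construction (Bandyopadhyay et al.). The $\pm1$ eigenvalue of each member is a fixed sign-product function of the joint eigenvector label, so the sampling claim follows. Everything else is routine concentration.
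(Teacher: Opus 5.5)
Your proposal is correct and follows essentially the same route as the paper. Both arguments partition the nonidentity Paulis into the \(d+1\) stabilizer (mutually unbiased) commuting classes and spend \(\lceil 8\log(4d^2/\delta)/(dr^2)\rceil\) copies per class. Both then combine marginal Hoeffding bounds with a union bound over the \(d^2-1\) statistics under the null, and use a single Hoeffding bound on the aligned statistic under the alternative.

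The only differences are cosmetic. You threshold at \(r\sqrt d/2\) instead of the null-calibrated \(\tau=\sqrt{2\log(4d^2/\delta)/n_c}\le r\sqrt d/2\). You also explicitly verify \(D_{H_P}(\rho_{J,s})=r^2\), the positivity of \(\rho_{J,s}\), and the \(O(\log(d/\delta)/r^2)\) form, which the paper leaves implicit.
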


\begin{proof}
The nonidentity Paulis partition into \(d+1\) maximal commuting classes, whose joint eigenbases form a complete set of mutually unbiased stabilizer bases \cite{WoottersFields1989}. Allocate \(n_c=N/(d+1)\) copies to each basis and, for every Pauli \(P\) in its aligned class, form the empirical mean \(T_P\) of its \(\pm1\) eigenvalue. Under \(\rho_0\), every mean is zero. Under \(\rho_{J,s}\), the aligned statistic has mean \(sr\sqrt d\), while all other statistics have mean zero. With
\begin{equation}
\tau=\sqrt{\frac{2\log(4d^2/\delta)}{n_c}},
\end{equation}
Hoeffding's inequality and a union bound control \(\max_P\abs{T_P}\) under the null by \(\delta/2\). The sample size in \eqref{eq:pauli-point-null-copies} gives \(r\sqrt d\ge2\tau\), so the aligned statistic exceeds \(\tau\) with missed-detection probability at most \(\delta/2\).
\end{proof}

Collective spectrum testing distinguishes $I/d$ from arbitrary trace-distance-$\varepsilon$ alternatives with $\Theta(d/\varepsilon^2)$ copies \cite{ODonnellWright2021}. Substituting the trace distance $\varepsilon=r\sqrt d/2$ gives an $O(r^{-2})$ upper bound for the Pauli-spike family; the general spectrum-testing lower bound does not by itself apply to this restricted family. The twirl--swap gate offers dimension-independent copy complexity and exact removal of an unknown symmetric backbone. Indeed, for every $H$-symmetric $\rho_{\sym}$ and Hermitian $V$ with $\mathcal T(V)=0$ and $\norm{V}_{\HS}=1$,
\begin{equation}
D_H(\rho_{\sym}+rV)=r^2
\end{equation}
whenever \(\rho_{\sym}+rV\) is a state. Establishing the minimax complexity under an unknown composite symmetric null remains open.

\section{Multi-Direction Release: Curvature Matrices and Orthogonalization}
\label{sec:multidirection}

A useful release may combine several generators even when no individual diagonal curvature is positive. The symmetrized double commutator defines a quadratic form on the entire dictionary. Its spectrum identifies favorable combinations, and Gram normalization makes their relative scale explicit.

\subsection{Symmetrized double-commutator curvature matrix}

Let $\rho_0$ be the state produced by the retained backbone at $\beta=0$. For a Hermitian loss observable $C$, consider the linear probing loss
\begin{equation}
\cL(\beta)=\Tr\bigl(Ce^{-iB(\beta)}\rho_0e^{iB(\beta)}\bigr),\qquad C=C^\dagger.
\end{equation}
A nonlinear supervised loss weights these readout curvatures by its local first derivatives and adds a second-derivative correction. Both contributions are needed to compare release directions under the actual loss.

\begin{definition}[Release curvature matrix]
\label[definition]{def:release-curvature-matrix}
Given Hermitian breaking generators $B_1,\ldots,B_p$, define
\begin{equation}
K_{ab}
=-\left.\frac{\partial^2\cL(te_a+se_b)}{\partial t\partial s}\right|_{t=s=0}
=\frac12\Tr\left(\bigl([B_a,[B_b,C]]+[B_b,[B_a,C]]\bigr)\rho_0\right).
\label{eq:curvature-matrix}
\end{equation}
For $a=b$, $K_{aa}=\Tr([B_a,[B_a,C]]\rho_0)$, recovering the one-dimensional double commutator. For a minimization objective, $v^\top Kv>0$ means that the direction $B(v)=\sum_a v_aB_a$ gives a second-order local decrease.
\end{definition}

\begin{theorem}[Nonlinear supervised release curvature]
\label{thm:nonlinear-supervised-curvature}
Consider a release-probing sample $z_i=(\rho_i,y_i)$ and vector readouts
\begin{equation}
m_{i,j}(\beta)=\Tr\bigl(O_j e^{-iB(\beta)}\rho_i e^{iB(\beta)}\bigr),
\qquad j=1,\ldots,q,
\end{equation}
with empirical probing risk
\begin{equation}
\cL_n(\beta)=\frac1n\sum_{i=1}^n \ell(m_i(\beta),y_i),
\end{equation}
where $\ell(\cdot,y)$ is twice continuously differentiable near $m_i(0)$. Define
\begin{align}
d_{i,j}&=\partial_{m_j}\ell(m_i(0),y_i),
&Q_{i,jk}&=\partial_{m_j}\partial_{m_k}\ell(m_i(0),y_i),\\
\nu_{i,a,j}&=i\Tr([B_a,O_j]\rho_i),
&D_{i,ab}^{(j)}&=\frac12\Tr\left(\bigl([B_a,[B_b,O_j]]+[B_b,[B_a,O_j]]\bigr)\rho_i\right).
\end{align}
Then the release curvature matrix for the nonlinear supervised loss is
\begin{equation}
K^{\ell}_{ab}
=-\left.\partial_a\partial_b\cL_n(\beta)\right|_{0}
=\frac1n\sum_{i=1}^n
\left[
\sum_{j=1}^q d_{i,j}D_{i,ab}^{(j)}
-
\sum_{j,k=1}^q Q_{i,jk}\nu_{i,a,j}\nu_{i,b,k}
\right].
\label{eq:nonlinear-curvature}
\end{equation}
For a linear validation observable, or whenever all first-order readout responses $\nu_{i,a,j}$ vanish in the probed branch, the second term disappears and \eqref{eq:nonlinear-curvature} reduces to the double-commutator curvature with local observable $C_i=\sum_j d_{i,j}O_j$.
\end{theorem}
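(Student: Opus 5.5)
This is a chain-rule computation. I will expand each readout $m_{i,j}(\beta)$ to second order in $\beta$ and then compose with $\ell$.

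\textbf{Step 1: expand the readouts.} Write $\rho_i(\beta)=e^{-iB(\beta)}\rho_ie^{iB(\beta)}$ and move the conjugation onto the observable, so that $m_{i,j}(\beta)=\Tr(e^{iB(\beta)}O_je^{-iB(\beta)}\rho_i)$. By the Baker--Campbell--Hausdorff expansion,
\[
e^{iB}O_je^{-iB}=O_j+i[B,O_j]-\tfrac12[B,[B,O_j]]+O(\norm{\beta}^3).
\]
Substituting $B(\beta)=\sum_a\beta_aB_a$ gives the first derivative
\[
\partial_a m_{i,j}(0)=i\Tr([B_a,O_j]\rho_i)=\nu_{i,a,j}.
\]
The $\beta$-quadratic part is $-\tfrac12\sum_{a,b}\beta_a\beta_b\Tr([B_a,[B_b,O_j]]\rho_i)$. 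Symmetrizing in $(a,b)$ yields
\[
\partial_a\partial_b m_{i,j}(0)=-D^{(j)}_{i,ab}.
\]
This is the same computation that identifies \eqref{eq:curvature-matrix} with the mixed second derivative in \cref{def:release-curvature-matrix}, applied observable by observable.

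\textbf{Step 2: apply the chain rule.} Since $\ell(\cdot,y_i)$ is $C^2$ near $m_i(0)$ and $m_i(\beta)$ is smooth, the second-order chain rule gives
\[
\partial_a\partial_b\,\ell(m_i(\beta),y_i)\big|_0=\sum_j d_{i,j}\,\partial_a\partial_b m_{i,j}(0)+\sum_{j,k}Q_{i,jk}\,\partial_a m_{i,j}(0)\,\partial_b m_{i,k}(0).
\]
Inserting the Step 1 derivatives, averaging over $i$, and negating yields \eqref{eq:nonlinear-curvature}.

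\textbf{Step 3: reduction cases.} If $\ell$ is linear in $m$, then $Q\equiv0$. If instead all $\nu_{i,a,j}=0$ on the probed branch, the second term again vanishes. In either case linearity of the commutator in $O_j$ gives
\[
\sum_j d_{i,j}D^{(j)}_{i,ab}=\tfrac12\Tr\bigl(([B_a,[B_b,C_i]]+[B_b,[B_a,C_i]])\rho_i\bigr),\qquad C_i=\sum_j d_{i,j}O_j,
\]
which is the double-commutator curvature of \cref{def:release-curvature-matrix} with local observable $C_i$.

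\textbf{Main obstacle: sign and factor bookkeeping in Step 1.} Three points need care. The ordering of $e^{\mp iB}$ must match the Heisenberg picture. The factor $i$ in $\nu$ must make it real: this holds because $[B_a,O_j]$ is anti-Hermitian, so $\Tr([B_a,O_j]\rho_i)$ is imaginary. Finally, the factor $\tfrac12$ must be used correctly when the quadratic form is symmetrized for $a\ne b$. I would check all three by differentiating $m_{i,j}(te_a+se_b)$ directly in $t$ and $s$ at $0$. That route avoids BCH altogether and gives $-\tfrac12\Tr(([B_a,[B_b,O_j]]+[B_b,[B_a,O_j]])\rho_i)$, since $e^{-i(tB_a+sB_b)}$ expands symmetrically to second order.
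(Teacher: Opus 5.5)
Your proposal is correct and follows the paper's proof essentially step for step: it obtains $\partial_a m_{i,j}(0)=\nu_{i,a,j}$ and $\partial_a\partial_b m_{i,j}(0)=-D^{(j)}_{i,ab}$ from the Heisenberg-picture expansion, applies the second-order chain rule, then negates and averages over $i$. Your additional checks (that $\nu_{i,a,j}$ is real, the symmetrization of the mixed second-order term, and the reduction to $C_i=\sum_j d_{i,j}O_j$ by linearity of the commutator) are sound. They make explicit details that the paper leaves implicit.
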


\begin{proof}
The first readout derivative is
\begin{equation}
\partial_a m_{i,j}(0)=i\Tr([B_a,O_j]\rho_i)=\nu_{i,a,j}.
\end{equation}
The mixed second derivative is
\begin{equation}
\partial_a\partial_b m_{i,j}(0)
=-\frac12\Tr\left(\bigl([B_a,[B_b,O_j]]+[B_b,[B_a,O_j]]\bigr)\rho_i\right)
=-D_{i,ab}^{(j)}.
\end{equation}
Applying the chain rule to $\ell(m_i(\beta),y_i)$ gives
\begin{equation}
\partial_a\partial_b\ell_i(0)
=\sum_j d_{i,j}\partial_a\partial_bm_{i,j}(0)
+\sum_{j,k}Q_{i,jk}\partial_am_{i,j}(0)\partial_bm_{i,k}(0).
\end{equation}
Negating and averaging over $i$ proves \eqref{eq:nonlinear-curvature}.
\end{proof}

\begin{remark}[Measurement consequence]
\label[remark]{rem:nonlinear-measurement}
For nonlinear losses, estimating only the double-commutator observable can overstate release benefit when $Q_i\succeq0$, because the Gauss--Newton correction in \eqref{eq:nonlinear-curvature} is negative semidefinite in the release direction. In experiments, the products $\nu_{i,a,j}\nu_{i,b,k}$ should be computed from exact derivatives or from independently split shot records to avoid same-record product bias; noisy parameter-shift estimates require the same independence or bias correction. Estimation error in the loss derivatives $d_{i,j}$ and $Q_{i,jk}$ must also be propagated when these are evaluated at noisy readout estimates.
\end{remark}

\begin{lemma}[Multi-direction release quadratic form]
\label[lemma]{thm:quadratic-release}
Assume the linear loss above is stationary in the full release block, $\nabla_\beta\cL(0)=0$. Then
\begin{equation}
\cL(\beta)=\cL(0)-\frac12\beta^\top K\beta+O\bigl(\norm{B(\beta)}_{\op}^3\norm{C}_{\op}\bigr).
\label{eq:quadratic-release}
\end{equation}
Ignoring finite-sample costs, a strictly negative quadratic loss variation exists if and only if $K$ has a positive eigenvalue. A zero quadratic mode requires higher-order analysis.
\end{lemma}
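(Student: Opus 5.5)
The proof is a second-order Taylor expansion of the conjugation map $\beta\mapsto e^{-iB(\beta)}\rho_0e^{iB(\beta)}$ with an explicit third-order remainder. After that, the conclusion follows from linear algebra on the quadratic form $-\tfrac12\beta^\top K\beta$.

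\textbf{Step 1: expansion along a ray.} Fix $\beta$ and set $B=B(\beta)$. Consider the scalar function $\phi(t)=\Tr(Ce^{-itB}\rho_0e^{itB})$. Its derivatives are nested commutators:
\begin{equation}
\phi'(t)=i\Tr\bigl([B,C]\rho_t\bigr),\qquad
\phi''(t)=-\Tr\bigl([B,[B,C]]\rho_t\bigr),\qquad
\phi'''(t)=-i\Tr\bigl([B,[B,[B,C]]]\rho_t\bigr),
\end{equation}
where $\rho_t=e^{-itB}\rho_0e^{itB}$. These follow from Heisenberg-picture differentiation $\tfrac{\dd}{\dd t}e^{itB}Ce^{-itB}=ie^{itB}[B,C]e^{-itB}$ together with cyclicity of the trace.

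Taylor's theorem with Lagrange remainder gives
\[
\cL(\beta)=\phi(1)=\phi(0)+\phi'(0)+\tfrac12\phi''(0)+\tfrac16\phi'''(\xi)\quad\text{for some }\xi\in(0,1).
\]

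\textbf{Step 2: identify each term.}
\begin{itemize}
\item \emph{First-order term.} By linearity in $\beta$, $\phi'(0)=\sum_a\beta_a\,i\Tr([B_a,C]\rho_0)=\beta^\top\nabla_\beta\cL(0)$. This vanishes under the stationarity hypothesis.
\item \emph{Second-order term.} Expanding bilinearly, $\Tr([B,[B,C]]\rho_0)=\sum_{a,b}\beta_a\beta_b\Tr([B_a,[B_b,C]]\rho_0)$. Since $\beta_a\beta_b$ is symmetric in $(a,b)$, this equals $\beta^\top K\beta$ with $K$ as in \eqref{eq:curvature-matrix}. Hence $\tfrac12\phi''(0)=-\tfrac12\beta^\top K\beta$.
\item \emph{Remainder.} Because $\rho_\xi$ is a state, $\abs{\Tr(X\rho_\xi)}\le\norm{X}_{\op}$. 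Each commutator at most doubles the operator norm, so $\norm{[B,[B,[B,C]]]}_{\op}\le 8\norm{B}_{\op}^3\norm{C}_{\op}$. The remainder is therefore at most $\tfrac43\norm{B(\beta)}_{\op}^3\norm{C}_{\op}$, which proves \eqref{eq:quadratic-release}.
\end{itemize}

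\textbf{Step 3: the if-and-only-if statement.} Read the claim as quantifying over $\beta$ in a small neighbourhood of $0$. On the finite-dimensional coefficient space, $\norm{B(\beta)}_{\op}\le c\norm{\beta}_2$ with $c=\sum_a\norm{B_a}_{\op}$.

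\emph{If $K$ has a positive eigenvalue.} Let $\lambda>0$ be that eigenvalue with unit eigenvector $v$, and take $\beta=tv$. Then
\[
\cL(tv)-\cL(0)\le-\tfrac12\lambda t^2+O(t^3),
\]
which is strictly negative for small $t>0$.

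\emph{If $K\preceq0$.} Then $-\tfrac12\beta^\top K\beta\ge0$, so no strictly negative \emph{quadratic} variation exists. Directions $v$ in $\ker K$ have vanishing quadratic term, and their sign is decided by the $O(t^3)$ and higher terms. This is exactly the stated caveat that zero modes need higher-order analysis.

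\textbf{Main obstacle.} The only delicate point is the meaning of ``quadratic loss variation'' in the converse direction. One must keep the claim at the level of the second-order form rather than the full loss. Otherwise, kernel directions of a singular $K\preceq0$ could in principle still give a decrease at third or fourth order, and the converse would be false as a statement about $\cL$ itself. The remaining steps are routine bookkeeping of commutator norms.
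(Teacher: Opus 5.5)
Your proposal is correct and follows essentially the paper's route: the paper expands the Heisenberg-picture observable $e^{iB(\beta)}Ce^{-iB(\beta)}$ to second order and uses symmetry of the Hessian to symmetrize the mixed double commutators, which is what your ray expansion does. Your explicit Lagrange remainder $\tfrac43\norm{B(\beta)}_{\op}^3\norm{C}_{\op}$ is the same bound the paper derives in \cref{rem:analytic-L3}, and your reading of the converse as a statement about the quadratic form, with kernel directions deferred to higher order, matches the lemma's caveat on zero modes.
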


\begin{proof}
Move the unitary action to the observable and use Baker--Campbell--Hausdorff:
\begin{equation}
 e^{iB(\beta)}Ce^{-iB(\beta)}=C+i[B(\beta),C]-\frac12[B(\beta),[B(\beta),C]]+O(\norm{B(\beta)}^3\norm{C}).
\end{equation}
The mixed second derivatives are symmetrized because the Hessian is symmetric. If the first-order term is zero, the sign of the quadratic term determines local descent.
\end{proof}

\subsection{Local release gain with first-order terms}
\label{subsec:first-order-release}

First-order stationarity must hold in the full probed release block. It follows, for example, when the state and loss are $H$-invariant and the generators lie entirely in nontrivial sectors, or when the full release gradient has been verified to vanish. Training only the hard backbone establishes stationarity within the retained class and leaves breaking gradients unconstrained. The following trust-region certificate includes those gradients and their measurement uncertainty.

Let $S_M\succeq0$ and $G_N\succeq0$ be declared quadratic activation and structural costs, and define the penalized objective $\cR_{N,M}(\beta)=\cL(\beta)+\tfrac12\beta^\top S_M\beta+\beta^\top G_N\beta$. Their coefficients may be calibrated by a local risk model such as \cref{prop:ridge-scale}; they are distinct from confidence radii. Define
\begin{equation}
\ell_a=-\left.\partial_{\beta_a}\cL(\beta)\right|_{0},
\qquad
A=K-S_M-2G_N,
\end{equation}
where \(\ell_a\) is the first-order descent benefit and \(A\) is the curvature benefit after those penalties. For a Gram matrix \(G\succ0\) and radius \(\rho>0\), define the population quadratic release gain
\begin{equation}
\mathfrak G_\rho(\ell,A;G)
=\sup_{\beta^\top G\beta\le \rho^2}
\left\{\ell^\top\beta+\frac12\beta^\top A\beta\right\}.
\label{eq:trust-gain}
\end{equation}
If \(\ell=0\), this reduces to \(\rho^2\lambda_{\max}(G^{-1/2}AG^{-1/2})_+/2\). If \(A\preceq0\) but \(\ell\neq0\), a first-order release benefit can still be certified.

\begin{theorem}[Safe release certificate with first-order terms]
\label{thm:first-order-safe-release}
Assume that on the Gram ball \(\{\beta:\beta^\top G\beta\le\rho^2\}\), the third-order remainder obeys
\begin{equation}
\left|\cL(\beta)-\cL(0)+\ell^\top\beta+\frac12\beta^\top K\beta\right|
\le \frac{L_3}{6}\rho^3.
\end{equation}
Assume also that the simultaneous confidence event
\begin{equation}
\norm{G^{-1/2}(\widehat A-A)G^{-1/2}}_{\op}\le r_K,
\qquad
\norm{G^{-1/2}(\widehat\ell-\ell)}_2\le r_\ell
\label{eq:first-order-ci}
\end{equation}
holds with probability at least \(1-\delta\). If
\begin{equation}
\mathfrak G_\rho(\widehat\ell,\widehat A;G)
>
\rho r_\ell+\frac12\rho^2r_K+\frac{L_3}{6}\rho^3,
\label{eq:first-order-certificate}
\end{equation}
then there exists a local release \(\beta\) whose penalized validation risk is strictly below the hard-symmetry point. The quadratic candidate is obtained from the trust-region Karush--Kuhn--Tucker optimality conditions \cite[Thm.~4.1]{NocedalWright2006}
\begin{equation}
(\nu G-\widehat A)\beta=\widehat\ell,\quad
\nu\ge\max\{0,\lambda_{\max}(G^{-1/2}\widehat A G^{-1/2})\},\quad
\beta^\top G\beta\le\rho^2,\quad
\nu(\beta^\top G\beta-\rho^2)=0.
\label{eq:trust-kkt}
\end{equation}
\end{theorem}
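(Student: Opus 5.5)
The argument compares the penalized risk at a chosen $\beta$ with the hard point $\beta=0$. It then replaces population quantities by estimates on the confidence event \eqref{eq:first-order-ci}, and uses the gain condition \eqref{eq:first-order-certificate} to show the net change is negative.

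\textbf{Step 1: exact decrease in terms of $\ell$ and $A$.} Since $\cR_{N,M}(0)=\cL(0)$, we have
\begin{equation}
\cR_{N,M}(\beta)-\cR_{N,M}(0)=\cL(\beta)-\cL(0)+\tfrac12\beta^\top S_M\beta+\beta^\top G_N\beta.
\end{equation}
The remainder assumption gives, for every $\beta$ in the Gram ball,
\begin{equation}
\cL(\beta)-\cL(0)\le-\ell^\top\beta-\tfrac12\beta^\top K\beta+\tfrac{L_3}{6}\rho^3 .
\end{equation}
Adding the penalties collects the curvature as $K-S_M-2G_N=A$. This yields
\begin{equation}
\cR_{N,M}(\beta)-\cR_{N,M}(0)\le-\Bigl(\ell^\top\beta+\tfrac12\beta^\top A\beta\Bigr)+\tfrac{L_3}{6}\rho^3 .
\end{equation}
The task is therefore to lower bound the population quadratic model at a suitable $\beta$.

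\textbf{Step 2: pass from population to estimated quantities.} Work on the event \eqref{eq:first-order-ci}. For any $\beta$ with $\beta^\top G\beta\le\rho^2$, write $w=G^{1/2}\beta$, so that $\|w\|_2\le\rho$. Cauchy--Schwarz gives
\begin{equation}
|(\widehat\ell-\ell)^\top\beta|=|(G^{-1/2}(\widehat\ell-\ell))^\top w|\le r_\ell\rho .
\end{equation}
The operator-norm bound gives
\begin{equation}
|\beta^\top(\widehat A-A)\beta|=|w^\top G^{-1/2}(\widehat A-A)G^{-1/2}w|\le r_K\rho^2 .
\end{equation}
Hence
\begin{equation}
\ell^\top\beta+\tfrac12\beta^\top A\beta\ge\widehat\ell^\top\beta+\tfrac12\beta^\top\widehat A\beta-\rho r_\ell-\tfrac12\rho^2r_K .
\end{equation}

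\textbf{Step 3: choose $\beta$.} The Gram ball is compact and the objective is continuous, so the supremum $\mathfrak G_\rho(\widehat\ell,\widehat A;G)$ is attained at some $\widehat\beta$. Substituting $\widehat\beta$ into Steps 1 and 2 gives
\begin{equation}
\cR_{N,M}(\widehat\beta)-\cR_{N,M}(0)\le-\mathfrak G_\rho(\widehat\ell,\widehat A;G)+\rho r_\ell+\tfrac12\rho^2r_K+\tfrac{L_3}{6}\rho^3 ,
\end{equation}
and the right-hand side is strictly negative by \eqref{eq:first-order-certificate}. This conclusion holds with probability at least $1-\delta$, namely on the confidence event.

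For the characterization \eqref{eq:trust-kkt}, substitute $w=G^{1/2}\beta$. The problem becomes the standard trust-region subproblem of minimizing
\begin{equation}
-\widehat\ell^\top G^{-1/2}w-\tfrac12 w^\top\bigl(G^{-1/2}\widehat A G^{-1/2}\bigr)w
\end{equation}
over the Euclidean ball $\|w\|_2\le\rho$. The cited global optimality theorem \cite[Thm.~4.1]{NocedalWright2006} gives a multiplier $\nu$ with $\nu I-G^{-1/2}\widehat AG^{-1/2}\succeq0$, stationarity, and complementary slackness. Transforming back by $G^{1/2}$ yields \eqref{eq:trust-kkt}.

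\textbf{Main obstacle.} The delicate point is the meaning of ``validation risk.'' The selected $\widehat\beta$ depends on the estimates, so the comparison is valid only because the confidence event is uniform over the whole Gram ball. Step 2 must therefore use simultaneous (operator-norm and $\ell_2$) control, not a fixed-direction bound. The estimates must also come from $\mathcal D_{\rel}$, in line with \cref{ass:postselection}.
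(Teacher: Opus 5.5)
Your proposal is correct and follows the same route as the paper. You whiten by $G^{1/2}$ and bound the linear and quadratic estimation errors uniformly over the Gram ball by $\rho r_\ell$ and $\tfrac12\rho^2 r_K$, evaluate at the empirical maximizer, and read off the KKT system from the whitened trust-region subproblem via the cited global optimality theorem. Your explicit collection of the penalties into $A=K-S_M-2G_N$ and the compactness argument for attainment of $\mathfrak G_\rho$ only spell out steps the paper leaves implicit.
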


\begin{proof}
For any $\beta$ in the Gram ball, Cauchy--Schwarz bounds the linear estimation error by $\rho r_\ell$, and the operator-norm bound controls the quadratic error by $\rho^2r_K/2$. Thus the empirical maximizer decreases the true penalized objective whenever \eqref{eq:first-order-certificate} holds. After whitening by $G^{1/2}$, \eqref{eq:trust-kkt} is the necessary and sufficient global optimality characterization for the quadratic trust-region problem. The positive-semidefinite condition $\nu G-\widehat A\succeq0$ is essential. If this matrix is singular, the linear equation is solved on its range and a kernel component is chosen to meet the radius condition; this includes the pure-eigenvector case $\widehat\ell=0$.
\end{proof}

\subsection{Hilbert--Schmidt and Fisher orthogonalization}

Breaking dictionaries are often redundant and physically nonuniform. We use a Gram matrix before interpreting the spectrum:
\begin{equation}
G^{\HS}_{ab}=\Tr(B_aB_b),
\qquad
G^F_{ab}=\operatorname{Re}\Tr(\rho_0L_aL_b),
\label{eq:gram}
\end{equation}
where $L_a$ is the symmetric logarithmic derivative determined by
\begin{equation}
-i[B_a,\rho_0]=\frac12(\rho_0L_a+L_a\rho_0).
\end{equation}
Thus $G^F$ is the quantum Fisher Gram of the unitary tangent directions. For a pure probing state it reduces to $G^F_{ab}=4\operatorname{Re}\Cov_{\rho_0}(B_a,B_b)$. At an invariant backbone, the representation-theoretic release theorem of \cite{CompanionDiscovery} makes retained and nontrivial breaking Fisher blocks orthogonal, under its stated tangent assumptions. Statistical selection within a block still uses its full curvature matrix. The matrix $G^{\HS}$ is an offline operator normalization, whereas $G^F$ measures local physical distinguishability in the current state. After small Gram eigenvalues are removed or merged, release directions are obtained from
\begin{equation}
Kv=\lambda Gv,
\qquad v^\top Gv=1.
\label{eq:generalized-eig}
\end{equation}
If $G^F$ has null directions, those directions are not locally distinguishable at $\rho_0$ and should not be released solely from a point estimate. They require a different probe state, a different validation point, or quotient treatment.

\section{Finite-Shot Certified Release}
\label{sec:finite-shot}

The curvature matrix of the previous section is known only through a finite-shot estimate, and the release decision must stay valid against that estimation error. Penalizing the spectrum by a confidence radius is what converts an empirical curvature into a certified subspace and a provable descent step.

\subsection{Penalized spectral release}

At a release-stationary point, assume additionally that the loss is four times continuously differentiable and its cubic derivative vanishes on the release block. This holds for a locally even release loss. The penalized objective then has the expansion
\begin{equation}
\cR_{N,M}(\beta)=\cR_{N,M}(0)
-\frac12\beta^\top K\beta
+\frac12\beta^\top S_M\beta
+\beta^\top G_N\beta
+\frac1{24}\mathcal Q(\beta)+o(\norm\beta^4),
\label{eq:local-risk-release}
\end{equation}
where $\mathcal Q(\beta)=D^4\cL(0)[\beta,\beta,\beta,\beta]$. A positive stabilizing quartic is an additional assumption used only for the amplitude surrogate below. The quadratic spectral criterion also holds for a general stationary loss with an $O(\|\beta\|^3)$ remainder. The coefficient convention makes the final quadratic matrix $K-S_M-2G_N$.

\begin{proposition}[Penalized spectral release criterion]
\label[proposition]{thm:penalized-spectrum}
In the local model \eqref{eq:local-risk-release}, the hard-symmetry point $\beta=0$ is strictly unstable at quadratic order in the release block if and only if
\begin{equation}
\lambda_{\max}(K-S_M-2G_N)>0.
\label{eq:release-eig-criterion}
\end{equation}
With a Gram normalization $G\succ0$, the equivalent criterion is
\begin{equation}
\lambda_{\max}\left(G^{-1/2}(K-S_M-2G_N)G^{-1/2}\right)>0.
\end{equation}
\end{proposition}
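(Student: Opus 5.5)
The plan is to read the quadratic part of the local expansion \eqref{eq:local-risk-release} as a single symmetric matrix and reduce the criterion to a sign condition on a quadratic form. Collecting terms gives
\begin{equation}
\cR_{N,M}(\beta)-\cR_{N,M}(0)=-\tfrac12\beta^\top A\beta+O(\norm{\beta}^3),\qquad A=K-S_M-2G_N .
\end{equation}
The factor $2$ on $G_N$ comes from the convention $\beta^\top G_N\beta=\tfrac12\beta^\top(2G_N)\beta$. The remainder is $\tfrac1{24}\mathcal Q(\beta)+o(\norm\beta^4)$ in the even case, and the $O(\norm\beta^3)$ remainder of \cref{thm:quadratic-release} for a general stationary loss. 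Here ``strictly unstable at quadratic order'' means that some direction $v$ makes the second-order variation strictly negative. The whole argument therefore concerns the quadratic form $v\mapsto -\tfrac12 v^\top A v$, which is the second derivative of $\cR_{N,M}$ along the ray $t\mapsto tv$, by the same Taylor argument as \cref{thm:quadratic-release}.

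For the first equivalence I will argue in both directions.
\begin{itemize}
\item Suppose $\lambda_{\max}(A)>0$, with unit eigenvector $v$. Then $\cR_{N,M}(tv)-\cR_{N,M}(0)=-\tfrac12 t^2\lambda_{\max}(A)+O(t^3)$, which is strictly negative for all small $t\neq0$. This is strict quadratic-order instability.
\item Suppose instead $\lambda_{\max}(A)\le0$. Then $A\preceq0$, so $-\tfrac12 v^\top Av\ge0$ for every $v$, and no direction has a strictly negative second-order variation.
\end{itemize}
The Rayleigh quotient characterization $\lambda_{\max}(A)=\max_{\norm v=1}v^\top Av$ makes both directions immediate. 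I will note explicitly that zero modes, where $v^\top Av=0$, are undecided at quadratic order and belong to the quartic term $\mathcal Q$. That case is excluded by the word ``strictly'', consistent with the remark after \cref{thm:quadratic-release}.

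For the Gram-normalized form I will use Sylvester's law of inertia. Since $G\succ0$, $G^{-1/2}$ is invertible and Hermitian, and $G^{-1/2}AG^{-1/2}$ is congruent to $A$, so the two matrices have the same numbers of positive, zero and negative eigenvalues. In particular $\lambda_{\max}(G^{-1/2}AG^{-1/2})>0$ if and only if $\lambda_{\max}(A)>0$. Equivalently, substituting $\beta=G^{-1/2}w$ gives $\beta^\top A\beta=w^\top G^{-1/2}AG^{-1/2}w$. The generalized eigenproblem \eqref{eq:generalized-eig} with $K$ replaced by $A$ then produces the certified direction $v=G^{-1/2}w$.

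No step is a real obstacle. The only care needed is the bookkeeping of the coefficient on $G_N$ and the precise meaning of ``unstable at quadratic order'', namely a strictly negative second directional derivative. This must be separated from genuine local non-minimality, which could still occur through the quartic term when $\lambda_{\max}(A)=0$.
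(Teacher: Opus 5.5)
Your proposal is correct and follows the paper's proof: you collect the quadratic terms of \eqref{eq:local-risk-release} into $-\tfrac12\beta^\top(K-S_M-2G_N)\beta$, read off strict quadratic-order instability from the sign of $\lambda_{\max}$ while leaving the kernel to higher order, and obtain the Gram form from the change of variables $w=G^{1/2}\beta$ (your appeal to Sylvester's law of inertia is the same congruence). One cosmetic slip: the second derivative of $t\mapsto\cR_{N,M}(tv)$ at $t=0$ is $-v^\top Av$, not $-\tfrac12 v^\top Av$, though this does not affect the sign argument.
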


\begin{proof}
The second-order term in \eqref{eq:local-risk-release} is $-\frac12\beta^\top(K-S_M-2G_N)\beta$. A positive eigenvalue gives a direction of negative second-order change; if the matrix is negative semidefinite, no second-order release decrease exists. This does not decide higher-order descent on its kernel. The Gram form is the change of variables $u=G^{1/2}\beta$.
\end{proof}

\subsection{Simultaneous spectral confidence and subspace recovery}

Only an estimator $\widehat K$ is available. Assume a high-probability operator-norm radius
\begin{equation}
\bbP\{\norm{\widehat K-K}_{\op}\le r_\delta\}\ge1-\delta.
\label{eq:op-confidence}
\end{equation}
With a certified matrix variance proxy and a deterministic sample bound, matrix Bernstein \cite{Tropp2012} gives the radius in \cref{thm:shadow-curvature}. Without those inputs, simultaneous scalar confidence intervals supply the entrywise envelope below. Scalar finite differences additionally require their truncation error. The radius accounts for the selection bias introduced by searching over many directions.

\begin{corollary}[Simultaneous release certificate]
\label[corollary]{cor:spectral-certificate}
Let $A=K-S_M-2G_N$ and $\widehat A=\widehat K-S_M-2G_N$. On the event \eqref{eq:op-confidence}:
\begin{enumerate}
\item If $\lambda_j(\widehat A)>r_\delta$, then $A$ has at least $j$ positive eigenvalues.
\item If $\lambda_1(\widehat A)\le -r_\delta$, then $A\preceq0$ and no strictly quadratic release gain exists in the search space.
\item If $\abs{\lambda_j(\widehat A)}\le r_\delta$, the sign of the corresponding ordered population eigenvalue is statistically ambiguous and should not be forced open.
\end{enumerate}
\end{corollary}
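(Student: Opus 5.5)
The plan is to reduce all three items to Weyl's eigenvalue perturbation inequality. Since $S_M$ and $G_N$ are deterministic, declared matrices, $\widehat A-A=\widehat K-K$. So on the event \eqref{eq:op-confidence} we have $\norm{\widehat A-A}_{\op}\le r_\delta$. Both $A$ and $\widehat A$ are real symmetric: $K$ is symmetric by the symmetrization in \cref{def:release-curvature-matrix}, and I take the estimator to be symmetrized as well, replacing $\widehat K$ by $(\widehat K+\widehat K^\top)/2$ if necessary. This replacement does not increase the operator-norm error. Weyl's inequality then gives, for every index $j$ in the decreasing ordering fixed in \cref{sec:setup},
\begin{equation}
\abs{\lambda_j(\widehat A)-\lambda_j(A)}\le\norm{\widehat A-A}_{\op}\le r_\delta .
\end{equation}

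For item 1, suppose $\lambda_j(\widehat A)>r_\delta$. Then $\lambda_j(A)\ge\lambda_j(\widehat A)-r_\delta>0$. Because the eigenvalues are ordered decreasingly, $\lambda_1(A)\ge\cdots\ge\lambda_j(A)>0$, so $A$ has at least $j$ positive eigenvalues. By \cref{thm:penalized-spectrum}, the hard point is then unstable in a release subspace of dimension at least $j$. For item 2, suppose $\lambda_1(\widehat A)\le-r_\delta$. Then $\lambda_1(A)\le\lambda_1(\widehat A)+r_\delta\le0$, so $A\preceq0$. By \cref{thm:penalized-spectrum}, or equivalently the quadratic form in \cref{thm:quadratic-release} together with the penalties, no direction gives a strictly negative second-order change. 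I would note that the Gram-normalized version follows in the same way after replacing $r_\delta$ by the whitened radius. The reason is that the congruence $G^{-1/2}(\cdot)G^{-1/2}$ preserves inertia by Sylvester's law, so the sign conclusions are unchanged.

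For item 3, suppose $\abs{\lambda_j(\widehat A)}\le r_\delta$. Weyl only gives $\lambda_j(A)\in[\lambda_j(\widehat A)-r_\delta,\lambda_j(\widehat A)+r_\delta]$, and this interval contains $0$. To show genuine ambiguity, I would exhibit two matrices consistent with the observation. Take $A_\pm=\widehat A\mp(\lambda_j(\widehat A)\mp\epsilon)\,v_jv_j^\top$, where $v_j$ is the $j$-th eigenvector of $\widehat A$ and $\epsilon$ is small. Both lie within the $r_\delta$ ball. Their $j$-th eigenvalues have opposite signs, provided the shift keeps the ordering. If it does not, the $j$-th ordered eigenvalue is still pushed across zero by the interlacing argument. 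This shows that no procedure that uses only the event \eqref{eq:op-confidence} can decide the sign.

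The main obstacle is this last step, specifically making the ambiguity statement rigorous when the shift reorders eigenvalues. I would handle it by choosing the perturbation along the whole eigenspace cluster near zero, so that the ordered $j$-th eigenvalue moves monotonically by Weyl's monotonicity. The other two items are direct consequences of Weyl's inequality.
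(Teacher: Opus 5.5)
Your proof is correct and follows the paper's argument: all three items come from Weyl's inequality $\abs{\lambda_j(\widehat A)-\lambda_j(A)}\le\norm{\widehat K-K}_{\op}\le r_\delta$, and for item 3 the paper only observes, as you do, that this interval contains zero. The rank-one witness construction you flag as the main obstacle is not needed, and as written your $A_-$ has the wrong sign (its eigenvalue along $v_j$ is $2\lambda_j(\widehat A)+\epsilon$); if you want explicit witnesses, $\widehat A\pm r_\delta I$ shifts every ordered eigenvalue by exactly $\pm r_\delta$ and avoids any reordering.
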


\begin{proof}
Weyl's inequality gives $\abs{\lambda_j(\widehat A)-\lambda_j(A)}\le\norm{\widehat K-K}_{\op}\le r_\delta$.
\end{proof}

\begin{theorem}[Certified descent along an empirical release direction]
\label{thm:certified-release-descent}
Work in the Gram-whitened release coordinates. Assume that $\nabla_\beta\cL(0)=0$ and that, for every unit vector $v$ in the release dictionary, the population validation loss satisfies
\begin{equation}
\cL(\eta v)\le \cL(0)-\frac{\eta^2}{2}v^\top Kv+\frac{L_3}{6}|\eta|^3,
\qquad 0\le \eta\le \eta_0.
\label{eq:descent-smoothness}
\end{equation}
Let $\widehat v_1$ be a unit leading eigenvector of $\widehat A=\widehat K-S_M-2G_N$ with eigenvalue $\widehat\lambda_1$. On the event \eqref{eq:op-confidence}, define
\begin{equation}
\Delta_{\mathrm{desc}}
=\widehat\lambda_1-r_\delta+
\widehat v_1^\top(S_M+2G_N)\widehat v_1.
\label{eq:certified-descent-gap}
\end{equation}
If $L_3>0$, $\Delta_{\mathrm{desc}}>0$ and $2\Delta_{\mathrm{desc}}/L_3\le\eta_0$, then the step
\begin{equation}
\beta_{\mathrm{desc}}=\eta_{\mathrm{desc}}\widehat v_1,
\qquad
\eta_{\mathrm{desc}}=\frac{2\Delta_{\mathrm{desc}}}{L_3},
\end{equation}
guarantees the population loss decrease
\begin{equation}
\cL(0)-\cL(\beta_{\mathrm{desc}})
\ge
\frac{2}{3L_3^2}\Delta_{\mathrm{desc}}^3.
\label{eq:certified-descent-drop}
\end{equation}
In particular, $\widehat\lambda_1>r_\delta$ ensures $\Delta_{\mathrm{desc}}>0$ when $S_M\succeq0$ and $G_N\succeq0$.
\end{theorem}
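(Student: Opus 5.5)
The plan is to lower-bound the true curvature $\widehat v_1^\top K\widehat v_1$ along the empirical direction, substitute it into the cubic smoothness bound \eqref{eq:descent-smoothness}, and then optimize the step length. Throughout we work in Gram-whitened coordinates, so $\widehat v_1$ is a unit vector and \eqref{eq:descent-smoothness} applies to it.

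\emph{Step 1: curvature lower bound.} On the event \eqref{eq:op-confidence} we have $\|\widehat A-A\|_{\op}=\|\widehat K-K\|_{\op}\le r_\delta$, because the penalty matrices cancel. Hence
\begin{equation}
\widehat v_1^\top A\widehat v_1\ge \widehat v_1^\top\widehat A\widehat v_1-r_\delta=\widehat\lambda_1-r_\delta .
\end{equation}
Adding back $\widehat v_1^\top(S_M+2G_N)\widehat v_1$ turns $A$ into $K$ and gives
\begin{equation}
\widehat v_1^\top K\widehat v_1\ge\Delta_{\mathrm{desc}}.
\end{equation}
Unlike a Weyl-type argument, this is a Rayleigh-quotient bound, so it needs no eigengap or eigenvector-perturbation control. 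This is the only place the confidence event enters, and it remains valid even though $\widehat v_1$ is selected from the same data.

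\emph{Step 2: cubic model.} For $0\le\eta\le\eta_0$, \eqref{eq:descent-smoothness} and Step 1 give
\begin{equation}
\cL(0)-\cL(\eta\widehat v_1)\ge \tfrac{\eta^2}{2}\Delta_{\mathrm{desc}}-\tfrac{L_3}{6}\eta^3 .
\end{equation}
At $\eta_{\mathrm{desc}}=2\Delta_{\mathrm{desc}}/L_3$, which is admissible by hypothesis, this lower bound evaluates to
\begin{equation}
\frac{2\Delta_{\mathrm{desc}}^3}{L_3^2}-\frac{L_3}{6}\cdot\frac{8\Delta_{\mathrm{desc}}^3}{L_3^3}=\frac{2}{3L_3^2}\Delta_{\mathrm{desc}}^3,
\end{equation}
which is \eqref{eq:certified-descent-drop}. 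The stationarity assumption $\nabla_\beta\cL(0)=0$ is already built into \eqref{eq:descent-smoothness}, so no linear term appears.

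\emph{Step 3: positivity.} If $\widehat\lambda_1>r_\delta$, then $\Delta_{\mathrm{desc}}>0$, since the quadratic form term is nonnegative when $S_M,G_N\succeq0$.

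The only delicate point is Step 1. It requires that the operator-norm event controls a data-selected direction, which holds because the bound $|v^\top(\widehat A-A)v|\le r_\delta$ is uniform over all unit $v$. It also requires that the sign of $\widehat v_1$ is irrelevant, which holds because the smoothness bound is stated for every unit vector.
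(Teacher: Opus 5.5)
Your proof is correct and follows the paper's argument. The paper also first bounds $\widehat v_1^\top K\widehat v_1\ge\Delta_{\mathrm{desc}}$ from the eigen-equation and the uniform operator-norm event, then substitutes into the cubic smoothness bound and evaluates at $\eta=2\Delta_{\mathrm{desc}}/L_3$; the only difference is that the paper adds back $\widehat v_1^\top(S_M+2G_N)\widehat v_1$ before applying the confidence bound to $\widehat K-K$, whereas you apply it to $\widehat A-A=\widehat K-K$ first.
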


\begin{proof}
Since $\widehat v_1$ is an eigenvector of $\widehat A$,
\begin{equation}
\widehat v_1^\top\widehat K\widehat v_1
=\widehat\lambda_1+
\widehat v_1^\top(S_M+2G_N)\widehat v_1.
\end{equation}
The confidence event gives
\begin{equation}
\widehat v_1^\top K\widehat v_1
\ge
\widehat v_1^\top\widehat K\widehat v_1-r_\delta
=\Delta_{\mathrm{desc}}.
\end{equation}
Substituting $v=\widehat v_1$ into \eqref{eq:descent-smoothness} yields
\begin{equation}
\cL(\eta\widehat v_1)\le
\cL(0)-\frac{\eta^2}{2}\Delta_{\mathrm{desc}}+\frac{L_3}{6}\eta^3.
\end{equation}
The right-hand side is minimized at $\eta=2\Delta_{\mathrm{desc}}/L_3$. Evaluating the resulting lower bound on $\cL(0)-\cL(\eta\widehat v_1)$ gives \eqref{eq:certified-descent-drop}.
\end{proof}

\begin{remark}[Computable third-order constant for circuit release]
\label[remark]{rem:analytic-L3}
The smoothness constant in \cref{thm:certified-release-descent} is directly computable for a one-parameter quantum release. Let $B(v)=\sum_av_aB_a$ be Hermitian and define
\begin{equation}
\cL_v(\beta)=\Tr\!\left(Ce^{-i\beta B(v)}\rho_0e^{i\beta B(v)}\right),
\qquad
\rho_v(\beta)=e^{-i\beta B(v)}\rho_0e^{i\beta B(v)}.
\end{equation}
With $\operatorname{ad}_B(X)=[B,X]$, differentiating in the Heisenberg picture gives
\begin{equation}
\frac{\dd^3}{\dd\beta^3}\cL_v(\beta)
=-i\Tr\!\left(\rho_v(\beta)\operatorname{ad}_{B(v)}^3(C)\right).
\label{eq:L3-commutator}
\end{equation}
Since $\norm{\rho_v(\beta)}_1=1$ and $\norm{[X,Y]}_{\op}\le2\norm{X}_{\op}\norm{Y}_{\op}$,
\begin{equation}
\sup_{\beta}\left|\frac{\dd^3}{\dd\beta^3}\cL_v(\beta)\right|
\le
8\norm{B(v)}_{\op}^3\norm{C}_{\op}.
\label{eq:L3-operator-bound}
\end{equation}
Thus the descent certificate may use the uniform analytic bound
\begin{equation}
L_3=8\norm{C}_{\op}\sup_{\norm{v}_2=1}\norm{B(v)}_{\op}^3.
\end{equation}
Here $v$ denotes a unit direction in the same certified coordinate metric used in \cref{thm:certified-release-descent}; if a Gram or Fisher whitening is applied before diagonalizing the release matrix, $B(v)$ is the corresponding physical generator after that whitening map. A less conservative implementation may use the selected-direction constant $L_3(\widehat v_1)=8\norm{B(\widehat v_1)}_{\op}^3\norm{C}_{\op}$ in the step-size test, while the displayed supremum gives a precomputed uniform certificate over the whole dictionary. For fixed-weight Pauli release generators with a dictionary normalization giving $\sup_{\norm{v}_2=1}\norm{B(v)}_{\op}=O(1)$, this bound is independent of the global Hilbert-space dimension.
\end{remark}

\begin{theorem}[Release subspace stability]
\label{thm:subspace-stability}
Let $P_+$ be the spectral projector of $A=K-S_M-2G_N$ onto its positive eigenspace of dimension $r$. Assume $1\le r<p$ and
\begin{equation}
\lambda_r(A)>0,\qquad \lambda_{r+1}(A)<0,
\qquad
\gamma_+=\min\{\lambda_r(A),-\lambda_{r+1}(A)\}>0.
\end{equation}
Let $\widehat P_+$ be the projector selected from $\widehat A$ by thresholding positive eigenvalues after subtracting a radius $\epsilon$ with $\norm{\widehat A-A}_{\op}\le\epsilon<\gamma_+/2$. Then the selected dimension is correct and
\begin{equation}
\norm{\widehat P_+-P_+}_{\op}\le \frac{2\epsilon}{\gamma_+}.
\label{eq:dk-release}
\end{equation}
\end{theorem}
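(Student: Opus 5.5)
The proof splits into two parts: correctness of the selected dimension, which follows from Weyl's inequality, and closeness of the projectors, which follows from a Davis--Kahan $\sin\Theta$ argument. The thresholding rule is read as ``keep eigenvalues $\lambda_j(\widehat A)>\epsilon$'', matching \cref{cor:spectral-certificate}.

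\emph{Dimension.} By Weyl's inequality, $\abs{\lambda_j(\widehat A)-\lambda_j(A)}\le\epsilon$ for every $j$. For $j\le r$ this gives $\lambda_j(\widehat A)\ge\lambda_r(A)-\epsilon\ge\gamma_+-\epsilon>\epsilon$, because $\epsilon<\gamma_+/2$. For $j\ge r+1$ it gives $\lambda_j(\widehat A)\le\lambda_{r+1}(A)+\epsilon\le-\gamma_++\epsilon<0\le\epsilon$. So exactly $r$ empirical eigenvalues pass the threshold. The selected projector $\widehat P_+$ is therefore the spectral projector of $\widehat A$ onto its top $r$ eigenvalues. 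These all lie in $[\gamma_+-\epsilon,\infty)$, while the remaining eigenvalues lie in $(-\infty,-\gamma_++\epsilon]$.

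\emph{Projector bound.} I would apply the Davis--Kahan $\sin\Theta$ theorem in the version where only one spectral set needs separation. Let $E=\widehat A-A$ and $P_-=I-P_+$. The standard form bounds $\norm{\widehat P_+P_-}_{\op}\le\norm{E}_{\op}/\mathrm{sep}$. Here $\mathrm{sep}$ is the distance between $\operatorname{spec}(\widehat A|_{\widehat P_+})\subset[\gamma_+-\epsilon,\infty)$ and $\operatorname{spec}(A|_{P_-})\subset(-\infty,-\gamma_+]$, so $\mathrm{sep}\ge2\gamma_+-\epsilon$.

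The derivation is short. Multiply $\widehat A\widehat P_+=\widehat P_+\widehat A\widehat P_+$ on the left by $P_-$ and subtract $P_-A\widehat P_+$. This gives the Sylvester relation
\[
(P_-\widehat P_+)\,\widehat A\widehat P_+ - P_-AP_-\,(P_-\widehat P_+)=P_-E\widehat P_+ .
\]
The two spectra appearing here are separated by the real point $0$, since one lies above $\gamma_+-\epsilon>0$ and the other below $-\gamma_+<0$. Bhatia's Sylvester estimate then gives $\norm{P_-\widehat P_+}_{\op}\le\epsilon/(2\gamma_+-\epsilon)$.

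Because $\operatorname{rank}\widehat P_+=\operatorname{rank}P_+=r$, the standard identity gives
\[
\norm{\widehat P_+-P_+}_{\op}=\norm{P_-\widehat P_+}_{\op}=\sin\Theta_{\max}.
\]
Since $\epsilon<\gamma_+/2$, we have $2\gamma_+-\epsilon>3\gamma_+/2>\gamma_+/2$, so $\epsilon/(2\gamma_+-\epsilon)\le2\epsilon/\gamma_+$. This yields \eqref{eq:dk-release}, and in fact something sharper.

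\emph{Main obstacle.} The delicate step is the separation argument. One must check that the rank-equality identity and the one-sided gap hypothesis apply. The symmetric version, which requires an interval gap for both operators, would also work, but only after the Weyl step shows that the empirical spectrum respects the gap. If the sharper Sylvester estimate is unavailable, I would fall back on the textbook form $\norm{\widehat P_+-P_+}_{\op}\le\norm{E}_{\op}/\delta$ with $\delta=\gamma_+-\epsilon\ge\gamma_+/2$. This directly yields the stated constant $2\epsilon/\gamma_+$, so the weaker route is enough.
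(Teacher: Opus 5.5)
Your proof is correct and follows the paper's route: Weyl's inequality gives the selected dimension, and the Davis--Kahan $\sin\Theta$ theorem gives the projector bound. The paper only cites Davis--Kahan, whereas you derive it through the Sylvester equation with separation $2\gamma_+-\epsilon$, which gives the sharper bound $\epsilon/(2\gamma_+-\epsilon)$ and hence the stated $2\epsilon/\gamma_+$.
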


\begin{proof}
The dimension statement follows from Weyl's inequality. The projector bound is the Davis--Kahan sin-theta theorem \cite{DavisKahan1970} applied to the spectral split between the positive and nonpositive parts of $A$.

\end{proof}

\subsection{Finite-shot executable estimation: shadow matrices and scalar probes}
\label{subsec:shot-estimation}

For each pair \((a,b)\), write
\begin{equation}
O_{ab}=\frac12\bigl([B_a,[B_b,C]]+[B_b,[B_a,C]]\bigr),
\qquad
K_{ab}=\Tr(O_{ab}\rho_0).
\end{equation}
Low-weight Pauli expansions permit grouped measurements. Classical shadows reuse each measurement record across all curvature entries \cite{HuangKuengPreskill2020}: a random Hermitian snapshot $\widehat\rho$ satisfies $\bbE_\rho\widehat\rho=\rho$, so $X_{ab}=\Tr(O_{ab}\widehat\rho)$ estimates $K_{ab}$. For the chosen ensemble, use the uncentered shadow norm
\begin{equation}
\|O\|_{\sh}^2=\sup_\rho\bbE_\rho\bigl[\Tr(O\widehat\rho)^2\bigr],
\end{equation}
which bounds the single-record second moment uniformly over input states.

\begin{theorem}[Simultaneous shadow estimation of the release curvature matrix]
\label{thm:shadow-curvature}
Assume \(\|O_{ab}\|_{\sh}\le \nu\) for all \(a,b\), and that a median-of-means shadow estimator satisfies the single-observable tail bound
\begin{equation}
\bbP\{|\widehat K_{ab}-K_{ab}|>\varepsilon\}
\le 2\exp\left(-cM\varepsilon^2/\nu^2\right).
\end{equation}
Then for a numerical constant \(C\),
\begin{equation}
M\ge C\nu^2\varepsilon^{-2}\log\frac{2p^2}{\delta}
\quad\Longrightarrow\quad
\max_{a,b}|\widehat K_{ab}-K_{ab}|\le\varepsilon
\end{equation}
with probability at least \(1-\delta\), and hence \(\|\widehat K-K\|_{\op}\le p\varepsilon\). More sharply, if \(\widehat K=M^{-1}\sum_{t=1}^MX_t\) with independent symmetric matrix samples satisfying \(\bbE X_t=K\), \(\|X_t-K\|_{\op}\le L\) and variance proxy \(\|\sum_t\bbE(X_t-K)^2\|_{\op}\le Mv\), then with probability at least \(1-\delta\),
\begin{equation}
\|\widehat K-K\|_{\op}
\le
C\left(\sqrt{\frac{v\log(2p/\delta)}{M}}+\frac{L\log(2p/\delta)}{M}\right).
\label{eq:shadow-entry-op}
\end{equation}
\end{theorem}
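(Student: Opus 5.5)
The proof splits into two independent parts, matching the two claims of \cref{thm:shadow-curvature}.

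\emph{Entrywise part.} Apply the assumed single-observable tail bound to each of the $p^2$ pairs $(a,b)$; by symmetry of $O_{ab}$ in $a,b$, only $p(p+1)/2\le p^2$ distinct entries occur. A union bound gives failure probability at most $2p^2\exp(-cM\varepsilon^2/\nu^2)$. This is at most $\delta$ once $M\ge c^{-1}\nu^2\varepsilon^{-2}\log(2p^2/\delta)$, so one may take $C=1/c$.

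On that event, $E=\widehat K-K$ is symmetric with $\max_{a,b}|E_{ab}|\le\varepsilon$. Then
\begin{equation}
\|E\|_{\op}\le\|E\|_{\HS}\le\sqrt{p^2\varepsilon^2}=p\varepsilon.
\end{equation}
The Schur test $\|E\|_{\op}\le\max_a\sum_b|E_{ab}|\le p\varepsilon$ gives the same bound. One point needs checking: the median-of-means estimator must be symmetric in $(a,b)$, or else it should be symmetrized. Symmetrizing is harmless, since averaging $E_{ab}$ and $E_{ba}$ preserves the entrywise bound.

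\emph{Operator-norm part.} This is matrix Bernstein \cite{Tropp2012} applied to the centered, independent, symmetric summands $Y_t=(X_t-K)/M$. These satisfy $\bbE Y_t=0$ and $\|Y_t\|_{\op}\le L/M$, with variance
\begin{equation}
\Bigl\|\sum_t\bbE Y_t^2\Bigr\|_{\op}\le Mv/M^2=v/M.
\end{equation}
Tropp's Hermitian Bernstein bound then reads
\begin{equation}
\bbP\{\|\widehat K-K\|_{\op}\ge s\}\le 2p\exp\!\left(\frac{-s^2/2}{v/M+Ls/(3M)}\right).
\end{equation}
The factor $2p$ comes from applying the bound to both $\lambda_{\max}$ and $\lambda_{\max}$ of the negation.

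To convert the tail into a radius, set the right-hand side equal to $\delta$ and solve the quadratic inequality in $s$. This gives
\begin{equation}
s\le\sqrt{\frac{2v\log(2p/\delta)}{M}}+\frac{2L\log(2p/\delta)}{3M},
\end{equation}
using $\sqrt{a+b}\le\sqrt a+\sqrt b$. This yields \eqref{eq:shadow-entry-op} with $C=\sqrt2$.

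The main thing to get right is bookkeeping rather than depth. The matrix samples $X_t$ must be independent, with each record reused across all entries (the classical-shadow records are). The variance proxy must be the stated matrix one, $\|\sum_t\bbE(X_t-K)^2\|_{\op}$, rather than an entrywise one; otherwise an extra dimension factor appears. The constant $C$ absorbs these numerical factors.
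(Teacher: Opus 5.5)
Your proof is correct and follows the same route as the paper. The entrywise part is a union bound over the $p^2$ entries, and the crude operator bound uses the comparison $\|E\|_{\op}\le p\max_{a,b}|E_{ab}|$ (your Frobenius or Schur-test argument). The sharper part is matrix Bernstein applied to the independent centered samples. Your explicit constants, $C=1/c$ for the first part and $C=\sqrt2$ after solving the Bernstein quadratic, make precise what the paper leaves as an unspecified numerical constant.
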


\begin{proof}
The entrywise statement follows by applying the single-observable shadow concentration bound to the \(p^2\) entries and taking a union bound. The crude operator bound follows from \(\|E\|_{\op}\le p\|E\|_{\max}\). The second part is the matrix Bernstein inequality applied to the independent centered matrix samples.

\end{proof}

\begin{proposition}[Computable simultaneous curvature certificate]
\label[proposition]{prop:empirical-curvature-certificate}
Let $X_1,\ldots,X_M$ be independent and identically distributed symmetric curvature snapshot matrices for a fixed probe family, with $\bbE X_t=K$ and known deterministic bounds $|(X_t)_{ab}|\le R_{ab}$. Write $q=p(p+1)/2$, $\widehat K=M^{-1}\sum_tX_t$, and $\widehat v_{ab}=(M-1)^{-1}\sum_t((X_t)_{ab}-\widehat K_{ab})^2$. For $M\ge2$, define the symmetric nonnegative envelope
\begin{equation}
E_{ab}=\sqrt{\frac{2\widehat v_{ab}\log(4q/\delta)}{M}}
+\frac{14R_{ab}\log(4q/\delta)}{3(M-1)}.
\label{eq:empirical-entry-envelope}
\end{equation}
Then $\|\widehat K-K\|_{\op}\le\|E\|_{\op}$ with probability at least $1-\delta$, including after selection of an empirical eigendirection. For local Pauli snapshots of $O_{ab}=\sum_Pc_{ab,P}P$, a valid state-independent bound is
\begin{equation}
R_{ab}=\max_{\mathbf b\in\{X,Y,Z\}^{n}}
\sum_{P\text{ compatible with }\mathbf b}3^{\operatorname{wt}(P)}|c_{ab,P}|.
\label{eq:deterministic-shadow-range}
\end{equation}
The simpler sum over all strings is also valid.
\end{proposition}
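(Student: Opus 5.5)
The argument has three parts: an entrywise empirical-Bernstein bound with a union bound, a deterministic passage from entrywise bounds to an operator-norm bound via a monotonicity property of nonnegative matrices, and a check that the local Pauli snapshot bound \eqref{eq:deterministic-shadow-range} is valid.

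\emph{Entrywise concentration.} For each of the $q=p(p+1)/2$ upper-triangular index pairs $(a,b)$, the scalars $(X_t)_{ab}$ are i.i.d. with mean $K_{ab}$ and lie in $[-R_{ab},R_{ab}]$, an interval of length $2R_{ab}$. I will apply the Maurer--Pontil empirical Bernstein inequality. Rescaled to range $2R_{ab}$, its two-sided form gives, with probability at least $1-\delta'$,
\[
|\widehat K_{ab}-K_{ab}|\le\sqrt{\frac{2\widehat v_{ab}\log(4/\delta')}{M}}+\frac{14R_{ab}\log(4/\delta')}{3(M-1)}.
\]
The one-sided form has constant $7/3$ for unit range, so $7\cdot2/3=14/3$ for range $2R_{ab}$. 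Each one-sided statement costs $2$ events (the variance estimate and the deviation), which produces the $4/\delta'$. Setting $\delta'=\delta/q$ and taking a union bound over the $q$ pairs gives $|\widehat K_{ab}-K_{ab}|\le E_{ab}$ simultaneously for all $(a,b)$. Symmetry of $X_t$, and hence of $\widehat K$ and $K$, extends this to all $p^2$ entries.

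\emph{Operator-norm domination.} Let $\Delta=\widehat K-K$, so on the event above $|\Delta_{ab}|\le E_{ab}$ entrywise. For unit vectors $x,y$,
\[
|x^\top\Delta y|\le\sum_{a,b}|x_a|E_{ab}|y_b|=|x|^\top E|y|\le\|E\|_{\op}.
\]
Hence $\|\Delta\|_{\op}\le\|E\|_{\op}$. This step is fully deterministic. Because the event is a property of the whole matrix and the probe family is fixed in advance, any eigendirection selected afterwards from $\widehat K$ inherits the bound. This is what justifies the clause ``including after selection.''

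\emph{Pauli snapshot range.} For a local Pauli shadow with measured bases $\mathbf b$ and outcomes, the snapshot has $\Tr(P\widehat\rho)=3^{\operatorname{wt}(P)}\prod_{i\in\operatorname{supp}P}s_i$ when $P$ is compatible with $\mathbf b$, and $0$ otherwise, where each $s_i\in\{\pm1\}$. By linearity,
\[
|(X_t)_{ab}|=\Bigl|\sum_Pc_{ab,P}\Tr(P\widehat\rho)\Bigr|\le\sum_{P\text{ compatible with }\mathbf b}3^{\operatorname{wt}(P)}|c_{ab,P}|.
\]
Maximizing over $\mathbf b$ gives \eqref{eq:deterministic-shadow-range}, which does not depend on the state. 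Dropping the compatibility restriction only enlarges the sum, so the simpler bound over all strings is also valid.

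The main obstacle is getting the constants in the empirical Bernstein step exactly right. This means rescaling Maurer--Pontil from $[0,1]$ to an interval of length $2R_{ab}$, keeping the $(M-1)$-normalized sample variance consistent with their statement, and splitting the failure probability so that two-sidedness yields $\log(4q/\delta)$. I would verify that first; the remaining steps are routine.
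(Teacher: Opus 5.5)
Your proposal is correct and follows the paper's proof essentially step for step. You apply the two-sided Maurer--Pontil bound (Thm.~4), rescaled to range $2R_{ab}$, with budget $\delta/q$ and a union bound over upper-triangular entries; you pass to the operator norm through the deterministic domination $|x^\top(\widehat K-K)y|\le|x|^\top E|y|\le\|E\|_{\op}$, where the paper uses the quadratic form with $x=y$, which suffices by symmetry; and you obtain the range from the snapshot values $0$ or $\pm3^{\operatorname{wt}(P)}$ on compatible strings. Your constant check is also right: $\tfrac73\cdot 2R_{ab}=\tfrac{14}{3}R_{ab}$, and the one-sided $\log(2/\cdot)$ becomes $\log(4/\cdot)$ under the two-sided union.
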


\begin{proof}
Apply the two-sided empirical Bernstein inequality of \cite[Thm.~4]{MaurerPontil2009} to each upper-triangular entry, whose interval has length $2R_{ab}$, with failure budget $\delta/q$. The two-sided form uses $\log(4q/\delta)$ and the unbiased sample variance. A union bound gives $|\widehat K_{ab}-K_{ab}|\le E_{ab}$ simultaneously. For any real unit vector $v$, $|v^\top(\widehat K-K)v|\le |v|^\top E|v|\le\|E\|_{\op}$, proving the operator bound. A Pauli snapshot contributes zero for an incompatible string and a signed value $3^{\operatorname{wt}(P)}$ otherwise, which proves \eqref{eq:deterministic-shadow-range}. No independence among entries is needed.
\end{proof}

The empirical variance controls the leading statistical term, while the deterministic range protects against outcomes absent from a finite sample. For a fixed empirical supervised risk, identically distributed snapshots can be obtained by sampling its examples uniformly with replacement. The resulting interval is conditional on those examples. A stratified design requires concentration for its actual sampling law.

For a population curvature matrix, shot and example-sampling errors are added. If the $n$ independent per-example entries satisfy $|(K_i)_{ab}|\le B_K$, Hoeffding concentration and a union bound give
\begin{equation}
r_{\mathrm{data}}=pB_K\sqrt{\frac{2\log(2p^2/\delta_{\mathrm{data}})}{n}}.
\end{equation}
This is an operator-norm sampling radius in the fixed probing coordinates. Loss derivatives in \eqref{eq:nonlinear-curvature} require their own simultaneous bounds when estimated.

\begin{proposition}[Pauli locality and grouped-measurement shot cost]
\label[proposition]{prop:pauli-resource}
Suppose
\begin{equation}
B_a=\sum_{r}b_{a,r}P_{a,r},
\qquad
C=\sum_s c_sQ_s,
\end{equation}
where $P_{a,r}$ and $Q_s$ are Pauli strings. Let $\|A\|_{P,1}$ denote the $\ell_1$ norm of the Pauli coefficients of $A$. Then
\begin{equation}
\|O_{ab}\|_{P,1}
\le 4\|B_a\|_{P,1}\|B_b\|_{P,1}\|C\|_{P,1}.
\label{eq:pauli-l1}
\end{equation}
If the Pauli strings in $B_a,B_b,C$ have weights at most $w_B,w_B,w_C$, respectively, every Pauli string appearing in $O_{ab}$ has weight at most $2w_B+w_C$ before cancellations. If $O_{ab}=\sum_{g=1}^{G_{ab}}\sum_{t\in g}c_tP_t$ is partitioned into jointly measurable commuting groups and group $g$ receives $m_g$ shots in independent batches, then the unbiased grouped estimator has variance at most
\begin{equation}
\Var(\widehat K_{ab})
\le \sum_{g=1}^{G_{ab}}\frac{(\sum_{t\in g}|c_t|)^2}{m_g}.
\label{eq:group-var}
\end{equation}
The continuous allocation optimum over $M_{ab}$ shots is
\begin{equation}
\Var(\widehat K_{ab})
\le \frac{\left(\sum_g\sum_{t\in g}|c_t|\right)^2}{M_{ab}}
=\frac{\|O_{ab}\|_{P,1}^2}{M_{ab}}.
\label{eq:group-var-opt}
\end{equation}
For integer counts and $M_{ab}>G_{ab}$, allocate $m_g=\lceil(M_{ab}-G_{ab})L_g/\sum_h L_h\rceil$ to each nonzero group, where $L_g=\sum_{t\in g}|c_t|$. This uses at most $M_{ab}$ shots and bounds the variance by $\|O_{ab}\|_{P,1}^2/(M_{ab}-G_{ab})$, at most twice the continuous bound when $M_{ab}\ge2G_{ab}$. Consequently, low-weight local release dictionaries have polynomial measurement overhead whenever the grouped Pauli $\ell_1$ norms and the number of candidate pairs are polynomial in system size.
\end{proposition}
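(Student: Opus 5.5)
The plan is to prove the four claims in order, each by a direct calculation.

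\emph{Pauli $\ell_1$ bound.} Expand $O_{ab}$ bilinearly in the Pauli expansions of $B_a$, $B_b$ and $C$. For Pauli strings $P,P',Q$, the commutator $[P',Q]$ is either $0$ or $2P'Q$. In the second case $P'Q$ is, up to a phase, a single Pauli string. Applying this twice gives $\|[P,[P',Q]]\|_{P,1}\le4$. Summing with the triangle inequality bounds each of the two terms in the symmetrization by $4\|B_a\|_{P,1}\|B_b\|_{P,1}\|C\|_{P,1}$. The factor $\tfrac12$ then yields \eqref{eq:pauli-l1}. The weight claim uses the fact that the support of a Pauli product lies inside the union of the supports of its factors. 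Hence every string in $[P,[P',Q]]$ has weight at most $2w_B+w_C$, where the bound holds before cancellations.

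\emph{Grouped variance.} Fix a commuting group $g$ and measure in a joint eigenbasis of its strings. Each shot returns $\sum_{t\in g}c_t\epsilon_t$ with $\epsilon_t\in\{\pm1\}$. This value is unbiased for $\sum_{t\in g}c_t\Tr(P_t\rho_0)$ and bounded in absolute value by $L_g=\sum_{t\in g}|c_t|$, so its variance is at most $L_g^2$. Average the $m_g$ independent shots within each group and sum over independent batches. The variances add, giving \eqref{eq:group-var}.

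\emph{Allocation optimum.} Minimize $\sum_gL_g^2/m_g$ subject to $\sum_gm_g=M_{ab}$. By Cauchy--Schwarz, $(\sum_gL_g)^2\le(\sum_gL_g^2/m_g)(\sum_gm_g)$, with equality at $m_g\propto L_g$. The optimal value is therefore $(\sum_gL_g)^2/M_{ab}$. To identify this with $\|O_{ab}\|_{P,1}^2$, I would treat the grouping as a partition of the distinct strings of $O_{ab}$ after collecting coefficients, so that $\sum_gL_g=\|O_{ab}\|_{P,1}$. This identification is a bookkeeping point that must be stated explicitly.

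\emph{Integer allocation.} Put $m_g=\lceil(M_{ab}-G_{ab})L_g/\Lambda\rceil$ with $\Lambda=\sum_hL_h$. The total is at most $\sum_g\bigl((M_{ab}-G_{ab})L_g/\Lambda+1\bigr)\le M_{ab}$, since at most $G_{ab}$ groups are nonzero. Because $m_g\ge(M_{ab}-G_{ab})L_g/\Lambda$, each term satisfies $L_g^2/m_g\le L_g\Lambda/(M_{ab}-G_{ab})$. Summing gives $\Lambda^2/(M_{ab}-G_{ab})$. When $M_{ab}\ge2G_{ab}$ we have $M_{ab}-G_{ab}\ge M_{ab}/2$, which gives the factor two.

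\emph{Polynomial overhead.} This follows by combining the per-entry variance with the number of pairs: $p(p+1)/2$ entries, each needing $O(\|O_{ab}\|_{P,1}^2/\varepsilon^2)$ shots up to logarithmic union-bound factors, is polynomial under the stated hypotheses.

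No step is a real obstacle. The point needing most care is the bookkeeping noted above: the equality in \eqref{eq:group-var-opt} holds only when the groups partition the collected distinct strings, so it must be stated that way. Otherwise only the inequality $\|O_{ab}\|_{P,1}\le\sum_gL_g$ holds, and the displayed bound should then be written with $\sum_gL_g$.
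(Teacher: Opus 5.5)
Your proposal is correct and follows the paper's proof essentially step for step. The steps match: the commutator of two Pauli strings is zero or has coefficient magnitude $2$, applied twice with the triangle inequality; the support of a Pauli product lies in the union of supports; each grouped shot is bounded by $L_g$; and the allocation optimum is $m_g\propto L_g$, where your Cauchy--Schwarz argument is a cleaner version of the paper's minimization. You go slightly beyond the written proof in two places. You verify the ceiling allocation (total shots at most $M_{ab}$, variance at most $\|O_{ab}\|_{P,1}^2/(M_{ab}-G_{ab})$). You also state explicitly that $\sum_g L_g=\|O_{ab}\|_{P,1}$ requires the groups to partition the distinct, collected Pauli strings of $O_{ab}$, which the paper uses implicitly.
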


\begin{proof}
For Pauli strings $P,Q$, the commutator is either zero or has Pauli-coefficient magnitude $2$. Applying this twice gives
\begin{equation}
\|[B_a,[B_b,C]]\|_{P,1}
\le4\|B_a\|_{P,1}\|B_b\|_{P,1}\|C\|_{P,1},
\end{equation}
and symmetrization cannot increase the bound. Pauli multiplication can enlarge support only by the union of the supports of the multiplied strings, giving the weight bound. For a commuting group, a single shot returns simultaneous $\pm1$ outcomes for every $P_t$ in the group, so the absolute value of the grouped random variable is at most $\sum_{t\in g}|c_t|$. This proves \eqref{eq:group-var}. Minimizing $\sum_g L_g^2/m_g$ subject to $\sum_gm_g=M_{ab}$ gives $m_g\propto L_g$ and \eqref{eq:group-var-opt}.
\end{proof}

\begin{corollary}[Shadow norms of local release commutators]
\label[corollary]{cor:shadow-commutator}
Use the single-qubit random Pauli shadow ensemble and write
\begin{equation}
O_{ab}=\sum_P c_{ab,P}P.
\end{equation}
If every nonzero Pauli string in $O_{ab}$ has weight at most $k_{ab}$, then its Pauli-shadow norm obeys
\begin{equation}
\norm{O_{ab}}_{\sh}^2
\le\left(\sum_P3^{\operatorname{wt}(P)/2}|c_{ab,P}|\right)^2
\le3^{k_{ab}}\norm{O_{ab}}_{P,1}^2.
\label{eq:shadow-local-weight}
\end{equation}
In the setting of \cref{prop:pauli-resource}, $k_{ab}\le2w_B+w_C$. Hence the simultaneous entrywise shadow estimate in \cref{thm:shadow-curvature} is guaranteed whenever
\begin{equation}
M\ge
C\,3^{2w_B+w_C}\epsilon^{-2}
\max_{a,b}\norm{O_{ab}}_{P,1}^2
\log\frac{2p^2}{\delta}.
\label{eq:shadow-local-shot}
\end{equation} It makes explicit that high-weight global release dictionaries can be measurement-expensive before optimization begins, while local or hardware-native dictionaries keep the shadow factor polynomial when the Pauli coefficient norms are controlled.
\end{corollary}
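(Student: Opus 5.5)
The plan is to bound the single-record second moment $\bbE_\rho[\Tr(O_{ab}\widehat\rho)^2]$ directly, uniformly in $\rho$, and then feed the resulting norm into \cref{thm:shadow-curvature}. For the random Pauli ensemble, the snapshot is $\widehat\rho=\bigotimes_{j}(3\,U_j^\dagger|b_j\rangle\langle b_j|U_j-I)$. A Pauli string $P$ therefore satisfies $\Tr(P\widehat\rho)=3^{\operatorname{wt}(P)}\prod_{j\in\operatorname{supp}P}\pm1$ if the random basis string is compatible with $P$, and $0$ otherwise. Compatibility occurs with probability $3^{-\operatorname{wt}(P)}$.

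\textbf{First inequality.} Write the random variable $\Tr(O_{ab}\widehat\rho)=\sum_P c_{ab,P}\Tr(P\widehat\rho)$. Apply Minkowski's inequality in $L^2$ over the measurement randomness:
\begin{equation*}
\bigl(\bbE\Tr(O_{ab}\widehat\rho)^2\bigr)^{1/2}\le\sum_P|c_{ab,P}|\bigl(\bbE\Tr(P\widehat\rho)^2\bigr)^{1/2}.
\end{equation*}
For each single string, $\Tr(P\widehat\rho)^2$ equals $9^{\operatorname{wt}(P)}$ on the compatibility event and $0$ elsewhere. Compatibility depends only on the random bases, so this holds regardless of $\rho$, and
\begin{equation*}
\bbE\Tr(P\widehat\rho)^2=9^{\operatorname{wt}(P)}3^{-\operatorname{wt}(P)}=3^{\operatorname{wt}(P)}.
\end{equation*}
Substituting and taking the supremum over $\rho$ gives the first bound in \eqref{eq:shadow-local-weight}. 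This avoids cross-term bookkeeping, which is why I prefer Minkowski to expanding the square; cross terms could be handled similarly, but they are not needed. The second inequality is immediate from $\operatorname{wt}(P)\le k_{ab}$.

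\textbf{Weight bound and shot count.} The bound $k_{ab}\le 2w_B+w_C$ is the support-union statement already proved in \cref{prop:pauli-resource}. Taking $\nu^2=3^{2w_B+w_C}\max_{a,b}\|O_{ab}\|_{P,1}^2$ as a uniform shadow-norm bound and inserting it into the entrywise sample-size condition of \cref{thm:shadow-curvature} yields \eqref{eq:shadow-local-shot}.

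The one point needing care is that \cref{thm:shadow-curvature} assumes a median-of-means tail bound stated in terms of the shadow norm. I will invoke the standard median-of-means guarantee of \cite{HuangKuengPreskill2020}, which requires only the variance bound $\Var\le\|O\|_{\sh}^2$. Our uncentered norm dominates the variance, so the hypothesis holds with $\nu$ as above, and the constants are absorbed into $C$.

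The main obstacle is modest. It is making sure the per-string second moment is genuinely state-independent, which it is because the snapshot value is deterministic in magnitude given compatibility. The remaining steps are bookkeeping.
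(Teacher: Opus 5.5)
Your proof is correct and is essentially the paper's argument: the state-independent single-string second moment $3^{\operatorname{wt}(P)}$, Minkowski's inequality in $L^2$ followed by the supremum over states, and the weight bound $\operatorname{wt}(P)\le k_{ab}\le 2w_B+w_C$ from \cref{prop:pauli-resource}, then substitution of $\nu^2=3^{2w_B+w_C}\max_{a,b}\norm{O_{ab}}_{P,1}^2$ into \cref{thm:shadow-curvature}. One refinement: the cross terms you set aside do not vanish in general. The paper's example $O=Z_1+Z_2$ on $|00\rangle$ has snapshot second moment $8>3+3$, and this is why Minkowski is needed rather than a sum of squared per-string moments.
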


\begin{proof}
The single-Pauli snapshot has second moment $3^{\operatorname{wt}(P)}$. Minkowski's inequality in $L^2$, followed by the supremum over states, gives the weighted coefficient-sum bound. Correlations between compatible strings prevent replacing it by the sum of squared coefficients: $O=Z_1+Z_2$ on $|00\rangle$ has snapshot second moment $8$, exceeding $3+3=6$.
\end{proof}

When \([B_a,[B_b,C]]\) has high Pauli weight, the shadow norm \(\nu\) may be exponentially large. In that regime one should use scalar central differences: measure \(\cL(0)\) and \(\cL(\pm\delta_\beta)\) for a chosen direction \(B(v)\), and estimate \(-\cL''(0)\). \Cref{app:fd} gives the bias--variance decomposition, the resulting optimal step schedule \(\delta_\beta\asymp M^{-1/8}\), and the precise sense in which the sampling cost then decouples from the Pauli weight of the nested commutator: the measured observable remains the original loss observable, so the statistical scale is set by \(\sigma_C^2/(\delta_\beta^4M)\) rather than by the shadow norm of \([B,[B,C]]\). \Cref{subsec:executable-certification} compares the two modes quantitatively on one model.

\subsection{How far to release}

In the positive spectral coordinates $Aq_j=\lambda_jq_j$, $\lambda_j>0$, suppose the quartic term is separable in those coordinates, $\mathcal Q(\beta)=\sum_j q_j^{(4)}z_j^4$ with $z_j=q_j^\top\beta$. With $q_j^{(4)}>0$, the quartic surrogate has minimizers
\begin{equation}
 z_{j,\star}^2=\frac{6\lambda_j}{q_j^{(4)}}.
\label{eq:release-amplitude}
\end{equation}
For empirical eigenvalues $\widehat\lambda_j$ of $\widehat A$ in the same coordinate metric, a candidate shrinkage rule is
\begin{equation}
 s_j=\min\left\{1,\;c_s\sqrt{\frac{(\widehat\lambda_j-r_\delta)_+}{(\widehat q_j^{(4)})_++\varepsilon_q}}\right\},
\qquad
\alpha_j=s_j^{-1}-1.
\label{eq:shrinkage-release}
\end{equation}
Here $c_s>0$ fixes the amplitude convention, $\varepsilon_q>0$ regularizes the estimated quartic coefficient, and $s_j=0$ means $\alpha_j=\infty$. This surrogate suggests a continuous onset of amplitude. Its finite step still requires the smoothness certificate or independent validation; an approximate quartic coefficient alone does not certify it.

\begin{remark}[Release as a continuous transition in hypothesis space]
\label[remark]{rem:landau}
The positive spectral coordinates give \eqref{eq:local-risk-release} the form of a mean-field free energy. Writing $m_{\eff}^2$ for the eigenvalues of $S_M+2G_N-K$, the quadratic coefficient of $z_j$ is $\tfrac12 m_{\eff,j}^2$ and the quartic coefficient is $q_j^{(4)}/24$, so the minimizer is $z_{j,\star}=0$ while $m_{\eff,j}^2>0$ and turns on continuously as $\sqrt{-m_{\eff,j}^2}$ once the sign changes, exactly as in \eqref{eq:release-amplitude}. In the local model, release is therefore a continuous transition in hypothesis space whose control parameter is the shot budget: $S_M\propto1/M$ shifts the effective mass, so increasing $M$ at fixed curvature drives $m_{\eff,j}^2$ through zero and opens the direction. The analogy is a statement about the local risk surface, not about the physical state: the transition happens in the space of models the data support, and the ordered phase is a released hypothesis class rather than a symmetry-broken state.
\end{remark}

\section{Local Statistical Decision Boundary for Releasing Symmetry}
\label{sec:decision-boundary}

The available measurement protocol determines the cost of finding a release direction. A Gaussian sequence model isolates how this cost depends on branch dimension and dictionary size. Its quantum interpretation requires calibration of the observation noise and total copy budget; Gram orthogonalization alone specifies the coordinate metric.

\begin{assumption}[Canonical local release experiment]
\label[assumption]{ass:local-gaussian}
After Gram orthogonalization of a $k$-dimensional candidate breaking branch, the statistic used for release has the form
\begin{equation}
Z=\mu+\tau_{N,M}\xi,
\qquad
\xi\sim\mathcal N(0,I_k),
\qquad
\tau_{N,M}^2=\frac{\sigma_{\mathrm{data}}^2}{N}+\frac{\sigma_{\mathrm{shot}}^2}{M}.
\label{eq:local-gaussian}
\end{equation}
The hard-symmetry null is $H_0:\mu=0$ and the released alternative is $H_1(r):\norm{\mu}_2\geq r$.
\end{assumption}

The mean can represent a calibrated gradient, curvature statistic, or validation improvement when the stated Gaussian model is justified. Here $M$ indexes the variance of each observed coordinate; the total state-copy cost depends on whether those coordinates share measurements.

\begin{theorem}[Finite-sample release detection boundary]
\label{thm:detection-boundary}
Let $Z$ follow \cref{ass:local-gaussian}, with $k\ge1$, $\tau_{N,M}>0$, $0<\delta<1$, and $t=\log(1/\delta)$.
\begin{enumerate}
\item The test
\begin{equation}
\varphi(Z)=\mathbf 1\left\{\frac{\norm{Z}_2^2}{\tau_{N,M}^2}>k+2\sqrt{kt}+2t\right\}
\label{eq:norm-test}
\end{equation}
has type-I error at most $\delta$. Moreover, with $C_1=16$, if
\begin{equation}
r^2\geq C_1\tau_{N,M}^2\bigl(\sqrt{kt}+t\bigr),
\label{eq:upper-detection}
\end{equation}
then $\inf_{\norm\mu\ge r}\bbP_\mu\{\varphi(Z)=1\}\ge1-\delta$.
\item Conversely, for any test $\psi$ and any $r>0$,
\begin{equation}
\bbP_0\{\psi=1\}+\sup_{\norm\mu=r}\bbP_\mu\{\psi=0\}
\geq
1-\frac12\sqrt{\exp\left(\frac{r^4}{2k\tau_{N,M}^4}\right)-1}.
\label{eq:lower-detection}
\end{equation}
Consequently, if $r^2\le c\tau_{N,M}^2\sqrt{k}$ for a sufficiently small numerical constant $c$, no procedure can have both small false-release probability and high release power uniformly over unknown breaking directions.
\end{enumerate}
\end{theorem}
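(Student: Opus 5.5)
We rescale by $\tau=\tau_{N,M}$ and write $W=Z/\tau\sim\mathcal N(\mu/\tau,I_k)$. Under $H_0$, $\|W\|_2^2\sim\chi^2_k$. Under the alternative it is a noncentral $\chi^2_k$ with noncentrality $\theta=\|\mu\|^2/\tau^2$. The argument splits into an upper bound, handled by chi-square concentration, and a lower bound, handled by a second-moment argument with a uniform prior on the sphere.

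\emph{Type-I error.} We apply the Laurent--Massart inequality $\bbP\{\chi^2_k-k\ge2\sqrt{kt}+2t\}\le e^{-t}=\delta$, which gives false release probability at most $\delta$ directly.

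\emph{Power.} Write
\[
\|W\|^2=\theta+2\langle\mu/\tau,\xi\rangle+\|\xi\|^2.
\]
We bound the three fluctuation terms separately.
\begin{itemize}
\item For the lower tail of $\|\xi\|^2$, Laurent--Massart gives $\|\xi\|^2\ge k-2\sqrt{kt}$ with probability at least $1-\delta/2$ (after replacing $t$ by $t+\log2$, absorbed in constants).
\item The cross term is $\mathcal N(0,4\theta)$, so it exceeds $-2\sqrt{2\theta t'}$ with probability at least $1-\delta/2$.
\item It then suffices to have $\theta-2\sqrt{2\theta t'}\ge4\sqrt{kt}+2t$.
\end{itemize}
Using $2\sqrt{2\theta t'}\le\theta/2+4t'$, it is enough that $\theta/2\ge4\sqrt{kt}+6t'$. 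This is implied by $\theta\ge16(\sqrt{kt}+t)$, after checking the constant bookkeeping for the $\log 2$ shift; if it does not close exactly at $C_1=16$, one uses a slightly sharper split of $\delta$. Monotonicity of the noncentral chi-square in $\theta$ then gives the infimum over $\|\mu\|\ge r$.

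\emph{Lower bound.} We reduce the worst case to a Bayes mixture. Put $\pi$ uniform on the sphere $\{\|\mu\|=r\}$, and let $P_\pi=\int P_\mu\,d\pi$. For any $\psi$,
\[
\bbP_0\{\psi=1\}+\sup_\mu\bbP_\mu\{\psi=0\}\ge1-\TV(P_0,P_\pi)\ge1-\tfrac12\sqrt{\chi^2(P_\pi\|P_0)}.
\]
The Ingster second-moment identity for Gaussian shifts gives
\[
1+\chi^2=\bbE_{\mu,\mu'\sim\pi}\exp(\langle\mu,\mu'\rangle/\tau^2).
\]
We write $\langle\mu,\mu'\rangle=r^2U$ with $U$ the first coordinate of a uniform unit vector. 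The key step is the sub-Gaussian bound $\bbE e^{\lambda U}\le e^{\lambda^2/(2k)}$ for the spherical marginal. This holds because $U$ has the symmetric Beta-type law whose even moments are dominated by those of $\mathcal N(0,1/k)$: compare $\bbE U^{2m}=\frac{(2m-1)!!}{k(k+2)\cdots(k+2m-2)}\le(2m-1)!!/k^m$. With $\lambda=r^2/\tau^2$ this gives $1+\chi^2\le\exp(r^4/(2k\tau^4))$, which is exactly \eqref{eq:lower-detection}. The consequence follows because $r^2\le c\tau^2\sqrt k$ makes the right side at least $1-\frac12\sqrt{e^{c^2/2}-1}$, which is close to $1$.

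I expect the main obstacle to be only the constant tracking in the power bound, since the spherical moment comparison is standard. If needed, the alternative route for the lower bound is a Rademacher prior $\mu=(r/\sqrt k)\epsilon$. That gives $\prod\cosh(r^2/(k\tau^2))\le\exp(r^4/(2k\tau^4))$ immediately, but it does not quite match "$\sup_{\|\mu\|=r}$" unless the prior is supported on the sphere — which it is, since $\|\mu\|=r$ exactly. I would therefore use the Rademacher prior, which is cleaner.
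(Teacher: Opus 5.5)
\textbf{Verdict.} Your type-I step and your lower bound are correct. The Rademacher prior you settle on, with $\cosh(a^2/\tau^2)^k\le\exp(r^4/(2k\tau^4))$ and $\TV\le\frac12\sqrt{\chi^2}$, is exactly the paper's argument, and your spherical-prior variant via the even-moment comparison also works. There is, however, a genuine gap in the power step.

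\textbf{Where the power step fails.} Union-bounding the central lower tail of $\norm{\xi}^2$ and the Gaussian cross term cannot give the claim with $C_1=16$ for all $\delta\in(0,1)$. No ``sharper split of $\delta$'' repairs it.

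As $\delta\to1$ we have $t\to0$, so the hypothesis only guarantees $\theta=16(\sqrt{kt}+t)\to0$. The slack $\theta-2\sqrt{kt}-2t$ that the two deviations must share therefore tends to zero, also relative to the cross term's standard deviation $2\sqrt\theta$. The two events you union-bound then have probabilities tending to $\bbP\{\chi^2_k\le k\}>1/2$ and $1/2$. Their sum exceeds $\delta$ once $\delta$ is close enough to $1$, whatever tail inequalities or allocation $\delta_1+\delta_2=\delta$ you use.

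Concretely, with your inequalities and an equal split, the requirement becomes $6\sqrt{kt}+2t\ge2\sqrt{k(t+\log2)}+4\log2$. This holds for $\delta\le1/2$ but fails, for example, at $k=1$, $\delta=0.9$. The additive $\log2$ cannot be absorbed into $C_1(\sqrt{kt}+t)$, because that threshold itself vanishes.

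\textbf{The remedy.} This is what the paper does: treat $W=\norm{Z}^2/\tau^2$ as a single variable and apply one Chernoff bound. For $u\ge0$, the noncentral $\chi^2$ moment generating function gives
\[
\log\bbE e^{-u(W-k-\theta)}=u(k+\theta)-\tfrac k2\log(1+2u)-\frac{\theta u}{1+2u}\le(k+2\theta)u^2,
\]
using $\log(1+x)\ge x-x^2/2$ and $\theta u-\theta u/(1+2u)\le2\theta u^2$.

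Optimizing over $u$ yields $\bbP\{W\le k+\theta-2\sqrt{(k+2\theta)t}\}\le e^{-t}=\delta$, with no splitting of $\delta$. Since $2\sqrt{(k+2\theta)t}\le2\sqrt{kt}+\theta/2+4t$, the condition $\theta\ge16(\sqrt{kt}+t)$ places this level at or above $k+6\sqrt{kt}+4t$. That exceeds the rejection threshold $k+2\sqrt{kt}+2t$ for every $t>0$. Because the bound holds pointwise for each $\norm{\mu}\ge r$, you do not need stochastic monotonicity in $\theta$.
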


\begin{proof}
Under $H_0$, $\norm{Z}^2/\tau_{N,M}^2$ is central $\chi_k^2$, and Laurent--Massart concentration \cite{LaurentMassart2000} gives \eqref{eq:norm-test}. Under the alternative, write $W=\norm{Z}^2/\tau_{N,M}^2$ and $\lambda=\norm{\mu}^2/\tau_{N,M}^2$. For $u\ge0$, its moment-generating function gives $\log\bbE e^{-u(W-k-\lambda)}\le(k+2\lambda)u^2$. Chernoff optimization therefore yields
\begin{equation}
\bbP\{W\le k+\lambda-2\sqrt{(k+2\lambda)t}\}\le e^{-t}.
\end{equation}
Since $2\sqrt{(k+2\lambda)t}\le2\sqrt{kt}+\lambda/2+4t$, the condition $\lambda\ge16(\sqrt{kt}+t)$ places this lower-tail threshold above the rejection threshold and proves the claimed power.

For the lower bound, following the standard Gaussian testing method \cite{IngsterSuslina2003}, put a prior on alternatives by taking $\mu=a\varepsilon$ with $\varepsilon\in\{\pm1\}^k$ uniform and $ka^2=r^2$. If $P_\pi$ is the resulting mixture distribution and $P_0$ is the null, then
\begin{equation}
\chi^2(P_\pi\|P_0)+1
=\bbE_{\varepsilon,\varepsilon'}\exp\left(\frac{a^2}{\tau_{N,M}^2}\varepsilon^\top\varepsilon'\right)
=\cosh\left(\frac{a^2}{\tau_{N,M}^2}\right)^k
\leq \exp\left(\frac{r^4}{2k\tau_{N,M}^4}\right).
\end{equation}
Total variation is bounded by $\TV(P_\pi,P_0)\leq\frac12\sqrt{\chi^2(P_\pi\|P_0)}$. Since the worst-case type-II error is at least the Bayes average type-II error under $\pi$, the claimed inequality follows.
\end{proof}

The squared mean-signal threshold in this experiment is of order $\tau_{N,M}^2\sqrt{k}$ at fixed confidence. A curvature-search rule inherits this boundary only through an observation model that identifies its statistic and noise scale with $\mu$ and $\tau_{N,M}$. The next calculation treats the distinct cost of choosing one signal-bearing coordinate from a dictionary.

\begin{theorem}[Sparse dictionary search boundary]
\label{thm:sparse-dictionary-boundary}
Let a release search over $p$ candidate directions produce
\begin{equation}
Y_j=\mu_j+\tau\xi_j,
\qquad \xi_j\sim\mathcal N(0,1)\ \text{independently},
\qquad \tau=\sigma/\sqrt M.
\end{equation}
The null is $H_0:\mu=0$. For $p\ge2$, $\tau>0$, and $\mu_0>0$, the one-sparse alternative is
\begin{equation}
H_1(\mu_0):\quad \mu=\mu_0 e_j\ \text{for an unknown }j\in\{1,\ldots,p\}.
\end{equation}
The threshold test
\begin{equation}
\varphi_{\max}(Y)=\mathbf 1\left\{\max_jY_j>\tau\sqrt{2\log(p/\delta)}\right\}
\label{eq:max-test}
\end{equation}
has type-I error at most $\delta$. If
\begin{equation}
\mu_0\ge \tau\left(\sqrt{2\log(p/\delta)}+\sqrt{2\log(1/\delta)}\right),
\label{eq:sparse-upper}
\end{equation}
then its power is at least $1-\delta$. Conversely, for any release test $\psi$,
\begin{equation}
\bbP_0\{\psi=1\}+\sup_{1\le j\le p}\bbP_{\mu_0e_j}\{\psi=0\}
\ge
1-\frac12\sqrt{\frac{\exp(\mu_0^2/\tau^2)-1}{p}}.
\label{eq:sparse-lower}
\end{equation}
Thus if $\mu_0^2\le (1-\varepsilon)\tau^2\log p$ along a sequence with $p\to\infty$, the sum of false-release and worst-case missed-release probabilities tends to one. Searching over an unknown breaking dictionary therefore has the unavoidable scale
\begin{equation}
\mu_0\asymp \frac{\sigma}{\sqrt M}\sqrt{\log p}.
\label{eq:sparse-scale}
\end{equation}
\end{theorem}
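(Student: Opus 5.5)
The argument splits into an upper bound for the max test and a lower bound via a uniform prior over the $p$ one-sparse alternatives, mirroring the proof of \cref{thm:detection-boundary}.

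\textbf{Type-I error.} Under $H_0$, each $Y_j/\tau$ is standard normal. The Gaussian tail bound $\bbP\{\xi>\sqrt{2u}\}\le e^{-u}$ with $u=\log(p/\delta)$, combined with a union bound over the $p$ coordinates, gives
\[
\bbP_0\{\varphi_{\max}=1\}\le p\cdot\delta/p=\delta .
\]

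\textbf{Power.} Under $\mu=\mu_0e_j$ it suffices to look at the aligned coordinate alone, since $\max_iY_i\ge Y_j$. The test accepts only if
\[
\tau\xi_j\le \tau\sqrt{2\log(p/\delta)}-\mu_0\le-\tau\sqrt{2\log(1/\delta)},
\]
where the second inequality uses \eqref{eq:sparse-upper}. The same Gaussian tail bound makes this event have probability at most $\delta$, uniformly in $j$.

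\textbf{Lower bound.} Put the uniform prior $\pi$ on $\{\mu_0e_j\}_{j=1}^p$ and let $P_\pi$ be the resulting mixture. The likelihood ratio of $P_{\mu_0e_j}$ to $P_0$ is $\exp(\mu_0Y_j/\tau^2-\mu_0^2/(2\tau^2))$. Using $\bbE_0\exp(\theta^\top Y/\tau^2)=\exp(\|\theta\|^2/(2\tau^2))$, the cross term is
\[
\bbE_0[L_jL_{j'}]=\exp\!\bigl(\mu_0^2\,\mathbf1\{j=j'\}/\tau^2\bigr).
\]
Averaging over independent indices $j,j'$ from $\pi$ then gives
\[
\chi^2(P_\pi\|P_0)+1=\frac{1}{p}e^{\mu_0^2/\tau^2}+\Bigl(1-\frac{1}{p}\Bigr),
\qquad\text{so}\qquad
\chi^2(P_\pi\|P_0)=\frac{e^{\mu_0^2/\tau^2}-1}{p}.
\]
As in \cref{thm:detection-boundary}, the worst-case type-II error is at least the $\pi$-average, so the error sum is at least $1-\TV(P_\pi,P_0)$. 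The bound $\TV\le\tfrac12\sqrt{\chi^2}$ then yields \eqref{eq:sparse-lower}.

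\textbf{Asymptotic statement.} If $\mu_0^2\le(1-\varepsilon)\tau^2\log p$, then
\[
\frac{e^{\mu_0^2/\tau^2}-1}{p}\le p^{1-\varepsilon}/p=p^{-\varepsilon}\to0,
\]
so the error sum tends to one. Together with \eqref{eq:sparse-upper}, where the right-hand side is of order $\tau\sqrt{\log p}$ at fixed $\delta$, this gives the matching scale \eqref{eq:sparse-scale}.

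The step that needs the most care is the second-moment computation. The indices $j$ and $j'$ must be drawn independently so that the diagonal term appears with weight $1/p$, and the off-diagonal terms must be checked to equal exactly one. The remaining steps are routine tail bounds.
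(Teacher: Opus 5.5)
Your proposal is correct and matches the paper's proof step for step. You use a union-bounded Gaussian tail for the type-I error, the aligned coordinate $Y_j$ for power, and the uniform mixture over the $p$ one-sparse alternatives with $\chi^2(P_\pi\|P_0)=(e^{\mu_0^2/\tau^2}-1)/p$ and $\TV\le\frac12\sqrt{\chi^2}$ for the lower bound, while also writing out the computation of $\bbE_0[L_jL_{j'}]$ that the paper only states.
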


\begin{proof}
The type-I bound follows from
\begin{equation}
\bbP_0\{\max_jY_j>t\}\le p\exp(-t^2/(2\tau^2))
\end{equation}
with $t=\tau\sqrt{2\log(p/\delta)}$. Under $H_1$ at coordinate $j$, the event $Y_j>t$ has probability at least $1-\delta$ whenever \eqref{eq:sparse-upper} holds. For the lower bound, mix uniformly over the $p$ alternatives. If $P_\pi$ denotes the mixture and $P_0$ the null, then
\begin{equation}
\chi^2(P_\pi\|P_0)
=\frac1p\left(\exp(\mu_0^2/\tau^2)-1\right).
\end{equation}
The standard chi-square total-variation bound $\TV(P_\pi,P_0)\le\frac12\sqrt{\chi^2(P_\pi\|P_0)}$ and the testing inequality give \eqref{eq:sparse-lower}.

\end{proof}

\begin{corollary}[Correlated dictionaries and orthogonal packings]
\label[corollary]{cor:correlated-dictionary}
Let $X\sim\mathcal N(\theta,\tau^2I_d)$ and let a dictionary produce standardized statistics $Y_j=u_j^\top X$ with $\norm{u_j}_2=1$, $j=1,\ldots,p$. The max test
\begin{equation}
\mathbf 1\left\{\max_jY_j>\tau\sqrt{2\log(p/\delta)}\right\}
\end{equation}
has type-I error at most $\delta$ without any independence assumption among the $Y_j$. Conversely, if the dictionary contains an orthonormal subdictionary of size $q$, then for the alternatives $\theta=\mu_0u_j$ on that subdictionary, the lower bound in \cref{thm:sparse-dictionary-boundary} holds with $p$ replaced by $q$. This brackets the search cost between an orthogonal-subdictionary lower bound and a full-dictionary upper bound. A matching metric-entropy characterization for general correlated dictionaries requires additional analysis.
\end{corollary}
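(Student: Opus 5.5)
The plan is to handle the two parts of \cref{cor:correlated-dictionary} separately, reducing each to an earlier computation.

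\emph{Upper bound (type-I error).} Under the null $\theta=0$, each standardized statistic satisfies $Y_j=u_j^\top X\sim\mathcal N(0,\tau^2)$, because $\norm{u_j}_2=1$ and the covariance is $\tau^2 I_d$. The Gaussian tail bound gives $\bbP_0\{Y_j>t\}\le\exp(-t^2/(2\tau^2))$. A union bound over the $p$ coordinates then gives $\bbP_0\{\max_jY_j>t\}\le p\exp(-t^2/(2\tau^2))$, which equals $\delta$ at $t=\tau\sqrt{2\log(p/\delta)}$. The union bound uses only the marginal law of each $Y_j$, so no independence or correlation structure among the $Y_j$ is needed. This repeats the type-I step in the proof of \cref{thm:sparse-dictionary-boundary}.

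\emph{Lower bound.} Let $u_{j_1},\dots,u_{j_q}$ be the orthonormal subdictionary and complete it to an orthonormal basis $W$ of $\R^d$. Set $\widetilde X=W^\top X\sim\mathcal N(W^\top\theta,\tau^2I_d)$.
\begin{itemize}
\item For the alternatives $\theta=\mu_0u_{j_s}$, the rotated mean is $\mu_0e_s$. The first $q$ coordinates therefore form exactly the one-sparse model of \cref{thm:sparse-dictionary-boundary} with $p$ replaced by $q$.
\item The remaining $d-q$ coordinates are $\mathcal N(0,\tau^2)$ noise under both the null and every such alternative.
\end{itemize}
Any test $\psi$ is a measurable function of $X$, equivalently of $\widetilde X$, and it may use all of $X$, not only the dictionary statistics. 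Since the extra coordinates have the same law under every hypothesis, I would rerun the $\chi^2$ mixture computation directly on the full product law. Take the uniform prior over the $q$ alternatives. The extra factor in the likelihood ratio cancels, and
\[
\chi^2(P_\pi\|P_0)=\frac1{q^2}\sum_{s,s'}\exp\!\bigl(\mu_0^2\,\ip{e_s}{e_{s'}}/\tau^2\bigr)-1=\frac{\exp(\mu_0^2/\tau^2)-1}{q},
\]
using orthonormality. Then $\TV\le\frac12\sqrt{\chi^2}$ and the testing inequality give \eqref{eq:sparse-lower} with $p$ replaced by $q$. The supremum over the subdictionary is at most the supremum over the full dictionary of alternatives $\theta=\mu_0u_j$, so the bound also holds for the larger family.

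The only delicate point is that the null/alternative pair must be the same family for which the test is evaluated, and that orthonormality is what makes the cross terms vanish in the $\chi^2$ sum. For non-orthogonal pairs the cross terms become $\exp(\mu_0^2\ip{u_j}{u_{j'}}/\tau^2)$, which is why only the packing subfamily gives a clean bound. The final bracketing sentence follows by combining the two parts: the upper bound scales with $\sqrt{\log p}$ and the lower bound with $\sqrt{\log q}$.
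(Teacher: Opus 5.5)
Your proposal is correct and follows the paper's proof: a union bound over the marginal Gaussian tails for the type-I error, and, for the lower bound, restricting to the orthonormal subdictionary and repeating the $\chi^2$ mixture calculation of \cref{thm:sparse-dictionary-boundary}. The paper's proof only says to ``repeat'' that calculation. Your two extra steps make it explicit for tests that see all of $X$: completing the subdictionary to an orthonormal basis, and noting that the complementary coordinates cancel from the likelihood ratio.
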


\begin{proof}
The upper bound is the union bound applied to the marginal Gaussian tails. The lower bound restricts the alternatives to the orthonormal subdictionary and repeats the mixture calculation in \cref{thm:sparse-dictionary-boundary}. Restricting the alternative set can only make the testing problem easier, so impossibility on that subset is a valid lower bound for the full dictionary.
\end{proof}

The dense and sparse experiments quantify different search costs. Their confidence margins distinguish a detected release signal from an unresolved one; an unresolved branch can remain closed at the chosen scale or enter a separately validated soft-release candidate.

\section{Structural Risk Theory for Release Paths}
\label{sec:srm}

The spectral criterion is local, and a local gain can still be paid for by a global complexity cost. The comparison that settles this is a finite-sample risk decomposition. When separate approximation, estimation, shot, optimization, and discovery bounds are available, their organizing decomposition is
\begin{equation}
\cE_{\gen}(H,S,\alpha)
\lesssim
\cE^{\sym}_{\bias}(H,S,\alpha)
+\cE^{\comm}_{\var}(H,S,\alpha)
+\cE_{\shot}(S,\alpha)
+\cE_{\opt}
+\cE_{\disc}.
\label{eq:risk-decomp}
\end{equation}

\subsection{Projection bias from excessive symmetry}

For a released class, write $\cE^{\sym}_{\bias}(a)=\inf_{f\in\cF_a}\|f-f_\star\|_{L^2(P_X)}^2$ for its approximation error; this reduces to \eqref{eq:projection-bias} for the closed equivariant subspace. Along a genuinely nested release path, enlarging $S$ or decreasing $\alpha$ cannot increase this infimum. The gated linear classes in \cref{thm:soft-effective-dimension} have this nesting under a common coefficient-radius constraint. Their estimation costs then determine which reduction in bias is worth realizing.

\subsection{Effective commutant complexity}

For a structure $a=(H,S,\alpha)$, let $\cA_a$ be the allowed Hermitian readout or local tangent-readout space, with orthonormal basis and feature vector $\Phi_a(\rho)$. Let $\widehat\Sigma_a=N^{-1}\sum_i\Phi_a(\rho_i)\Phi_a(\rho_i)^\top$. Write $\Rad_N(\cF)=N^{-1}\bbE_\epsilon\sup_{f\in\cF}\sum_i\epsilon_i f(\rho_i)$ for the empirical Rademacher complexity, where the independent signs $\epsilon_i$ are uniform on $\{-1,1\}$.

\begin{theorem}[Data-dependent effective-commutant bound]
\label{thm:effective-commutant-bound}
Let the class $\cF_a=\{\rho\mapsto w^\top\Phi_a(\rho):\|w\|_2\le R_a\}$ be fixed independently of the $N$ examples. For a loss in $[0,B_\ell]$ that is $L_\ell$-Lipschitz in the prediction, with probability at least $1-\delta$,
\begin{equation}
\sup_{f\in\cF_a}|R(f)-\widehat R_N(f)|
\le4L_\ell R_a\sqrt{\frac{\Tr\widehat\Sigma_a}{N}}
+3B_\ell\sqrt{\frac{\log(4/\delta)}{2N}}.
\label{eq:effective-bound}
\end{equation}
For a Hilbert--Schmidt orthonormal operator basis and density-matrix inputs,
\[
\Tr\widehat\Sigma_a\le\frac1N\sum_i\Tr(\rho_i^2)\le1.
\]
\end{theorem}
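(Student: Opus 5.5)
The argument follows the standard empirical-Rademacher template, applied to the loss class $\ell\circ\cF_a$ with the empirical Rademacher complexity $\Rad_N$ defined above.

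\textbf{Symmetrization.} Because $\cF_a$ is fixed independently of the sample and the loss takes values in $[0,B_\ell]$, the two-sided symmetrization bound with McDiarmid's inequality applies to the sample supremum. I then replace the expected Rademacher complexity by its empirical version using a second McDiarmid step. Splitting the failure budget between the two one-sided deviations and the Rademacher concentration gives, with probability at least $1-\delta$,
\[
\sup_{f\in\cF_a}|R(f)-\widehat R_N(f)|\le 2\,\Rad_N(\ell\circ\cF_a)+3B_\ell\sqrt{\log(4/\delta)/(2N)}.
\]
Here $\Rad_N$ is evaluated on the loss class with the labels fixed. The constant $3$ and the $\log(4/\delta)$ are the usual bookkeeping for this step.

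\textbf{Contraction.} Each map $\ell(\cdot,y_i)$ is $L_\ell$-Lipschitz in the prediction. Talagrand's contraction lemma therefore gives
\[
\Rad_N(\ell\circ\cF_a)\le 2L_\ell\Rad_N(\cF_a).
\]
The factor $2$ is the version of the lemma that also covers absolute values and two-sided bounds. Combined with the first step, this produces the leading constant $4$.

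\textbf{Linear-class bound.} For the ball $\{w:\|w\|_2\le R_a\}$, Cauchy--Schwarz and Jensen give
\[
\Rad_N(\cF_a)=\frac{R_a}{N}\,\bbE_\epsilon\Bigl\|\sum_i\epsilon_i\Phi_a(\rho_i)\Bigr\|_2\le\frac{R_a}{N}\sqrt{\sum_i\|\Phi_a(\rho_i)\|_2^2}=R_a\sqrt{\frac{\Tr\widehat\Sigma_a}{N}}.
\]
The last equality uses $\Tr\widehat\Sigma_a=N^{-1}\sum_i\|\Phi_a(\rho_i)\|_2^2$. This yields \eqref{eq:effective-bound}.

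\textbf{Trace bound.} Let $\{A_j\}$ be a Hilbert--Schmidt orthonormal basis of the Hermitian readout space $\cA_a$. Then $\|\Phi_a(\rho)\|_2^2=\sum_j\Tr(A_j\rho)^2=\norm{\Pi_a\rho}_{\HS}^2$, where $\Pi_a$ is the orthogonal projector onto $\cA_a$ in the real Hilbert space of Hermitian operators. Since $\rho$ is Hermitian, $\norm{\Pi_a\rho}_{\HS}^2\le\Tr(\rho^2)$. Purity is at most one, so averaging over $i$ gives the final claim.

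\textbf{Main obstacle.} The only delicate part is tracking the constants in the symmetrization step. The leading $4$ needs the contraction lemma to hold for the two-sided deviation. The $3B_\ell$ term needs the concentration of the empirical Rademacher complexity to use range $B_\ell/N$ per coordinate, with $\delta$ split consistently across the deviation events.
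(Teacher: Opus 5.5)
Your proposal is correct and takes the same route as the paper: Cauchy--Schwarz plus Jensen for the linear-class Rademacher bound, then the Bartlett--Mendelson empirical Rademacher inequality with Lipschitz contraction applied to both one-sided deviations under a union bound, and Bessel's inequality (your projection argument) for $\Tr\widehat\Sigma_a\le N^{-1}\sum_i\Tr(\rho_i^2)\le1$. One small remark: the paper defines $\Rad_N$ without absolute values, so Ledoux--Talagrand contraction already gives $\Rad_N(\ell\circ\cF_a)\le L_\ell\Rad_N(\cF_a)$; your extra factor $2$, and hence the leading $4$, is slack rather than something two-sidedness requires, though the stated bound remains valid.
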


\begin{proof}
Conditionally on the data, Jensen's inequality and independence of Rademacher signs give
\begin{equation}
\Rad_N(\cF_a)\le\frac{R_a}{N}\bbE_\epsilon\left\|\sum_i\epsilon_i\Phi_a(\rho_i)\right\|_2
\le R_a\sqrt{\frac{\Tr\widehat\Sigma_a}{N}}.
\end{equation}
Apply the empirical Rademacher generalization inequality and Lipschitz contraction to the two one-sided loss deviations \cite{BartlettMendelson2002}; the displayed constants allow the two-sided union bound. For an orthonormal operator family, Bessel's inequality gives $\sum_j|\Tr(A_j\rho)|^2\le\Tr(\rho^2)$ pointwise.
\end{proof}

The radius and normalization are essential to a capacity comparison. At fixed Hilbert--Schmidt coefficient radius, enlarging a commutant cannot by itself turn this bound exponential: the feature trace is bounded by the purity. An operator-norm-bounded readout can instead require Hilbert--Schmidt radius of order $\sqrt d$, and a basis with only coordinatewise bounds $|\Phi_j|\le1$ permits the looser trace bound $m_a$. The Schur--Weyl dimensions $4^n$ and $\binom{n+3}{3}$ for unrestricted and permutation-invariant readouts \cite{CompanionDiscovery} therefore enter only after the same readout normalization is fixed on both sides. The complementary sparse-coefficient bound in \cite{CompanionDiscovery} controls an $\ell_1$ ball and pays $\sqrt{\log m_H}$; it prices a different class.

\subsection{Soft release effective dimension}

The previous bound treats a released space as either open or closed. Soft release requires a continuous dimension formula.

\begin{theorem}[Soft-release effective commutant dimension]
\label{thm:soft-effective-dimension}
Let the feature vector decompose as $\Phi=(\Phi_H,\Phi_S)$, where $\Phi_H$ is retained and $\Phi_S\in\R^q$ contains Gram-orthonormal breaking features. Let $D_\alpha=\diag(s_1(\alpha_1),\ldots,s_q(\alpha_q))$ and define
\begin{equation}
\Phi_\alpha(\rho)=(\Phi_H(\rho),D_\alpha\Phi_S(\rho)).
\end{equation}
For the linear class $\{w^\top\Phi_\alpha:\norm{w}_2\le R\}$,
\begin{equation}
\Rad_N(\cF_{H,S,\alpha})
\le R\sqrt{\frac{\Tr\widehat\Sigma_H+\Tr(D_\alpha\widehat\Sigma_S D_\alpha)}{N}}.
\label{eq:soft-rad}
\end{equation}
If the breaking coordinates diagonalize their empirical covariance, this becomes
\begin{equation}
\Rad_N(\cF_{H,S,\alpha})
\le R\sqrt{\frac{\Tr\widehat\Sigma_H+\sum_{j=1}^q s_j(\alpha_j)^2\widehat\nu_j}{N}},
\label{eq:soft-trace}
\end{equation}
where $\widehat\nu_j$ are the breaking covariance eigenvalues. For a different parameter norm, first transform both the feature map and the feasible coefficient set into its whitened coordinates. The displayed Rademacher bound is conditional on the realized feature map. A population-risk bound for a data-chosen rotation or gate additionally requires independent validation or uniform control over the candidate maps. Under the isotropic noise--curvature model of \cref{prop:ridge-scale}, the corresponding shot dimension is
\begin{equation}
D_{\br}(S,\alpha)=\sum_{j=1}^q s_j(\alpha_j)^2.
\label{eq:soft-shot-dim}
\end{equation}
\end{theorem}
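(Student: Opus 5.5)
The proof of \cref{thm:soft-effective-dimension} reuses the Jensen--Rademacher computation from the proof of \cref{thm:effective-commutant-bound}, now applied to the gated feature map. Conditionally on the realized data, the supremum of $\sum_i\epsilon_i w^\top\Phi_\alpha(\rho_i)$ over $\norm{w}_2\le R$ equals $R\norm{\sum_i\epsilon_i\Phi_\alpha(\rho_i)}_2$ by Cauchy--Schwarz, and this is attained. Jensen's inequality and independence of the signs then give
\[
\Rad_N(\cF_{H,S,\alpha})\le\frac{R}{N}\Bigl(\sum_i\norm{\Phi_\alpha(\rho_i)}_2^2\Bigr)^{1/2}
=R\sqrt{\frac{\Tr\bigl(N^{-1}\textstyle\sum_i\Phi_\alpha(\rho_i)\Phi_\alpha(\rho_i)^\top\bigr)}{N}}.
\]
The empirical covariance of $\Phi_\alpha$ is block structured. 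Its diagonal blocks are $\widehat\Sigma_H$ and $D_\alpha\widehat\Sigma_SD_\alpha$, and the off-diagonal blocks do not contribute to the trace. This yields \eqref{eq:soft-rad}. Gated coordinates with $s_j=0$ (that is, $\alpha_j=\infty$) simply contribute zero, so no separate treatment of closed gates is needed.

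For \eqref{eq:soft-trace}, I would write $\Tr(D_\alpha\widehat\Sigma_SD_\alpha)=\sum_j s_j^2(\widehat\Sigma_S)_{jj}$, using only the diagonal entries. If the breaking coordinates diagonalize $\widehat\Sigma_S$, these diagonal entries are the eigenvalues $\widehat\nu_j$. The point to emphasise is that the rotation to eigencoordinates must be applied before the gates. Since $D_\alpha$ does not commute with a general rotation, the formula holds in the gated coordinates as stated and not basis-independently.

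The remark on other parameter norms follows from the change of variables $w=W^{-1/2}\tilde w$. This turns a $W$-weighted ball into a Euclidean ball, with features $W^{-1/2}\Phi_\alpha$, and the same computation then applies. The conditionality caveat then follows because the bound was derived for a fixed feature map. If the gate or rotation is chosen from the same data, the Rademacher argument no longer applies, which is why the statement defers that case to \cref{thm:post-selection-oracle} or to a uniform bound.

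Finally, \eqref{eq:soft-shot-dim} is the individual-gate variance term already computed in \cref{prop:ridge-scale}. Under the isotropic breaking curvature $\mu I$ and noise covariance $\sigma^2\mathsf Q/M$, each breaking coordinate contributes $s_j^2\sigma^2/(2M)$, and summing gives $\sum_j s_j^2$, matching \eqref{eq:soft-shot-dim-origin}. There is no real obstacle in any of these steps. The only delicate point is stating precisely in which coordinate system the diagonal formula holds and that all bounds are conditional on the realized feature map.
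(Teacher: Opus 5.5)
Your proposal is correct and follows the paper's proof. Like the paper, you apply the Jensen--Rademacher bound of \cref{thm:effective-commutant-bound} to $\Phi_\alpha$, split the trace of the empirical covariance blockwise, read off the eigenvalues in diagonalizing coordinates, and take the shot dimension from the individual-gate variance term of \cref{prop:ridge-scale}. One small precision: for a data-chosen gate or rotation, the empirical Rademacher bound still holds conditionally on the realized map. What fails is only its transfer to a population-risk bound, which is the caveat the statement records.
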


\begin{proof}
Apply the Rademacher argument in \cref{thm:effective-commutant-bound} to the gated feature map $\Phi_\alpha$. Since $\norm{(\Phi_H,D_\alpha\Phi_S)}_2^2=\norm{\Phi_H}_2^2+\norm{D_\alpha\Phi_S}_2^2$ in whitened coordinates, the trace of the empirical covariance is $\Tr\widehat\Sigma_H+\Tr(D_\alpha\widehat\Sigma_SD_\alpha)$. Diagonalizing the breaking covariance gives \eqref{eq:soft-trace}. The shot dimension follows because a gate of amplitude $s_j$ scales the variance contribution of that released coordinate by $s_j^2$ in the local quadratic model.
\end{proof}

\begin{remark}
Equation \eqref{eq:soft-trace} quantifies the statistical cost of partially opening a direction: its feature variance and, under the local ridge model, its shot cost are multiplied by $s_j^2$. The gate remains a candidate choice until its loss benefit is certified or independently validated.
\end{remark}

\subsection{Oracle inequality for candidate release paths}

For a finite candidate family, the selection score combines empirical risk with simultaneous uncertainty and nonnegative structure costs. Independence of final validation from candidate generation makes this comparison valid for trained predictors.

\begin{assumption}[Selection and validation discipline]
\label[assumption]{ass:sample-splitting}
The candidate family $\mathscr A$ is fixed before final validation, or is treated conditionally as in \cref{thm:post-selection-oracle}. The final risk estimate is evaluated on data independent of the random choices that generated $\mathscr A$, or by a cross-fitting procedure with separately justified foldwise confidence events.
\end{assumption}

Define
\begin{equation}
\widehat{\srmscore}(a)=\widehat R_N(\widehat f_a)
+\Lambda_a(N,\delta)+\widehat\Xi_a(M,\delta)+\widehat\Omega_a(N,\delta),
\label{eq:srm-score}
\end{equation}
where $\widehat f_a$ is the trainer output in structure $a$, $\Lambda_a$ is the commutant complexity, $\widehat\Xi_a$ estimates the shot/release uncertainty, and $\widehat\Omega_a$ estimates the discovery or task-validation uncertainty. These penalty estimates are taken to be nonnegative, as they are used only as upper-confidence costs. Let
\begin{equation}
\widehat a\in\argmin_{a\in\mathscr A}\widehat{\srmscore}(a).
\end{equation}

\begin{theorem}[Finite-family structural-risk oracle inequality]
\label{thm:oracle}
Under \cref{ass:sample-splitting}, assume that for every $a\in\mathscr A$, with probability at least $1-\delta/|\mathscr A|$,
\begin{equation}
\abs{R(\widehat f_a)-\widehat R_N(\widehat f_a)}\le\Lambda_a,
\qquad
\abs{\widehat\Xi_a-\Xi_a}\le\Xi_a,
\qquad
\abs{\widehat\Omega_a-\Omega_a}\le\Omega_a.
\end{equation}
Then with probability at least $1-\delta$,
\begin{equation}
R(\widehat f_{\widehat a})
\le
\inf_{a\in\mathscr A}\left\{R(\widehat f_a)+2\Lambda_a+2\Xi_a+2\Omega_a\right\}.
\label{eq:oracle}
\end{equation}
If the candidate trainers additionally admit the component-wise risk bounds on the right-hand side, this comparison yields
\begin{equation}
\cE_{\gen}(\widehat a)
\lesssim
\inf_{a=(H,S,\alpha)}\left[
\cE^{\sym}_{\bias}(a)+\cE^{\comm}_{\var}(a)+\cE_{\shot}(a)+\cE_{\opt}(a)+\cE_{\disc}(a)
\right].
\label{eq:oracle-decomp}
\end{equation}
\end{theorem}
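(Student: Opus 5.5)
The plan is the standard two-step structural-risk argument: work on one high-probability event, then chain inequalities through the definition of $\widehat a$.

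\textbf{Good event.} Let $\mathcal E$ be the intersection over $a\in\mathscr A$ of the three per-structure events in the hypothesis. A union bound over the finite family gives $\bbP(\mathcal E)\ge1-\delta$. If $\mathscr A$ is random, I would condition on the prevalidation information as in \cref{ass:sample-splitting}. The per-structure probabilities are then conditional, and $|\mathscr A|$ is measurable with respect to the conditioning. The union bound therefore holds conditionally, and integrating gives the unconditional statement.

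\textbf{Two-sided sandwich on $\mathcal E$.} For each $a$ I would write $\srmscore(a)=R(\widehat f_a)+2\Lambda_a+2\Xi_a+2\Omega_a$ for the right-hand quantity.

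For the upper bound, use $\widehat R_N\le R+\Lambda_a$, $\widehat\Xi_a\le2\Xi_a$ and $\widehat\Omega_a\le2\Omega_a$. Reading $\Lambda_a$ in the score as the deterministic radius, this gives
\[
\widehat{\srmscore}(a)\le R(\widehat f_a)+2\Lambda_a+2\Xi_a+2\Omega_a .
\]

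For the lower bound at $\widehat a$, use $R\le\widehat R_N+\Lambda_{\widehat a}$ together with $\widehat\Xi_{\widehat a},\widehat\Omega_{\widehat a}\ge0$. This gives
\[
R(\widehat f_{\widehat a})\le\widehat R_N(\widehat f_{\widehat a})+\Lambda_{\widehat a}\le\widehat{\srmscore}(\widehat a).
\]

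Chaining through minimality, $\widehat{\srmscore}(\widehat a)\le\widehat{\srmscore}(a)$ for every $a$. Taking the infimum yields \eqref{eq:oracle}.

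\textbf{Main obstacle: how the constants close.} Here the factor $2$ on $\Xi_a$ and $\Omega_a$ comes only from the upper side, since nonnegativity handles the lower side. The subtle point is that the lower bound uses only the risk deviation at the data-dependent $\widehat a$. This is legitimate because $\mathcal E$ is simultaneous over all $a$, so no extra selection cost arises.

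\textbf{The decomposition \eqref{eq:oracle-decomp}.} This is a corollary rather than new mathematics. I would assume, as the statement says, componentwise bounds for each $a$:
\[
R(\widehat f_a)-R(f_\star)\lesssim\cE^{\sym}_{\bias}(a)+\cE_{\opt}(a)+\cE^{\comm}_{\var}(a),
\qquad
\Lambda_a\lesssim\cE^{\comm}_{\var}(a),\quad \Xi_a\lesssim\cE_{\shot}(a),\quad \Omega_a\lesssim\cE_{\disc}(a).
\]
The first would come from the approximation error of $\cF_a$, the optimization gap, and the estimation cost of \cref{thm:effective-commutant-bound} or \cref{thm:soft-effective-dimension}. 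Substituting these into \eqref{eq:oracle} and subtracting $R(f_\star)$ gives \eqref{eq:oracle-decomp} with absolute constants absorbed into $\lesssim$.

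I would state explicitly that $\cE_{\gen}(\widehat a)$ denotes the excess risk $R(\widehat f_{\widehat a})-R(f_\star)$. I would also state that the infimum runs over the candidate family, or over structures approximable within it by the $\varepsilon$-net provision of \cref{ass:postselection}.
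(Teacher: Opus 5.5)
Your proposal is correct and follows the paper's proof essentially step for step. Both use the union-bound event over the finite family, the lower bound $R(\widehat f_{\widehat a})\le\widehat R_N(\widehat f_{\widehat a})+\Lambda_{\widehat a}$ with the nonnegative penalty estimates at $\widehat a$ discarded, minimality of $\widehat a$, and the comparator bounds $\widehat\Xi_a\le 2\Xi_a$, $\widehat\Omega_a\le 2\Omega_a$. Your treatment of the final decomposition as a corollary of assumed componentwise trainer bounds also matches the paper.
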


\begin{proof}
On the union-bound event, $R(\widehat f_a)\le\widehat R_N(\widehat f_a)+\Lambda_a$ and $\widehat R_N(\widehat f_a)\le R(\widehat f_a)+\Lambda_a$ for all $a$. The minimality of $\widehat a$ in \eqref{eq:srm-score} gives, for arbitrary $a$,
\begin{align}
R(\widehat f_{\widehat a})
&\le \widehat R_N(\widehat f_{\widehat a})+\Lambda_{\widehat a} \\
&\le \widehat R_N(\widehat f_a)+\Lambda_a+\widehat\Xi_a+\widehat\Omega_a-\widehat\Xi_{\widehat a}-\widehat\Omega_{\widehat a} \\
&\le R(\widehat f_a)+2\Lambda_a+2\Xi_a+2\Omega_a,
\end{align}
where the estimates are replaced by their confidence bounds and negative terms are discarded. Taking the infimum over $a$ proves \eqref{eq:oracle}. The final decomposition is conditional on bounds for those trainer-specific terms; the selection inequality itself compares the risks of the realized trained predictors.

\end{proof}

\subsection{Post-selection validity: independent validation oracle}
\label{subsec:postselection-oracle}

Conditioning on all fitting and release information freezes an adaptively generated family before it encounters validation data. The validation comparison then depends on the number and risks of these realized candidates, regardless of the preceding search.

\begin{theorem}[Post-selection oracle inequality for adaptive candidate families]
\label{thm:post-selection-oracle}
Let \(\widehat{\mathscr A}\) be any random finite candidate family measurable with respect to \(\mathcal D_{\fit}\cup\mathcal D_{\rel}\), with \(|\widehat{\mathscr A}|\le A_{\max}\). For every \(a\in\widehat{\mathscr A}\), train \(\widehat f_a\) without using \(\mathcal D_{\valsplit}\). Suppose that, conditionally on \(\mathcal D_{\fit}\cup\mathcal D_{\rel}\), the validation estimate satisfies
\begin{equation}
\bbP\left\{
|R(\widehat f_a)-\widehat R_{\valsplit}(\widehat f_a)|
\le c_a(\delta/A_{\max})
\ \text{for all }a\in\widehat{\mathscr A}
\right\}
\ge 1-\delta.
\end{equation}
Define
\begin{equation}
\widehat a\in\argmin_{a\in\widehat{\mathscr A}}
\left\{\widehat R_{\valsplit}(\widehat f_a)+c_a(\delta/A_{\max})+\Xi_a+\Omega_a\right\},
\end{equation}
where \(\Xi_a,\Omega_a\ge0\) are costs fixed conditionally on the fitting and release streams. Then, with probability at least \(1-\delta\),
\begin{equation}
R(\widehat f_{\widehat a})
\le
\inf_{a\in\widehat{\mathscr A}}
\left\{R(\widehat f_a)+2c_a(\delta/A_{\max})+\Xi_a+\Omega_a\right\}.
\label{eq:post-selection-oracle}
\end{equation}
If the retained-subgroup family is produced by the dictionary-recovery procedure of \cite{CompanionDiscovery} with failure probability \(\delta_{\disc}\), the risk comparison remains valid for the realized family with or without successful discovery. Intersecting with a valid subgroup-recovery event of failure probability \(\delta_{\disc}\) also certifies its discovery interpretation, with probability at least \(1-\delta-\delta_{\disc}\). One may set \(\Omega_a=0\) for exact recovered candidates; a tolerance or encoding bias remains an approximation cost. Recovery of dictionary elements alone does not ensure that a target subgroup or predictor was included.
\end{theorem}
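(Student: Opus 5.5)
The argument is conditional: fix everything that precedes validation, then run the deterministic comparison from \cref{thm:oracle} on a good validation event.

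Let $\mathcal G=\sigma(\mathcal D_{\fit}\cup\mathcal D_{\rel})$, including all auxiliary shot records and training randomness, as required in \cref{subsec:splitting}. Conditionally on $\mathcal G$, the following are deterministic:
\begin{itemize}
\item the family $\widehat{\mathscr A}$, which has at most $A_{\max}$ elements;
\item the trained predictors $\widehat f_a$;
\item the radii $c_a(\delta/A_{\max})$ and the costs $\Xi_a,\Omega_a$.
\end{itemize}
The validation stream $\mathcal D_{\valsplit}$ is independent of $\mathcal G$, so its conditional law is just its unconditional product law.

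By hypothesis, the simultaneous event
\[
E=\{|R(\widehat f_a)-\widehat R_{\valsplit}(\widehat f_a)|\le c_a\ \forall a\in\widehat{\mathscr A}\}
\]
has conditional probability at least $1-\delta$. In practice this event would come from a per-candidate bound at level $\delta/A_{\max}$ and a union bound over the at most $A_{\max}$ realized candidates. Taking expectation over $\mathcal G$ gives $\bbP(E)\ge1-\delta$ unconditionally.

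On $E$, fix any $a\in\widehat{\mathscr A}$ and chain three inequalities:
\begin{align}
R(\widehat f_{\widehat a})
&\le\widehat R_{\valsplit}(\widehat f_{\widehat a})+c_{\widehat a}+\Xi_{\widehat a}+\Omega_{\widehat a}\\
&\le\widehat R_{\valsplit}(\widehat f_a)+c_a+\Xi_a+\Omega_a\\
&\le R(\widehat f_a)+2c_a+\Xi_a+\Omega_a.
\end{align}
The first line uses $E$ together with the nonnegativity of $\Xi_{\widehat a}$ and $\Omega_{\widehat a}$. The second line is the minimality of $\widehat a$ in the selection score. The third line uses $E$ again. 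Taking the infimum over $a$ gives \eqref{eq:post-selection-oracle}.

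For the discovery remarks, the inequality above holds on $E$ whatever the family is. Its validity therefore does not depend on whether the procedure of \cite{CompanionDiscovery} succeeded. Intersecting $E$ with the recovery event and applying a union bound yields probability at least $1-\delta-\delta_{\disc}$ for both conclusions together. Setting $\Omega_a=0$ only weakens the costs, and it is admissible because the theorem merely requires $\Omega_a\ge0$. The final caveat is a non-claim: the inequality compares only members of $\widehat{\mathscr A}$, so it is silent about predictors that were never included.

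The main obstacle is measure-theoretic, namely justifying the conditioning step. Two things must be checked. First, $\widehat{\mathscr A}$, $\widehat f_a$, $c_a$, $\Xi_a$, and $\Omega_a$ must all be $\mathcal G$-measurable. Second, $\mathcal D_{\valsplit}$ must remain independent of $\mathcal G$. Together these guarantee that applying the per-candidate concentration at a random but $\mathcal G$-frozen candidate introduces no selection bias. I would make this precise with a regular conditional distribution, or equivalently via Fubini over the product structure of $\mathcal D$. The random cardinality is handled by the fixed cap $A_{\max}$.
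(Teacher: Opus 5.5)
Your proposal is correct and follows essentially the same route as the paper. You condition on $\mathcal D_{\fit}\cup\mathcal D_{\rel}$ so that the family, the trained predictors and the costs are frozen, then work on the simultaneous validation event. There you apply the two-sided comparison of \cref{thm:oracle}, which you write out as the explicit three-line chain using $\Xi_{\widehat a},\Omega_{\widehat a}\ge0$ and the minimality of $\widehat a$, and you finish with a union bound against the recovery event for the discovery claim.
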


\begin{proof}
Condition on the fitting and release streams. The candidate family is now fixed with respect to the validation stream. A union bound over at most \(A_{\max}\) candidates gives the simultaneous validation event. On that event, the same two-sided comparison used in \cref{thm:oracle} proves \eqref{eq:post-selection-oracle}. No assumption is needed on how aggressively \(\widehat{\mathscr A}\) was searched before validation, because all such adaptivity has been conditioned away. For the last claim, intersect this event with the dictionary-recovery event of \cite{CompanionDiscovery} and apply a union bound.
\end{proof}

\begin{corollary}[Fast validation along nested release paths]
\label[corollary]{cor:nested-adaptivity}
Let $\widehat f_1,\ldots,\widehat f_K$ be trained without the $n_{\val}$ validation examples, and assume $|Y|,|\widehat f_k(X)|\le b$. For squared loss, write $f_\star(X)=\bbE[Y\mid X]$ and $e_k=R(\widehat f_k)-R(f_\star)$. Suppose the measured validation risks are within a simultaneous additive shot error $\epsilon_{\shot}$ of their exact empirical values. Let $\widehat k$ minimize measured validation risk plus a nonnegative prevalidation cost $q_k$. On an event of validation probability at least $1-\delta$, intersected with the shot event,
\begin{equation}
e_{\widehat k}\le\min_{k\le K}\{3e_k+2q_k\}
+\frac{80b^2\log(2K/\delta)}{n_{\val}}+4\epsilon_{\shot}.
\label{eq:fast-validation}
\end{equation}
If a nested path satisfies $e_k\le Ck^{-2s}+C'k/N_{\eff}$, $q_k\le C''k/N_{\eff}$, $s>0$, $N_{\eff}^{-1}=N^{-1}+M^{-1}$, and contains $k\asymp N_{\eff}^{1/(2s+1)}$, then
\begin{equation}
e_{\widehat k}\lesssim N_{\eff}^{-2s/(2s+1)}
+\frac{b^2\log(2K/\delta)}{n_{\val}}+\epsilon_{\shot}.
\end{equation}
The path rate is preserved when the final two terms are of that order or smaller. If the simultaneous shot event fails with probability at most $\delta_{\shot}$, the joint guarantee has probability at least $1-\delta-\delta_{\shot}$.
\end{corollary}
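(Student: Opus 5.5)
The plan is a Bernstein-type fast-rate argument for the excess-loss differences, followed by a deterministic comparison. Fix the prevalidation information, so that $\widehat f_1,\ldots,\widehat f_K$ and the costs $q_k$ are deterministic. For each $k$, define the excess-loss variable
\[
Z_k=(\widehat f_k(X)-Y)^2-(f_\star(X)-Y)^2 .
\]
Then $\bbE Z_k=e_k$, and $Z_k=(\widehat f_k-f_\star)(\widehat f_k+f_\star-2Y)$. Since $|f_\star|\le b$, we have $|Z_k|\le 8b^2$. Using $\bbE[(\widehat f_k+f_\star-2Y)^2\mid X]\le 16b^2$ together with $e_k=\|\widehat f_k-f_\star\|_{L^2}^2$, we get the variance relation $\Var Z_k\le\bbE Z_k^2\le 16b^2 e_k$. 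Writing $\widehat e_k$ for the validation average of $Z_k$ (the $f_\star$ term is common to all $k$ and cancels in the argmin), Bernstein's inequality and a union bound over the $K$ candidates and both tails give, with $L=\log(2K/\delta)$ and probability at least $1-\delta$, simultaneously for all $k$,
\[
|\widehat e_k-e_k|\le\sqrt{\frac{32b^2e_kL}{n_{\val}}}+\frac{16b^2L}{3n_{\val}} .
\]

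Next, AM-GM converts this to multiplicative form. Using $\sqrt{32b^2e_kL/n_{\val}}\le \tfrac12 e_k+16b^2L/n_{\val}$, the event implies
\[
\tfrac12 e_k-c\,b^2L/n_{\val}\le\widehat e_k\le\tfrac32 e_k+c\,b^2L/n_{\val}
\]
with $c=16+16/3$. Measured risks differ from exact empirical ones by at most $\epsilon_{\shot}$ on the shot event. Minimality of $\widehat k$ then gives, for any $k$,
\[
\tfrac12 e_{\widehat k}-cb^2L/n_{\val}-\epsilon_{\shot}+q_{\widehat k}\le \tfrac32 e_k+cb^2L/n_{\val}+\epsilon_{\shot}+q_k .
\]
Dropping $q_{\widehat k}\ge0$ and multiplying by $2$ yields $e_{\widehat k}\le 3e_k+2q_k+4cb^2L/n_{\val}+4\epsilon_{\shot}$. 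Since $4c\approx 85$, the AM-GM splitting must be tuned (for example, a slightly different weight, or a sharper range bound on the centered variable) to land the stated constant $80$. This constant bookkeeping is the main obstacle. The structure of the bound is unaffected, and I would first try the refined range $|Z_k-e_k|\le 8b^2$ with the Bernstein term $2\cdot 8b^2L/(3n)$. Conditioning away the prevalidation information, exactly as in \cref{thm:post-selection-oracle}, makes the event unconditional, and a union bound with the shot event gives probability at least $1-\delta-\delta_{\shot}$.

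For the rate, plug the path assumptions into the oracle term at the index $k\asymp N_{\eff}^{1/(2s+1)}$, which the path contains. There $3e_k+2q_k\lesssim k^{-2s}+k/N_{\eff}\asymp N_{\eff}^{-2s/(2s+1)}$, since the two terms balance. Adding the validation and shot terms gives the displayed rate. The path rate is preserved whenever those last two terms are of at most that order.
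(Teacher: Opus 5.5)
Your argument follows the paper's proof step for step: the excess-loss variables, $\Var Z_k\le16b^2e_k$, two-sided Bernstein with a union bound at $t=\log(2K/\delta)$ (your $L$), the AM--GM split at weight $1/2$, minimality of $\widehat k$, dropping $q_{\widehat k}$, and balancing $k^{-2s}$ against $k/N_{\eff}$. As written, however, it proves \eqref{eq:fast-validation} only with $256/3$ in place of $80$. The repair you say you would try first changes nothing: a centered range $8b^2$ inserted into the linear term $2\cdot8b^2t/(3n_{\val})$ gives the same $16b^2t/(3n_{\val})$ you already had.

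The step the paper uses, and which you miss, is a sharper range. Both $(\widehat f_k(X)-Y)^2$ and $(f_\star(X)-Y)^2$ lie in $[0,4b^2]$, so $|Z_k|\le4b^2$; your factorization bound $8b^2$ loses a factor of two. Bernstein's linear term is then $8b^2t/(3n_{\val})$, so $|\widehat e_k-e_k|\le e_k/2+(16+8/3)b^2t/n_{\val}\le e_k/2+20b^2t/n_{\val}$ for all $k$. The comparison then gives $e_{\widehat k}/2\le 3e_k/2+q_k+40b^2t/n_{\val}+2\epsilon_{\shot}$, which is exactly the constant $80$.

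Your other suggestion, re-weighting the AM--GM split, can also close the gap even with your cruder range, but only if the weights are asymmetric. With a common weight $\alpha$ at both indices, keeping the coefficient $(1+\alpha)/(1-\alpha)\le3$ on $e_k$ leaves an additive constant that still exceeds $83$. Instead, split as $\sqrt{32b^2et/n_{\val}}\le\alpha e+8b^2t/(\alpha n_{\val})$, with $\alpha=2/5$ in the lower bound at $\widehat k$ and $\alpha=4/5$ in the upper bound at the comparator. With your linear term $16b^2t/(3n_{\val})$ this gives
\[
e_{\widehat k}\le 3e_k+\tfrac53q_k+\tfrac{610}{9}\,\frac{b^2t}{n_{\val}}+\tfrac{10}{3}\epsilon_{\shot},
\]
which implies the statement because $q_k,\epsilon_{\shot}\ge0$. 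One of these fixes has to be carried out to obtain the stated constant. The conditioning on prevalidation information, the union bound with the shot event, and the rate argument are fine as you have them.
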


\begin{proof}
Condition on the trained candidates. For $U_k=(\widehat f_k(X)-Y)^2-(f_\star(X)-Y)^2$, conditional expectation gives $\bbE U_k=e_k=\bbE(\widehat f_k-f_\star)^2$. Also $|U_k|\le4b^2$ and $\Var(U_k)\le16b^2e_k$. Bernstein's inequality and a union bound yield, with $t=\log(2K/\delta)$,
\begin{equation}
|(P_{n_{\val}}-P)U_k|
\le\sqrt{32b^2e_kt/n_{\val}}+8b^2t/(3n_{\val})
\le e_k/2+20b^2t/n_{\val}
\end{equation}
for all $k$. The common empirical loss of $f_\star$ cancels from candidate comparisons. Minimality of $\widehat k$, the two shot errors, and $q_{\widehat k}\ge0$ give $e_{\widehat k}/2\le3e_k/2+q_k+40b^2t/n_{\val}+2\epsilon_{\shot}$. This is \eqref{eq:fast-validation}. Balancing $k^{-2s}$ against $k/N_{\eff}$ proves the last assertion. The fast rate uses the squared-loss variance relation; bounded loss alone gives the slower validation term in \cref{thm:post-selection-oracle}.
\end{proof}

\subsection{Dynamical Lie algebra growth and barren-plateau margins}

The same confidence margin filters release directions whose curvature is too small to resolve. Unitary-design and dynamical Lie-algebra analyses identify concentration mechanisms for expressive quantum circuits \cite{McClean2018,Cerezo2021Cost,Ragone2024LieBP}. Under an operator-norm concentration bound for the probed curvature block, the probability of certifying such a branch follows directly from the estimator radius.

\begin{corollary}[Finite-shot rejection of exponentially flat release branches]
\label[corollary]{cor:dla-bp-margin}
Let a candidate release branch enlarge the backbone dynamical Lie algebra from $\mathfrak g_H$ to
\begin{equation}
\mathfrak g_{H,S}=\operatorname{Lie}(\mathfrak g_H,iS),
\qquad D_S=\dim\mathfrak g_{H,S}.
\end{equation}
Let $\theta$ be drawn from a specified probing distribution, and assume the true release-curvature block satisfies the concentration bound
\begin{equation}
\bbE_{\theta}\norm{K_S(\theta)}_{\op}^2\le \frac{C_{\mathrm{BP}}}{D_S}.
\label{eq:dla-bp-concentration}
\end{equation}
Suppose the empirical curvature estimator satisfies
\begin{equation}
\bbP\{\norm{\widehat K_S-K_S}_{\op}>r_\delta/2\}\le\delta_{\mathrm{est}},
\qquad
r_\delta=c\sigma\sqrt{\frac{\log(p/\delta)}{M}}.
\end{equation}
Then, for any positive semidefinite penalties $S_M$ and $G_N$,
\begin{equation}
\bbP\{\lambda_{\max}(\widehat K_S-S_M-2G_N)>r_\delta\}
\le
\delta_{\mathrm{est}}
+\frac{4C_{\mathrm{BP}}}{D_Sr_\delta^2}
=
\delta_{\mathrm{est}}
+O\!\left(\frac{M}{D_S\sigma^2\log(p/\delta)}\right).
\label{eq:dla-bp-filter}
\end{equation}
Consequently, if $M$ and $C_{\mathrm{BP}}/\sigma^2$ grow at most polynomially while $D_S$ grows exponentially, $\delta_{\mathrm{est}}$ is small, and the concentration model \eqref{eq:dla-bp-concentration} holds, this branch is not certified with high probability.
\end{corollary}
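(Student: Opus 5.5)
The plan is to split the certification event according to whether the estimator's confidence event holds, and on that event to reduce the empirical certification to a large population curvature. The probability of a large population curvature is then controlled by Markov's inequality applied to the concentration bound \eqref{eq:dla-bp-concentration}.

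\emph{Step 1: reduce to the population curvature.} Since $S_M\succeq0$ and $G_N\succeq0$, we have $\widehat K_S-S_M-2G_N\preceq\widehat K_S$. Hence $\lambda_{\max}(\widehat K_S-S_M-2G_N)\le\lambda_{\max}(\widehat K_S)\le\norm{\widehat K_S}_{\op}$. On the good event $\mathcal E=\{\norm{\widehat K_S-K_S}_{\op}\le r_\delta/2\}$, the triangle inequality (equivalently Weyl's inequality, as in \cref{cor:spectral-certificate}) gives $\norm{\widehat K_S}_{\op}\le\norm{K_S(\theta)}_{\op}+r_\delta/2$. So certification on $\mathcal E$, meaning $\lambda_{\max}(\widehat K_S-S_M-2G_N)>r_\delta$, forces $\norm{K_S(\theta)}_{\op}>r_\delta/2$.

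\emph{Step 2: union bound and Markov.} Next I would write
\[
\bbP\{\text{certify}\}\le\bbP(\mathcal E^c)+\bbP\{\norm{K_S(\theta)}_{\op}>r_\delta/2\}.
\]
The first term is at most $\delta_{\mathrm{est}}$ by hypothesis. For the second term I apply Markov's inequality to the squared norm:
\[
\bbP\{\norm{K_S}_{\op}^2>r_\delta^2/4\}\le 4\,\bbE_\theta\norm{K_S}_{\op}^2/r_\delta^2\le 4C_{\mathrm{BP}}/(D_Sr_\delta^2).
\]
Substituting $r_\delta^2=c^2\sigma^2\log(p/\delta)/M$ then gives the displayed $O\bigl(M/(D_S\sigma^2\log(p/\delta))\bigr)$ form.

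\emph{Step 3: asymptotic consequence.} If $M$ and $C_{\mathrm{BP}}/\sigma^2$ are polynomial while $D_S$ is exponential, the Markov term decays exponentially. With $\delta_{\mathrm{est}}$ small, the certification probability is therefore small.

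\emph{Main obstacle.} The one delicate point is the probability space. The probability must be taken jointly over $\theta$ and the shot noise, and the estimator tail bound must hold conditionally on each $\theta$, or at least after averaging over $\theta$. I would state that the estimation event is taken conditionally on $\theta$ and then averaged, so that the union bound is legitimate. Beyond this, every step is routine.
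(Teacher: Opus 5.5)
Your proposal is correct and is essentially the paper's argument: drop the positive semidefinite penalties to reduce to $\norm{\widehat K_S}_{\op}>r_\delta$, split into the estimation-failure event and the event $\norm{K_S}_{\op}>r_\delta/2$, and bound the latter by Markov's inequality on $\norm{K_S}_{\op}^2$ using the concentration bound. Your remark that both events must live on the joint probability space of $\theta$ and the shot noise is a harmless clarification that the paper leaves implicit.
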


\begin{proof}
Since $S_M+2G_N\succeq0$,
\begin{equation}
\lambda_{\max}(\widehat K_S-S_M-2G_N)\le\lambda_{\max}(\widehat K_S)\le\norm{\widehat K_S}_{\op}.
\end{equation}
If $\norm{\widehat K_S}_{\op}>r_\delta$, then either $\norm{\widehat K_S-K_S}_{\op}>r_\delta/2$ or $\norm{K_S}_{\op}>r_\delta/2$. The first event has probability at most $\delta_{\mathrm{est}}$. The second is bounded by Markov's inequality and \eqref{eq:dla-bp-concentration} as $4C_{\mathrm{BP}}/(D_Sr_\delta^2)$. Substituting the displayed radius gives \eqref{eq:dla-bp-filter}.
\end{proof}

\section{Quotient QNG after Symmetry Release}
\label{sec:qng}

Release changes model capacity and parameter redundancy at the same time. A newly opened branch may share gauge coordinates with the retained backbone, or may be locally Fisher-indistinguishable at the current state. A full-space damped inverse then turns vertical finite-shot noise into representative drift. The quotient viewpoint supplies the correct update, in line with standard optimization on quotient and embedded manifolds \cite{AbsilMahonySepulchre2008,Boumal2023}.

\subsection{Release-dependent vertical and horizontal spaces}

Work on a regular parameter stratum $\Theta_S$ with a chosen Euclidean coordinate metric. Let $\Gamma_S$ be a smoothly acting group of state-preserving parameter redundancies, with constant orbit dimension. The objective is also invariant under this action; predictor invariance alone does not imply a null direction of the state quantum Fisher matrix. Its vertical tangent space is
\begin{equation}
\cV_{\theta,S}=\{X_\xi(\theta):\xi\in\operatorname{Lie}(\Gamma_S)\}\subseteq T_\theta\Theta_S.
\end{equation}
The Fisher geometry is \emph{faithful} at $\theta$ if
\begin{equation}
\ker F_S(\theta)=\cV_{\theta,S},
\label{eq:faithful-release}
\end{equation}
which is the $\Gamma$-basic faithfulness condition of \cite{ChenQNGQuotient2026}; it holds at generic points under the conditions of \cite{ChenQNGQuotient2026} and finite-shot rank or gap certification requires the independent structural or separation assumptions specified in \cite{ChenQNGQuotient2026}. The estimated spectrum alone does not identify an exact state-preserving gauge action.
Then the horizontal space is $\mathsf H_{\theta,S}=\cV_{\theta,S}^\perp$ and $P_{\mathsf H,S}$ denotes the horizontal projector. The projector must be recomputed after any release or pruning event. The release-point Fisher orthogonality of \cite{CompanionDiscovery} organizes physical blocks, while \eqref{eq:faithful-release} separately identifies their parameter redundancy.

\subsection{Gauge-projected Tikhonov QNG}

Write QNG for quantum natural gradient and $\widehat F_S$ for an estimate of the quantum Fisher information matrix. A singular Fisher matrix whose kernel is exactly the redundancy directions should be inverted on the horizontal space rather than damped in the ambient one: the Moore--Penrose QNG step is then the minimum-norm horizontal lift of the quotient natural gradient \cite{ChenQNGQuotient2026}, and the circuit realizes the orbit natural gradient up to an explicit metric projection \cite{ChenQNGQuotient2026}. With noisy Fisher and gradient estimates the damping order matters, and the quotient-stable update is the horizontal Tikhonov step of \cite{ChenQNGQuotient2026},
\begin{equation}
\Delta\theta_{\mathsf H}
=-\eta P_{\mathsf H,S}
\bigl(P_{\mathsf H,S}\widehat F_SP_{\mathsf H,S}+\gamma I_{\mathsf H}\bigr)^{-1}
P_{\mathsf H,S}\widehat g,
\label{eq:horizontal-qng}
\end{equation}
in which damping acts only after the exact gauge kernel has been projected out.

Write $P_H=P_{\mathsf H,S}$, $P_V=I-P_H$, $\widehat g=g+\xi$ with $g=P_Hg$, $\bbE\xi=0$, $\Cov(\xi)=\Sigma$, assume the gradient shots are independent of the Fisher shots, and let $\widehat F_H=P_H\widehat F_SP_H$ satisfy $\lambda_{\min}(\widehat F_H+\gamma I_{\mathsf H})\ge\underline m+\gamma>0$. The leading projector makes $P_V\Delta\theta_{\mathsf H}=0$, and for a fixed exact projector the cumulative vertical displacement over a whole trajectory vanishes identically \cite{ChenQNGQuotient2026}. The full-space benchmark $\Delta\theta_{\mathrm{full}}=-\eta(F_S+\gamma I)^{-1}\widehat g$ has no such property: if $F_SP_V=0$ then $(F_S+\gamma I)^{-1}=\gamma^{-1}I$ on the vertical space, so
\begin{equation}
P_V\Delta\theta_{\mathrm{full}}=-\frac{\eta}{\gamma}P_V\xi,
\qquad
\bbE\norm{P_V\Delta\theta_{\mathrm{full}}}^2=\frac{\eta^2}{\gamma^2}\Tr(P_V\Sigma P_V),
\label{eq:vertical-drift}
\end{equation}
which is \cite{ChenQNGQuotient2026}: ambient damping converts vertical shot noise into representative drift at rate $\gamma^{-2}$. And the horizontal stochastic error of \eqref{eq:horizontal-qng} obeys
\begin{equation}
\bbE\!\left(\norm{\Delta\theta_{\mathsf H}-\bbE(\Delta\theta_{\mathsf H}\mid\widehat F_H)}^2\,\middle|\,\widehat F_H\right)
\le
\eta^2\frac{\Tr(P_H\Sigma P_H)}{(\underline m+\gamma)^2},
\label{eq:horizontal-noise}
\end{equation}
since the stochastic part is $-\eta P_H(\widehat F_H+\gamma I_{\mathsf H})^{-1}P_H\xi$; this is the damped analogue of the pseudoinverse bound $\eta^2\norm{\Sigma}_{\op}\rank(F_S)/(f^+_{\min})^2$ of \cite{ChenQNGQuotient2026}, with $\underline m+\gamma$ as the lower spectral scale controlling noise amplification. The trade-off it exposes is the one the certificate must price: damping suppresses horizontal noise but biases the step.

Soft release damps the newly opened horizontal directions. Coordinates with zero gate amplitude are excluded from the trained block; the damping sums below involve only positive gates. Let $R_\alpha\succeq0$ denote a structural damping operator on $\mathsf H_{\theta,S}$, with larger eigenvalues corresponding to harder soft-release penalties. To connect this metric object with the soft gates in \eqref{eq:soft-scale} and \eqref{eq:shrinkage-release}, choose an orthonormal basis $\{u_j\}$ for the released horizontal block, for example after diagonalizing $P_{\mathsf H,S}\widehat F_SP_{\mathsf H,S}$ or the whitened breaking covariance, and set
\begin{equation}
R_\alpha
=\gamma_\alpha\sum_j\alpha_j u_ju_j^\top
=\gamma_\alpha\sum_j\bigl(s_j(\alpha_j)^{-1}-1\bigr)u_ju_j^\top,
\qquad \gamma_\alpha>0.
\label{eq:R-alpha-mapping}
\end{equation}
One may take $\gamma_\alpha=\gamma$ to tie the structural friction to the Tikhonov scale. Directions near the statistical confidence boundary have small $s_j$ and therefore large metric friction, whereas fully released directions have $s_j=1$ and no extra structural damping. The metric eigenvalues used below are $r_j=\gamma_\alpha\alpha_j$ in this chosen horizontal chart; this keeps the statistical soft-release penalties $\alpha_j$ distinct from the QNG damping eigenvalues that control update variance.

\begin{equation}
\Delta\theta_{\mathrm{soft}}
=-\eta P_{\mathsf H,S}
\bigl(P_{\mathsf H,S}\widehat F_SP_{\mathsf H,S}+\gamma I_{\mathsf H}+R_\alpha\bigr)^{-1}
P_{\mathsf H,S}\widehat g.
\label{eq:soft-horizontal-qng}
\end{equation}

\begin{proposition}[Soft-release metric damping]
\label[proposition]{thm:soft-qng}
Assume \eqref{eq:faithful-release}. Condition on pilot information $\mathcal P$ that fixes the Fisher estimate, $R_\alpha$, and the horizontal chart before drawing gradient shots. Let $P_H=P_{\mathsf H,S}$ and $P_V=I-P_H$. The update \eqref{eq:soft-horizontal-qng} satisfies $P_V\Delta\theta_{\mathrm{soft}}=0$. Moreover, suppose that in an orthonormal horizontal chart a direction $u_j$ is a common eigendirection of the horizontal Fisher block used in the update, $P_H\widehat F_SP_H$, and $R_\alpha$, with eigenvalues $\widehat f_j\ge0$ and $r_j\ge0$. Then
\begin{equation}
\abs{\ip{u_j}{\Delta\theta_{\mathrm{soft}}}}
\le
\frac{\eta\abs{\ip{u_j}{\widehat g}}}{\widehat f_j+\gamma+r_j}
\le
\frac{\eta\abs{\ip{u_j}{\widehat g}}}{\gamma+r_j},
\label{eq:soft-qng-coordinate-bound}
\end{equation}
and, if $\widehat g=g+\xi$ with $\bbE(\xi\mid\mathcal P)=0$, then conditionally on $\mathcal P$
\begin{equation}
\bbE\Bigl(\abs{\ip{u_j}{\Delta\theta_{\mathrm{soft}}-\bbE(\Delta\theta_{\mathrm{soft}}\mid\mathcal P)}}^2\,\Big|\,\mathcal P\Bigr)
\le
\eta^2\frac{\Var(\ip{u_j}{\xi}\mid\mathcal P)}{(\widehat f_j+\gamma+r_j)^2}.
\label{eq:soft-qng-noise-bound}
\end{equation}
Thus a large metric penalty suppresses both the update amplitude and the shot-noise variance in that horizontal direction at rates $O(r_j^{-1})$ and $O(r_j^{-2})$. Under \eqref{eq:R-alpha-mapping}, this is the same monotone suppression as increasing the statistical soft-release penalty $\alpha_j$.
\end{proposition}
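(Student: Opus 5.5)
The argument is linear algebra on the horizontal space, carried out conditionally on the pilot information $\mathcal P$. Write $M=P_H\widehat F_SP_H+\gamma I_{\mathsf H}+R_\alpha$, viewed as an operator on $\mathsf H_{\theta,S}$. Both $P_H\widehat F_SP_H$ and $R_\alpha$ are positive semidefinite there, and $\gamma>0$, so $M\succeq\gamma I_{\mathsf H}$ is invertible on $\mathsf H_{\theta,S}$. The inverse in \eqref{eq:soft-horizontal-qng} is read as this restricted inverse, in the same way as for \eqref{eq:horizontal-qng}. The update is then $\Delta\theta_{\mathrm{soft}}=-\eta P_HM^{-1}P_H\widehat g$, and the leading projector gives $P_V\Delta\theta_{\mathrm{soft}}=P_VP_H(\cdot)=0$ at once. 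Faithfulness \eqref{eq:faithful-release} is what makes $P_H$ the correct horizontal projector, and it is the only place that assumption enters.

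For the coordinate bound, let $u_j$ be a common eigenvector of $P_H\widehat F_SP_H$ and $R_\alpha$. Since $u_j$ lies in the horizontal chart, we have $P_Hu_j=u_j$. Then $Mu_j=(\widehat f_j+\gamma+r_j)u_j$. Because $M$ is symmetric, $M^{-1}u_j=(\widehat f_j+\gamma+r_j)^{-1}u_j$. Self-adjointness of $P_H$ and $M^{-1}$ gives
\[
\ip{u_j}{\Delta\theta_{\mathrm{soft}}}=-\eta\ip{M^{-1}P_Hu_j}{\widehat g}=-\frac{\eta\ip{u_j}{\widehat g}}{\widehat f_j+\gamma+r_j}.
\]
This holds with equality, and the first inequality in \eqref{eq:soft-qng-coordinate-bound} follows by taking absolute values. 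The second inequality follows from $\widehat f_j\ge0$.

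For the variance bound, use the fact that $M$ and $R_\alpha$ are $\mathcal P$-measurable. The identity above is then linear in $\widehat g$ with a $\mathcal P$-measurable coefficient. Since $\bbE(\xi\mid\mathcal P)=0$, the centered coordinate is $-\eta\ip{u_j}{\xi}/(\widehat f_j+\gamma+r_j)$. Its conditional second moment is exactly the right-hand side of \eqref{eq:soft-qng-noise-bound}, so that bound also holds with equality.

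The $O(r_j^{-1})$ and $O(r_j^{-2})$ rates are read off these expressions. The final remark follows from \eqref{eq:R-alpha-mapping}, which gives $r_j=\gamma_\alpha\alpha_j$, increasing in $\alpha_j$. The only point that needs care is the common-eigenvector hypothesis together with the interpretation of the inverse on $\mathsf H$. One must check that $u_j$ is an eigenvector of the full operator $M$, including the $\gamma I_{\mathsf H}$ term, which is immediate, and that conditioning on $\mathcal P$ freezes every non-gradient quantity. Beyond that, no analytic obstacle remains.
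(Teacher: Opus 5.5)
Your proposal is correct and follows the paper's own argument. The leading projector removes the vertical component, and the damped horizontal operator acts on the common eigendirection $u_j$ as multiplication by $(\widehat f_j+\gamma+r_j)^{-1}$; conditioning on $\mathcal P$ freezes that scalar, so the noise bound follows by squaring. You are somewhat more explicit than the paper: you use self-adjointness to move the inverse onto $u_j$, and you note that both bounds are equalities when $g$ is fixed given $\mathcal P$. One caveat: an estimated $\widehat F_S$ need not be positive semidefinite, so take invertibility of the damped block as part of the update being well defined rather than deriving it from your bound $M\succeq\gamma I_{\mathsf H}$.
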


\begin{proof}
The leading projector in \eqref{eq:soft-horizontal-qng} gives $P_V\Delta\theta_{\mathrm{soft}}=0$. In the stated horizontal normal form, the inverse damped Fisher operator acts on $u_j$ by multiplication with $(\widehat f_j+\gamma+r_j)^{-1}$. Taking the $u_j$ component gives \eqref{eq:soft-qng-coordinate-bound}. Conditionally on $\mathcal P$ the eigenvalue $\widehat f_j$ is fixed, so applying the same scalar inverse to the noise component $\ip{u_j}{\xi}$ and squaring gives \eqref{eq:soft-qng-noise-bound}.
\end{proof}

The common-eigendirection hypothesis is the zero-coupling case of a general law. When $R_\alpha$ and the horizontal Fisher block do not commute, the exact coupling between a damped soft block and the remaining horizontal directions is the Schur-complement rule of \cite{ChenQNGQuotient2026}, and \cref{thm:soft-qng} is its restriction to an invariant subspace. The scale $\gamma_\alpha$ in \eqref{eq:R-alpha-mapping} may equally be set from the confidence rule of \cite{ChenQNGQuotient2026}---a scalar ridge calibrated on an independent pilot budget---in place of the shrinkage rule \eqref{eq:shrinkage-release}; \cref{prop:ridge-scale} records the dictionary between the two conventions.

An estimated horizontal projector introduces leakage proportional to its subspace error. The following inverse-norm formulation follows the argument of \cite{ChenQNGQuotient2026}; it states the estimated-block lower bound directly.

\begin{proposition}[Leakage under an estimated projector]
\label[proposition]{thm:estimated-projector}
Let $P_H$ be the true horizontal projector and $\widehat P_H$ an estimate with $\norm{\widehat P_H-P_H}_{\op}\le\varepsilon_P$, and suppose the true horizontal Fisher block has minimum eigenvalue $m>0$ and the estimated damped block satisfies $\lambda_{\min}(\widehat P_H\widehat F\widehat P_H+\gamma I_{\widehat H})\ge m+\gamma-\varepsilon_F>0$. Then the estimated gauge-projected update
\begin{equation}
\widehat\Delta
=-\eta\widehat P_H(\widehat P_H\widehat F\widehat P_H+\gamma I_{\widehat H})^{-1}\widehat P_H\widehat g
\label{eq:estimated-horizontal-update}
\end{equation}
has true vertical component
\begin{equation}
\norm{P_V\widehat\Delta}
\le
\eta\varepsilon_P\frac{\norm{\widehat P_H\widehat g}}{m+\gamma-\varepsilon_F},
\qquad
\bbE\norm{P_V\widehat\Delta}^2
\le
\eta^2\varepsilon_P^2\frac{G^2+\Tr\Sigma}{(m+\gamma-\varepsilon_F)^2}
\label{eq:estimated-vertical-leakage}
\end{equation}
The expectation bound assumes the projector and inverse-norm hypotheses hold almost surely, and $\bbE\norm{\widehat g}^2\le G^2+\Tr\Sigma$; the pathwise bound holds on their joint confidence event.
\end{proposition}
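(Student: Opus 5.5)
The idea is to write $\widehat\Delta=-\eta\widehat P_H\widehat M^{-1}\widehat P_H\widehat g$ with $\widehat M=\widehat P_H\widehat F\widehat P_H+\gamma I_{\widehat H}$, inverted on the range of $\widehat P_H$. Everything rests on one observation: $\widehat\Delta$ lies in the range of $\widehat P_H$, so $\widehat\Delta=\widehat P_H\widehat\Delta$. Applying the true vertical projector gives
\begin{equation*}
P_V\widehat\Delta=P_V\widehat P_H\widehat\Delta=(I-P_H)\widehat P_H\widehat\Delta=(\widehat P_H-P_H)\widehat P_H\widehat\Delta ,
\end{equation*}
and for the last equality I will use $P_H\widehat P_H\widehat\Delta=\widehat P_H\widehat\Delta-(\widehat P_H-P_H)\widehat P_H\widehat\Delta$. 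From this identity,
\begin{equation*}
\norm{P_V\widehat\Delta}\le\varepsilon_P\norm{\widehat\Delta}.
\end{equation*}
This is where the factor $\varepsilon_P$ enters, and it uses only the operator-norm hypothesis on the projector error.

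The next step is to bound $\norm{\widehat\Delta}$. On the range of $\widehat P_H$, the operator $\widehat M$ is Hermitian positive definite with smallest eigenvalue at least $m+\gamma-\varepsilon_F>0$, by hypothesis. Hence
\begin{equation*}
\norm{\widehat M^{-1}}_{\op}\le(m+\gamma-\varepsilon_F)^{-1}.
\end{equation*}
Since $\widehat P_H$ is an orthogonal projector with norm at most $1$, this gives
\begin{equation*}
\norm{\widehat\Delta}\le\eta\norm{\widehat P_H\widehat g}/(m+\gamma-\varepsilon_F).
\end{equation*}
Combining with the identity above yields the pathwise bound. The minimum eigenvalue $m$ of the true block is not used directly; only the stated lower bound on the estimated damped block matters. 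It is worth noting in the proof that $\varepsilon_F$ absorbs the perturbation from both $\widehat F$ and $\widehat P_H$.

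For the expectation bound, I square the pathwise inequality and use $\norm{\widehat P_H\widehat g}\le\norm{\widehat g}$. The hypotheses hold almost surely, so the deterministic denominator and the factor $\varepsilon_P^2$ can be pulled out of the expectation. It remains to apply $\bbE\norm{\widehat g}^2\le G^2+\Tr\Sigma$, which is the bias–variance split $\norm{g}^2+\bbE\norm{\xi}^2$ with $\norm{g}\le G$ and $\bbE\xi=0$.

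The main obstacle is not analytic. The difficulty is stating clearly that the inverse is taken on the estimated horizontal subspace, which is a different subspace from the true one, so the projector algebra has to be carried out with $\widehat P_H$ throughout. The rest is routine, since no independence between the Fisher and gradient shots is needed once the eigenvalue bound holds almost surely.
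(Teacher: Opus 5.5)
Your proposal is correct and matches the paper's proof in substance. The only difference is the factorization: the paper writes $P_V\widehat P_H=P_V(\widehat P_H-P_H)$ using $P_VP_H=0$, while you write $(I-P_H)\widehat P_H=(\widehat P_H-P_H)\widehat P_H$ using idempotence of $\widehat P_H$. Both give the factor $\varepsilon_P$, and the inverse-norm bound, squaring, and expectation steps then proceed exactly as in the paper.
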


\begin{proof}
Since $P_VP_H=0$, $P_V\widehat P_H=P_V(\widehat P_H-P_H)$ and hence $\norm{P_V\widehat P_H}\le\varepsilon_P$. Apply this to \eqref{eq:estimated-horizontal-update}, use the assumed inverse-norm bound, then square and take expectations.
\end{proof}

An exact projector removes instantaneous vertical shot noise identically; a fixed exact projector also gives zero cumulative vertical displacement \cite{ChenQNGQuotient2026}; \cref{thm:estimated-projector} says imperfect quotient identification degrades that guarantee continuously in the subspace angle. The perturbation bound in \cite{ChenQNGQuotient2026} replaces the hypothesis on $\lambda_{\min}$ by an explicit bound in the primitive radii, with denominator $m+\gamma-\delta_A$ for $\delta_A=\varepsilon_F+(2f_{\max}+f_{\min})\varepsilon_P+f_{\max}\varepsilon_P^2$ and numerator $\norm{g}+\varepsilon_g$. Since $\delta_A\ge\varepsilon_F$ that form is more conservative than \eqref{eq:estimated-vertical-leakage}, and it is the one to use when only $\norm{\widehat F-F}_{\op}$ and $\varepsilon_P$ are controlled, the hypothesis above being an assumption on the estimated block rather than a consequence of those radii.

\subsection{Identifying physical release directions}

A physical release direction must have positive Fisher length at the probing state. For pure states, $F(v,v)=4\Var(A_v)$ and a fidelity central difference provide the directional test with its bias bound \cite{ChenQNGQuotient2026}; for mixed states, use the symmetric-logarithmic-derivative Fisher Gram in \eqref{eq:gram}. The released Fisher kernel must then be recomputed. A known state-preserving action, or independent rank-and-gap information together with a valid spectral confidence event, determines which directions can be removed \cite{ChenQNGQuotient2026}. Damping acts on the resulting horizontal block through \eqref{eq:horizontal-qng}.

\subsection{End-to-end release certificate}
\label{subsec:end-to-end}

The release, validation and optimization statements proved separately above hold simultaneously, on one high-probability event and under one confidence budget. Stronger concentration bounds, cross-fitting schemes, or optimizer-specific descent estimates replace the displayed terms without changing the argument.

\begin{corollary}[End-to-end auditable release guarantee]
\label[corollary]{cor:end-to-end}
Run the release workflow with three independent streams $\mathcal D_{\fit},\mathcal D_{\rel},\mathcal D_{\valsplit}$. When the paired-copy gate is used, suppose its allocated release samples yield, simultaneously over every screened subgroup, an event
\begin{equation}
\abs{\widehat D_H^{\mathrm{task}}-D_H^{\mathrm{task}}}\le r_{D,H}
\label{eq:end-to-end-D-event}
\end{equation}
with probability at least $1-\delta_D$. Let $A=K-S_M-2G_N$ denote the penalized release matrix in the coordinate metric used for spectral selection, including Gram whitening when it is applied. Suppose the remaining release stream yields, with probability at least $1-\delta_{\rel}$, a simultaneous confidence event
\begin{equation}
\|\widehat A-A\|_{\op}\le r_A.
\label{eq:end-to-end-A-event}
\end{equation}
If paired-copy gating is bypassed, set \(\delta_D=0\) and omit the gate clause for that subgroup.
If the Gram matrix, shot penalty, or structural penalty is estimated, its perturbation is included in $r_A$ after the same whitening convention. All group and direction families tested by these events are fixed before their measurement batches, or covered uniformly. Suppose also that the validation stream satisfies the conditional simultaneous risk event in \cref{thm:post-selection-oracle} with probability at least $1-\delta_{\valsplit}$, and that the projector-estimation and inverse-norm events used in \cref{thm:estimated-projector} hold jointly over all reported training steps with probability at least $1-\delta_P$. Select release directions only when the corresponding eigenvalues of $\widehat A$ exceed $r_A$ and mark $[-r_A,r_A]$ as ambiguous. Train every selected structure by the gauge-projected QNG update \eqref{eq:horizontal-qng}, by the soft-deformed update \eqref{eq:soft-horizontal-qng}, or by the estimated-projector version \eqref{eq:estimated-horizontal-update}. Then, with probability at least $1-\delta_D-\delta_{\rel}-\delta_{\valsplit}-\delta_P-\delta_{\disc}$, where $\delta_{\disc}$ is the discovery failure probability charged in \cref{thm:post-selection-oracle} and may be dropped when the retained family is specified rather than discovered:
\begin{enumerate}
\item every gate-certified subgroup with \(\widehat D_H^{\mathrm{task}}-r_{D,H}>0\) has positive global task breaking mass, while \(\widehat D_H^{\mathrm{task}}+r_{D,H}<r_{\min}^2\) excludes Hilbert--Schmidt-normalized breaking gaps of size \(r_{\min}\); for operator-norm-bounded readouts use the rank-adjusted criterion of \cref{prop:operational-gap};
\item every opened spectral direction has positive true penalized quadratic gain in $A=K-S_M-2G_N$;
\item whenever all stationarity, smoothness, radius, and step-range conditions in \cref{thm:certified-release-descent} hold for a certified leading direction, the corresponding local step satisfies the descent bound \eqref{eq:certified-descent-drop};
\item if both the positive and negative spectra are separated from zero by more than $2r_A$, its release dimension is recovered and its subspace error obeys the Davis--Kahan bound \eqref{eq:dk-release};
\item the final selected structure satisfies the post-selection oracle inequality \eqref{eq:post-selection-oracle};
\item for each estimated-projector step the pathwise leakage bound holds; on the joint projector event its accumulated squared local leakage satisfies
\begin{equation}
\sum_{t=1}^T\|P_{V,t}\widehat\Delta_t\|^2
\le\sum_{t=1}^T\eta_t^2\varepsilon_{P,t}^2
\frac{\|\widehat P_{H,t}\widehat g_t\|^2}{(m_t+\gamma_t-\varepsilon_{F,t})^2}.
\label{eq:cumulative-leakage}
\end{equation}
This is a sum of local leakage energies. An endpoint-displacement statement requires a fixed projector or a specified transport between changing tangent spaces.
\end{enumerate}
\end{corollary}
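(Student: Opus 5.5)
The plan is to work on one intersection event and read each clause off an earlier result deterministically. I define $\mathcal E_D$ as the simultaneous gate event \eqref{eq:end-to-end-D-event}, with $\mathcal E_D$ taken to be the sure event when gating is bypassed. I define $\mathcal E_A$ as the spectral event \eqref{eq:end-to-end-A-event}, $\mathcal E_{\valsplit}$ as the conditional validation event of \cref{thm:post-selection-oracle}, $\mathcal E_P$ as the joint projector and inverse-norm event over all training steps, and $\mathcal E_{\disc}$ as the discovery event. A union bound gives $\bbP(\mathcal E_D\cap\mathcal E_A\cap\mathcal E_{\valsplit}\cap\mathcal E_P\cap\mathcal E_{\disc})\ge1-\delta_D-\delta_{\rel}-\delta_{\valsplit}-\delta_P-\delta_{\disc}$. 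This needs no independence among the events. Independence is used only inside the individual results: each batch is applied to families fixed beforehand (or covered uniformly), and the validation stream is conditionally independent of $\mathcal D_{\fit}\cup\mathcal D_{\rel}$. I will state explicitly that $\mathcal E_{\valsplit}$ has conditional probability at least $1-\delta_{\valsplit}$, so integrating over the prevalidation streams gives an unconditional bound.

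On this event the clauses follow in order.

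\emph{Clause 1.} On $\mathcal E_D$, $D_H^{\mathrm{task}}\ge\widehat D_H^{\mathrm{task}}-r_{D,H}>0$. Conversely, $D_H^{\mathrm{task}}\le\widehat D_H^{\mathrm{task}}+r_{D,H}<r_{\min}^2$, so \cref{lem:breaking-mass-duality} bounds every Hilbert--Schmidt-normalized gap by $\sqrt{D_H^{\mathrm{task}}}<r_{\min}$. \cref{prop:operational-gap} then supplies the rank-adjusted bound $\sqrt{r_H u_H}$.

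\emph{Clause 2.} This is \cref{cor:spectral-certificate}(1) with $r_\delta=r_A$. Weyl's inequality turns $\lambda_j(\widehat A)>r_A$ into $\lambda_j(A)>0$. To get a per-direction statement, I will also note that $\widehat v^\top A\widehat v\ge\widehat\lambda_j-r_A>0$ for the selected unit eigenvector $\widehat v$.

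\emph{Clause 3.} This is \cref{thm:certified-release-descent} applied on $\mathcal E_A$. Here I must check that the theorem's hypothesis is really a radius on $\widehat K-K$. Because $S_M$ and $G_N$ enter both $A$ and $\widehat A$, a radius on $\widehat A-A$ is equivalent to one on $\widehat K-K$, unless the penalties are estimated. The statement places their perturbation in $r_A$, and I will verify that the proof of \eqref{eq:certified-descent-gap} only uses the bound $\widehat v_1^\top A\widehat v_1\ge\widehat v_1^\top\widehat A\widehat v_1-r_A$, together with the fact that the true penalties are positive semidefinite.

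\emph{Clause 4.} Separation by more than $2r_A$ gives $r_A<\gamma_+/2$. \cref{thm:subspace-stability} with $\epsilon=r_A$ then yields the correct dimension and \eqref{eq:dk-release}.

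\emph{Clause 5.} The realized family is measurable with respect to the fitting and release streams, so \cref{thm:post-selection-oracle} applies verbatim on $\mathcal E_{\valsplit}$. On $\mathcal E_{\disc}$ it also certifies the discovery interpretation.

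\emph{Clause 6.} For each step $t$, apply \cref{thm:estimated-projector} pathwise on $\mathcal E_P$, square the bound, and sum over $t$ to obtain \eqref{eq:cumulative-leakage}. The exact-projector updates \eqref{eq:horizontal-qng} and \eqref{eq:soft-horizontal-qng} have zero leakage, so they satisfy the bound trivially.

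The main obstacle is the simultaneity of the measurement-side events under adaptivity. The training trajectory, and hence the projectors and step count, depend on earlier shots. Likewise, the subgroups screened by the gate may have been chosen from data. I will handle this by requiring, as the statement does, that $\delta_P$ is a bound for the joint event over all reported steps. Concretely, this means a uniform or sequential union bound with per-step budgets $\delta_{P,t}$ summing to $\delta_P$, each drawn on fresh shots conditional on the past. The gate event is similarly required to hold over all screened subgroups. With these conventions stated, every clause is a deterministic consequence on the intersection event, and the proof reduces to the union bound above.
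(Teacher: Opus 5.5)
Your proposal is correct and follows the paper's proof: intersect the gate, release, validation, projector (and discovery) events by a union bound, then read each clause off the earlier results (breaking-mass duality, Weyl plus the Rayleigh-quotient bound $v^\top A v\ge v^\top\widehat A v-r_A$, certified descent, Davis--Kahan, the post-selection oracle, and the per-step leakage bound summed over steps). Your extra checks are refinements the paper leaves implicit: that a radius on $\widehat A-A$ together with positive semidefinite true penalties suffices for the descent theorem when penalties are estimated, and that integrating the conditional validation probability yields the unconditional bound.
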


\begin{proof}
The gate claim follows from \eqref{eq:end-to-end-D-event} and the duality \eqref{eq:breaking-mass-duality}. For each selected unit eigenvector $v$, $v^\top Av\ge v^\top\widehat Av-r_A>0$ on \eqref{eq:end-to-end-A-event}; Weyl's inequality additionally controls the positive eigenvalue count. The descent statement is \cref{thm:certified-release-descent}. The subspace claim is \cref{thm:subspace-stability}. The oracle claim is \cref{thm:post-selection-oracle} after conditioning on $\mathcal D_{\fit}\cup\mathcal D_{\rel}$. The leakage statement is \cref{thm:estimated-projector} applied at each training step and summed. A union bound over the gate, release, validation, and projector-estimation events gives the displayed probability.
\end{proof}

\section{Finite-Sample Releasable Equivariant Learning Algorithm}
\label{sec:algorithm}

The algorithm fixes the order in which the gate, the spectral test and the validation comparison are allowed to see the data, since that order is what keeps the final margin valid.

\begingroup
\captionsetup{hypcap=false}
\par\medskip\hrule\smallskip
\captionof{algorithm}{Statistical symmetry release with independent validation}
\label{alg:rsrmqng}
\begin{algorithmic}[1]
\Require Independent fitting, release, and validation streams; a breaking dictionary; measurement and confidence budgets; a declared readout norm and relevance level $r_{\min}$.
\State Specify the finite subgroup family $\mathscr H$, or discover it before release probing using \cite{CompanionDiscovery}, with kernel task validation \cite{Gretton2012} when classical samples are available. Allocate $\delta_{\disc}$ when discovery is used.
\For{$H\in\mathscr H$}
  \State Fit the hard backbone on $\mathcal D_{\fit}$ and store it as a candidate.
  \If{paired-copy gating is predeclared}
    \State Estimate $D_H^{\mathrm{task}}$ with radius $r_{D,H}$ by \cref{thm:twirl-swap-gate}.
    \State Set $c_H=1$ for unit Hilbert--Schmidt readouts; for unit operator-norm readouts use a certified rank bound as in \cref{prop:operational-gap}, or $c_H=\dim\cH$.
    \If{$c_H(\widehat D_H^{\mathrm{task}}+r_{D,H})<r_{\min}^2$}
      \State Record retention at the declared signal scale and continue to the next subgroup.
    \EndIf
    \State Use sector intervals to prioritize the dictionary; follow the predeclared local fallback when the gate is ambiguous.
  \EndIf
  \State Remove commutant components and normalize the breaking dictionary by its chosen Gram matrix. Freeze the probing backbone.
  \State Estimate the loss-corrected curvature $\widehat K$ and, when needed, the first-order benefit, with simultaneous radii that include sampling and truncation errors.
  \If{first-order stationarity is established}
    \State Form $\widehat A=\widehat K-S_M-2G_N$ in the certified coordinate metric. Set $S=\operatorname{span}\{B(v_j):\widehat\lambda_j>r_A\}$ for its eigenpairs $(\widehat\lambda_j,v_j)$. If $S=\{0\}$, continue to the next subgroup; treat $[-r_A,r_A]$ as ambiguous.
    \State Initialize one certified direction by \cref{thm:certified-release-descent} when its step-range conditions hold.
  \Else
    \State Compute the maximizing step $\widehat\beta$ in \cref{thm:first-order-safe-release}. If its certificate fails, continue to the next subgroup; otherwise set $S=\operatorname{span}\{B(\widehat\beta)\}$ and initialize at $\widehat\beta$.
  \EndIf
  \State Construct the predeclared soft-amplitude candidates. Recompute each released Fisher kernel and horizontal projector, then train using \eqref{eq:horizontal-qng} or \eqref{eq:soft-horizontal-qng}.
  \State Store the trained predictors and their prevalidation costs. A later curvature kick requires a certificate at its new backbone.
\EndFor
\State Evaluate all stored predictors on $\mathcal D_{\valsplit}$ and select by \cref{thm:post-selection-oracle}. Validation does not feed back into candidate generation.
\State Return the selected $(H,S,\alpha)$, its signal and release intervals, risk bound, and geometric diagnostics.
\end{algorithmic}
\smallskip\hrule\medskip
\endgroup

At a release-stationary point, the spectral rule compares estimated quadratic benefit with the declared costs and a simultaneous uncertainty margin. For a unit direction $v$ in the certified Gram-whitened coordinates, the sufficient condition is
\begin{equation}
\underbrace{v^\top \widehat K v}_{\text{release benefit}}
>
\underbrace{v^\top S_M v}_{\text{activation cost}}
+
\underbrace{2v^\top G_N v}_{\text{structural penalty}}
+
\underbrace{r_\delta}_{\text{multiple-testing margin}}.
\label{eq:decision-one-line}
\end{equation}
Equivalently, the spectral implementation opens the positive eigenspace of \(\widehat K-S_M-2G_N-r_\delta I\), or of its Gram-whitened version. This form avoids treating noncommuting penalty matrices as if their eigenvalues could be compared coordinatewise.

\section{Numerical Experiments}
\label{sec:numerics}

The experiments examine global signal detection, finite-shot direction selection, and geometric training. Quantum-state simulations evaluate measurement costs directly; Gaussian and regression models isolate the statistical mechanisms. The numerical models, seeds, and measurement budgets are specified in Appendix~\ref{app:repro}.
\FloatBarrier

\subsection{What a hard symmetry hides, and the quantum-native gate}
\label{subsec:exp-hidden}

The spin-flip example isolates information erased by a hard readout constraint. The paired-copy and sector experiments then measure that information under different access models.

\subsubsection{Exact twirling blind-spot classification}
The first experiment tests \cref{cor:binary-release-separation}. Let $H=\{I,X^{\otimes n}\}$ and
\begin{equation}
\rho_y=\left(\frac{I+ymZ}{2}\right)^{\otimes n},\qquad y\in\{+1,-1\}.
\end{equation}
The two class-conditional states have the same $H$-twirl, so every hard spin-flip-invariant measurement is at chance for the sign label. The released observable is the normalized magnetization $A=M_z/n$. We simulate $n=6$ qubits, $M$ independent state copies per test item, and direct $Z$ measurements of all qubits. The horizontal axis in \cref{fig:twirling-blindspot-classification} is the released signal $m\sqrt{nM}$, the scale controlling the empirical magnetization sign test. Hard invariant readouts remain at accuracy $1/2$, while the released sector rapidly approaches perfect classification once the order parameter is resolved above shot noise.

\begin{figure}[!htbp]
\centering
\includegraphics[width=0.74\linewidth]{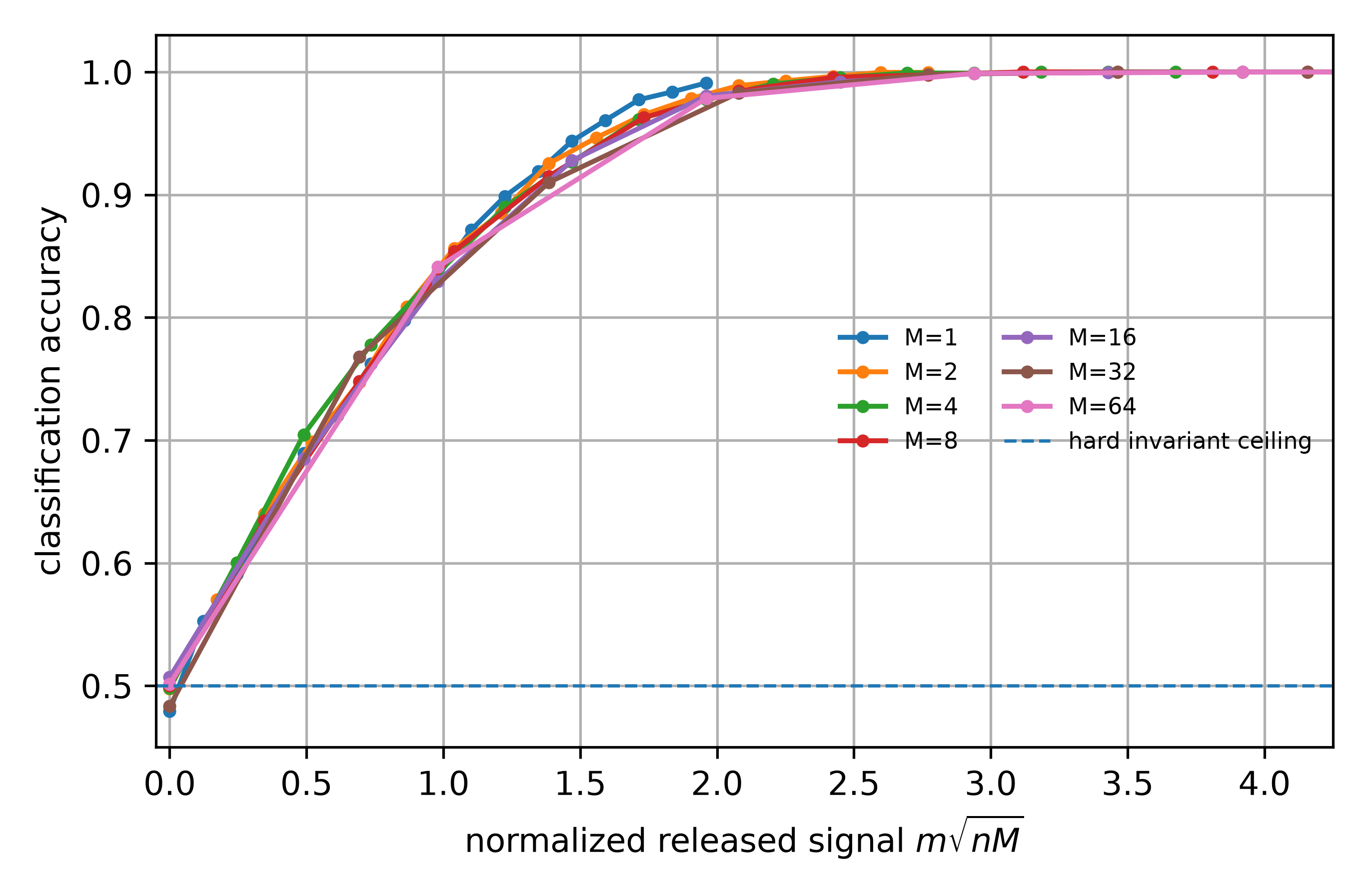}
\caption{Exact twirling blind-spot classification. The two states have identical spin-flip twirls, so hard invariant measurements have chance-level accuracy. Releasing the odd order parameter $M_z/n$ recovers the label once the normalized shot signal $m\sqrt{nM}$ is of constant order.}
\label{fig:twirling-blindspot-classification}
\end{figure}

\subsubsection{Quantum-native gate, sector spectroscopy, and the point-null boundary}
We now compare measurement-access models at the same total budget of \(B=2000\) state copies. For \(H=\{I,X^{\otimes n}\}\), both estimators target the same state breaking mass \(D_H\). The twirl--swap estimator uses \(B/4\) pairs in each arm. Its mean absolute error remains between \(0.045\) and \(0.059\) from \(d=4\) to \(d=128\). A single-copy local-randomized-measurement estimator uses fifty random product bases and twenty shots per state in each basis. Its mean error rises from \(0.013\) to \(0.127\) and crosses the paired-copy error near \(d=32\). This comparison exhibits a dimension crossover for these two state-overlap estimators at equal total copy count; the calibrated Gaussian crossover in \cref{sec:quantum-native} concerns a different local observation model.

For \(H=S_3\) acting by qubit permutations on three qubits, one shared set of \(200{,}000\) twirled SWAP outcomes estimates both nontrivial sectors. The exact and estimated masses are \(0.01011\) and \(0.00987\) in the sign sector, and \(0.06614\) and \(0.06342\) in the standard sector. The empirical per-sample variances are \(0.9999\) and \(3.9955\), below the respective \(d_\lambda^2\) bounds \(1\) and \(4\).

The final panel tests the full Pauli scan in \cref{prop:pauli-point-null}, including all \(d^2-1=255\) statistics and an independently simulated null. At \(d=16\), \(D_H=r^2=1/64\), and \(1989\) used copies from the nominal budget of \(2000\), the empirical false-release probability is \(0.003\) and power is \(0.8695\). Under the same total-copy budget, the twirl--swap signal-to-noise ratio is only \(0.247\); the conservative bound \eqref{eq:twirl-swap-state-error} requires \(665{,}214\) pairs per arm. The twirl--swap gate buys robustness to an unknown symmetric backbone, and a structured point null is where a measurement acting on one copy at a time can beat it.

\begin{figure}[!htbp]
\centering
\includegraphics[width=0.98\linewidth]{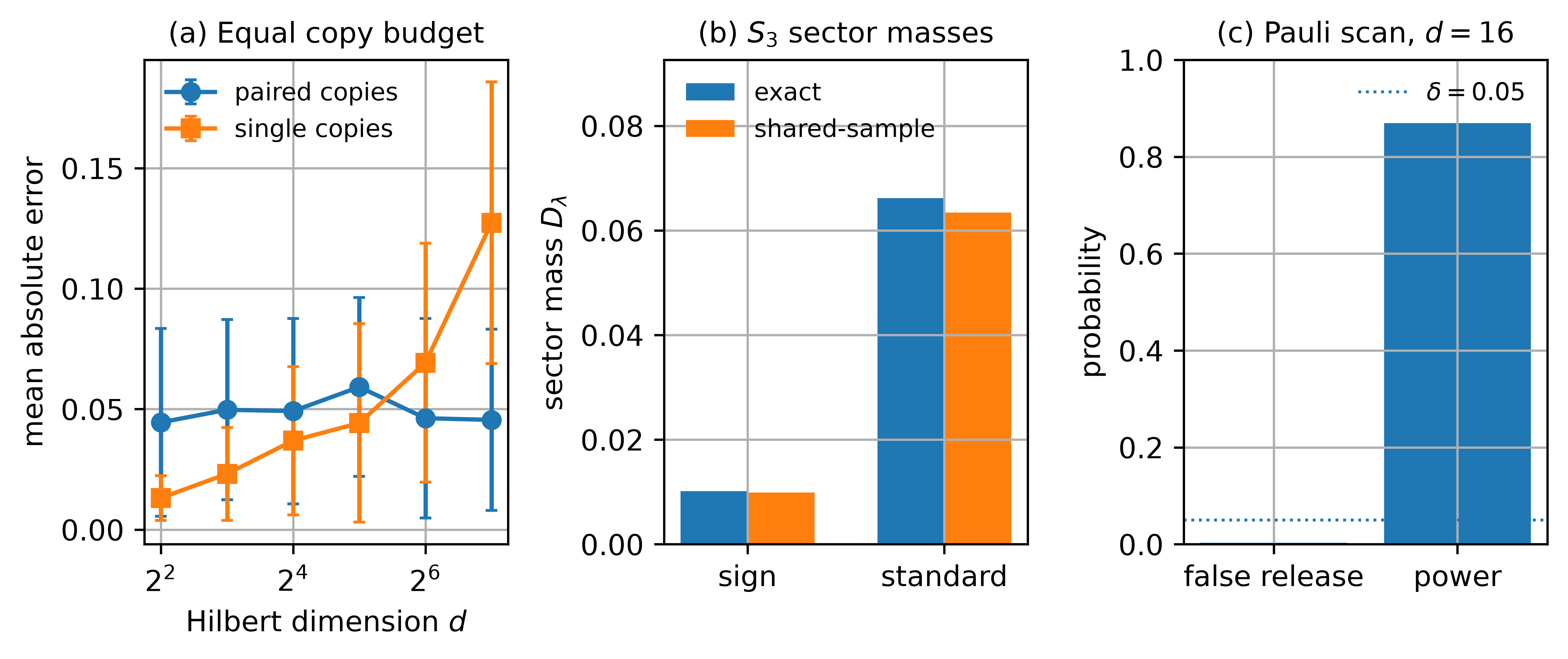}
\caption{Quantum-native detection mechanisms. (a) At equal total copy budget, the two-copy twirl--swap error is stable with Hilbert dimension, while this single-copy randomized estimator degrades. (b) Character reweighting of one shared sample set recovers both nontrivial \(S_3\) sector masses. (c) The complete nonadaptive Pauli scan has controlled false release and high power at a point null where the dimension-free twirl--swap estimator is not instance-optimal.}
\label{fig:quantum-native-detection}
\end{figure}

\FloatBarrier

\subsection{Detection boundaries}
\label{subsec:exp-boundaries}

Gaussian sequence simulations compare the dense norm threshold with the sparse dictionary threshold at controlled false-release levels.

\subsubsection{Local release detection boundary}
The dense simulation uses $Z=\mu+\mathcal N(0,I_k)$, $k\in\{2,4,8,16,32\}$, and the norm test \eqref{eq:norm-test} at $\delta=0.05$. Plotting power against $\|\mu\|^2/\sqrt{k}$ compares the dimensions on the scale in \cref{thm:detection-boundary}. Power rises as this normalized signal exceeds a constant-order threshold. The simulation illustrates the calibrated test; the mixture argument establishes the lower bound for all tests.

\begin{figure}[!htbp]
\centering
\includegraphics[width=0.74\linewidth]{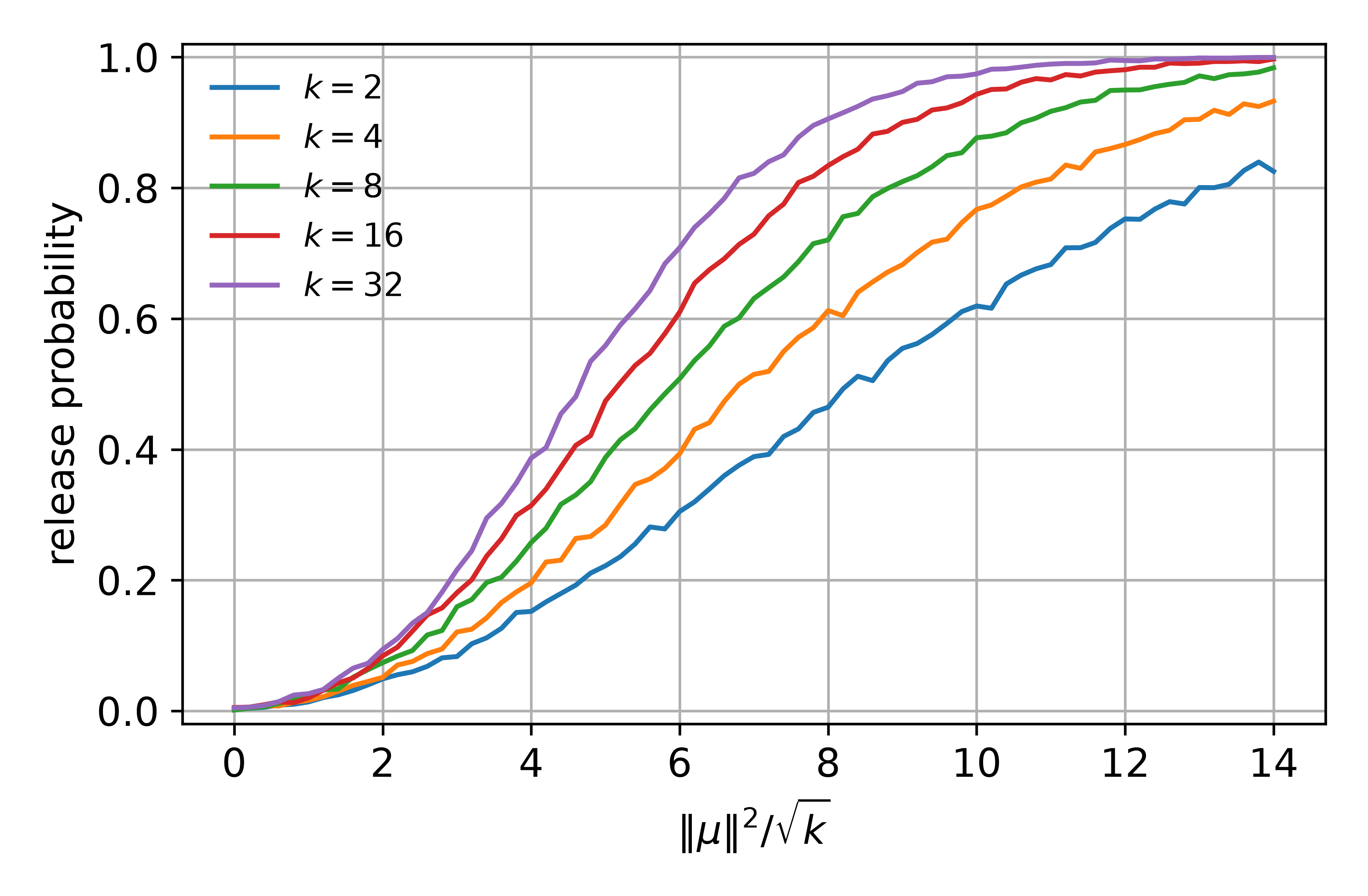}
\caption{Gaussian local release detection boundary. The release test is calibrated at false-release level $\delta=0.05$. Power is controlled by the normalized signal $\|\mu\|^2/(\tau^2\sqrt{k})$, consistent with \cref{thm:detection-boundary}.}
\label{fig:detection-boundary}
\end{figure}

\subsubsection{Sparse dictionary search and direction recovery}
The previous boundary assumes the branch is known; we now test the unknown-direction boundary in \cref{thm:sparse-dictionary-boundary} with a sparse spiked model. There are \(p=64\) candidate breaking directions and a single true index \(j_\star\). The observed statistic is
\begin{equation}
Y_j=\mu\mathbf 1\{j=j_\star\}+\sigma M^{-1/2}Z_j,
\qquad Z_j\sim\mathcal N(0,1).
\end{equation}
The test releases if \(\max_jY_j>\sigma\sqrt{2\log(p/0.05)/M}\). The horizontal axis is the normalized signal \(a=\mu\sqrt M/(\sigma\sqrt{\log p})\). \Cref{fig:spiked-boundary} shows that both release and correct-direction recovery become reliable only once this normalized signal is of constant order, confirming the \(\sqrt{\log p/M}\) price of searching over a dictionary.

\begin{figure}[!htbp]
\centering
\begin{subfigure}{0.48\linewidth}
\centering
\includegraphics[width=\linewidth]{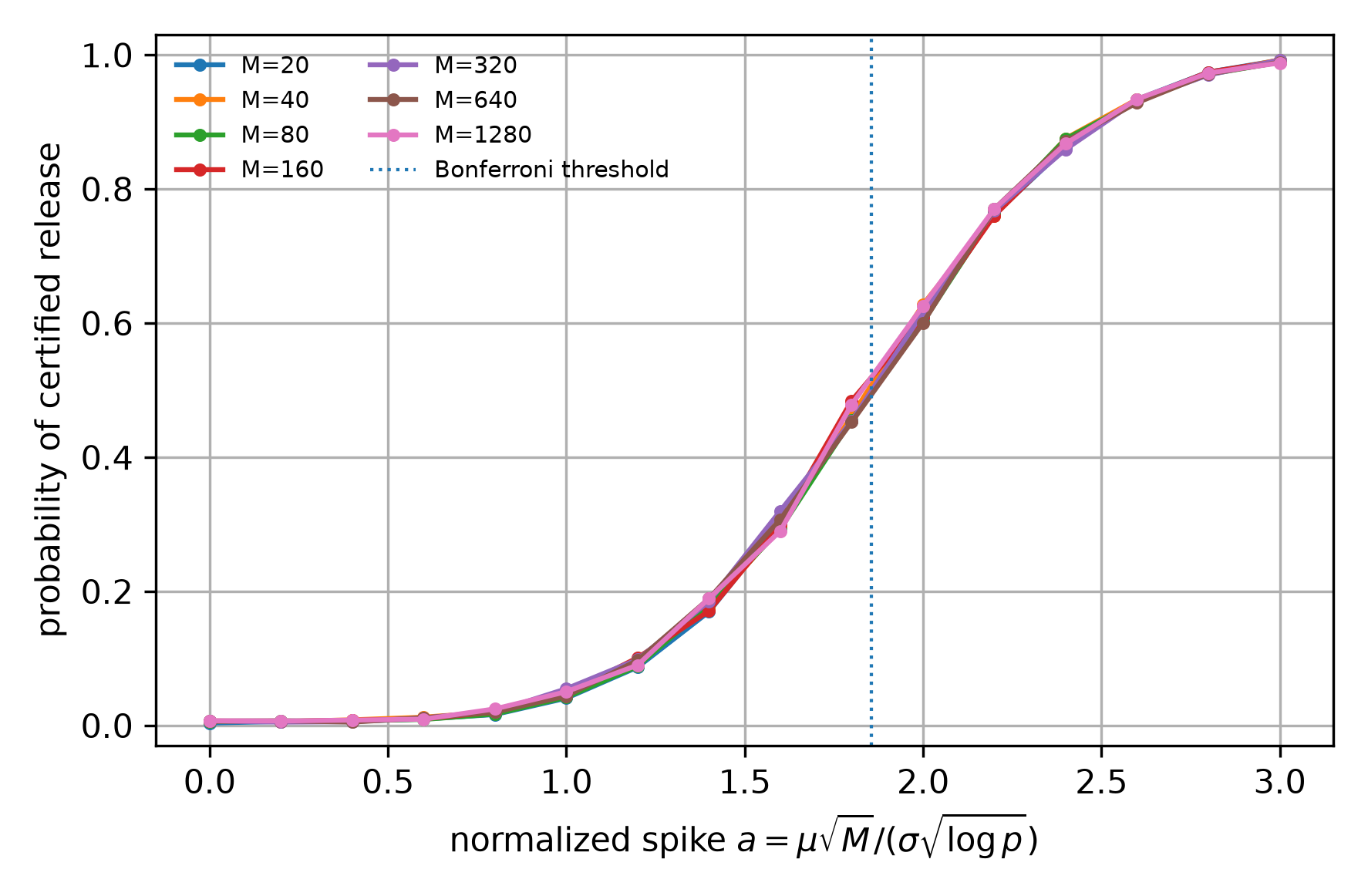}
\caption{Release probability.}
\end{subfigure}\hfill
\begin{subfigure}{0.48\linewidth}
\centering
\includegraphics[width=\linewidth]{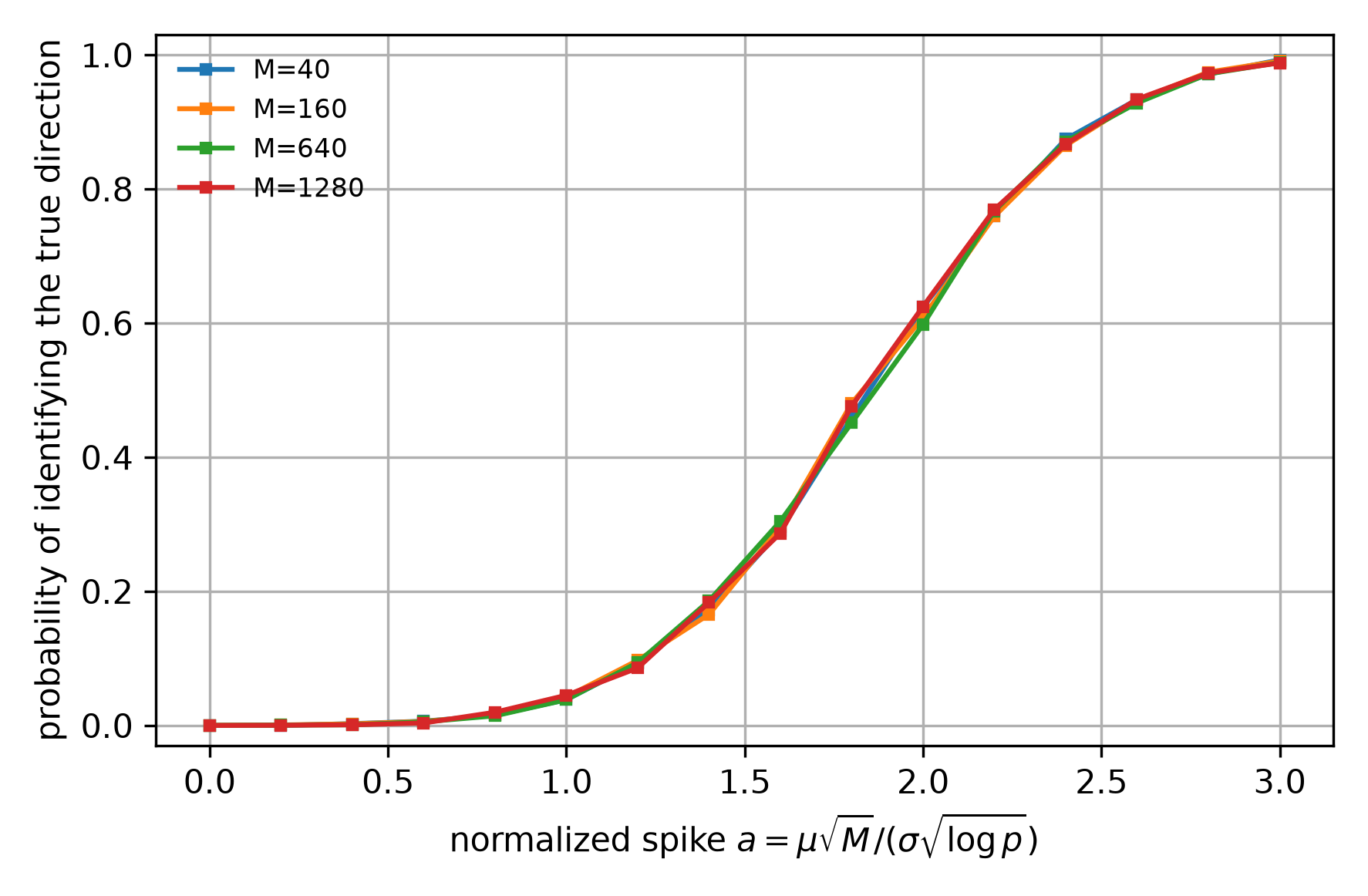}
\caption{Correct direction recovery.}
\end{subfigure}
\caption{Spiked detection boundary for an unknown release direction. Below the \(\sigma\sqrt{\log p/M}\) scale, release and direction recovery are both unreliable; above it, both rise sharply.}
\label{fig:spiked-boundary}
\end{figure}

\FloatBarrier

\subsection{Ising release curvature and executable certification}
\label{subsec:exp-ising}

The Ising model distinguishes physical symmetry-breaking fluctuations from loss-dependent release curvature and supports a direct comparison of the two measurement modes.

\subsubsection{Task-relative release curvature across the Ising transition}
We exactly diagonalize the periodic transverse-field Ising Hamiltonian \cite{Pfeuty1970,Sachdev2011}
\begin{equation}
H(h)=-\sum_{i=1}^L Z_iZ_{i+1}-h\sum_{i=1}^L X_i,
\label{eq:tfim}
\end{equation}
for $L\in\{6,8,10,12,14\}$ and $h\in\{0.5,1.0,1.5\}$. The model has a global $\mathbb Z_2$ spin-flip symmetry $P=\prod_iX_i$. In finite symmetric ground states, $\langle M_z\rangle=0$, but the order-parameter quantum Fisher information (QFI) signal \cite{CamposVenutiZanardi2007}
\begin{equation}
F_{M_z}=4\Var(M_z),\qquad M_z=\sum_iZ_i,
\end{equation}
measures order-parameter fluctuations in a pure ground state. The corresponding finite-size diagnostic is reported in \cite{CompanionDiscovery}; the calculations here also include $L=6$. Release is determined by the chosen loss and generator, even when this fluctuation scale is large.

For a task-relative release curvature, take $C=-(M_z/L)^2$ and $B=M_y/2$. A $y$-rotation gives
\begin{equation}
\kappa_B=-\cL''(0)=\frac{2}{L^2}\left(\langle M_x^2\rangle-\langle M_z^2\rangle\right).
\label{eq:ising-kappa}
\end{equation}
\Cref{fig:tfim-release} shows that the curvature is positive on the disordered side, negative on the ordered side, and crosses near criticality. Thus release curvature is a local task-improvement test, not a stand-alone phase label.

\begin{figure}[!htbp]
\centering
\includegraphics[width=0.74\linewidth]{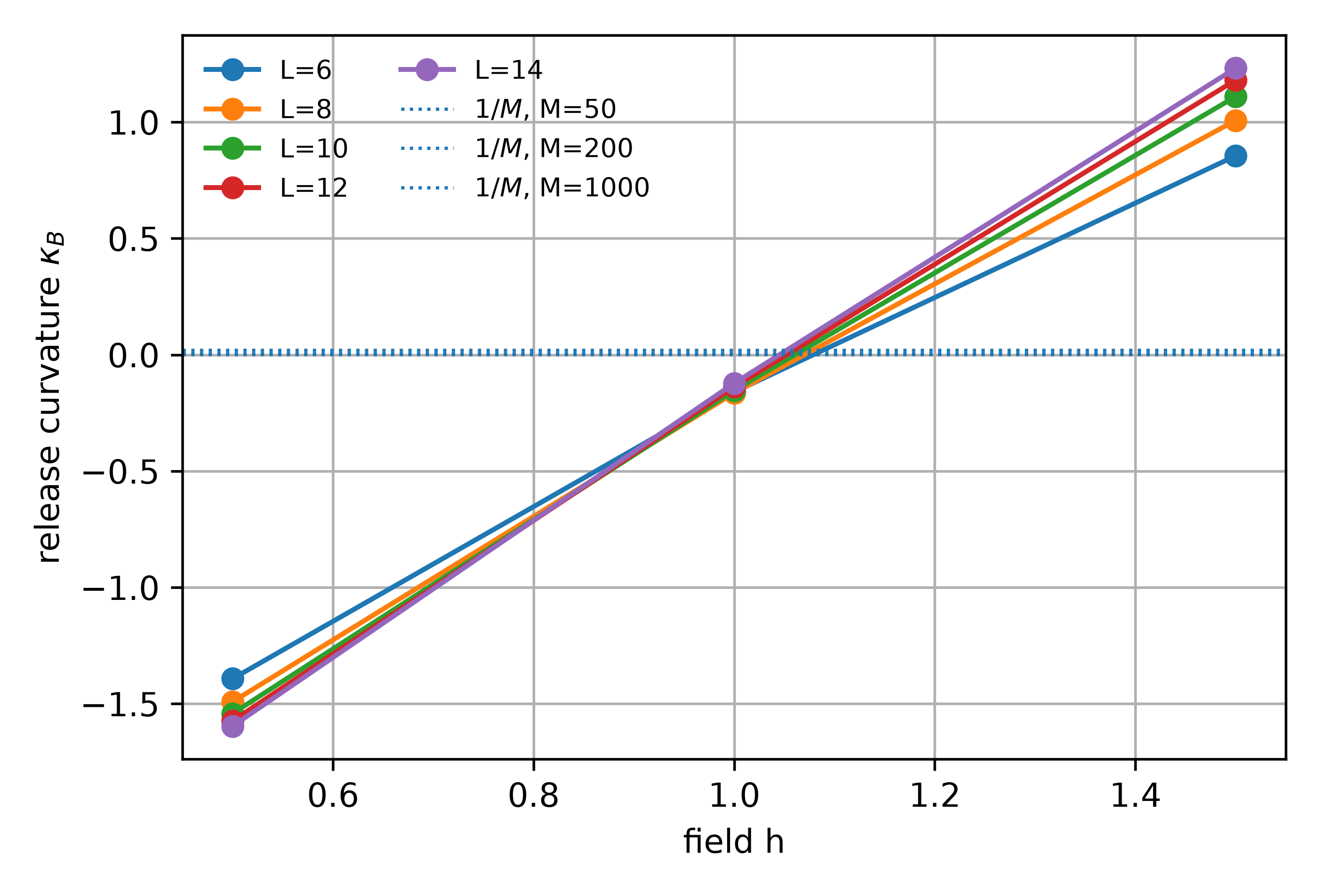}
\caption{Task-relative release curvature for $C=-(M_z/L)^2$ and $B=M_y/2$. Positive curvature certifies a local release improvement only after finite-shot and structural penalties are subtracted.}
\label{fig:tfim-release}
\end{figure}

\subsubsection{Executable certification on an Ising state}
\label{subsec:executable-certification}

The same Ising state provides a direct test of the estimators in \cref{subsec:shot-estimation,app:fd}. All decision radii use measured records and known operators; exact state information is used afterward to evaluate coverage and descent.

The setting is the $L=8$ periodic transverse-field Ising ground state $\rho_0(h)$ of \eqref{eq:tfim} with retained group $H=\mathbb Z_2$. The validation observable is $C=-(M_z/L)^2$, which is $\mathbb Z_2$-invariant and read out from a single $Z$-basis shot, so one loss evaluation costs one shot. The breaking dictionary has $p=6$ translation-invariant, spin-flip-odd, generators orthonormal for the normalized trace, $\Tr(B_aB_b)/2^L=\delta_{ab}$, of Pauli weight at most two: $M_y$, $M_z$, and the four combinations $\sum_i(X_iY_{i+1}\pm Y_iX_{i+1})$ and $\sum_i(X_iZ_{i+1}\pm Z_iX_{i+1})$. Exact diagonalization gives the population curvature matrix $K$; its spectrum is $(1.948,0.097,0,0,-0.16,-0.22)$ at $h=1.5$, $(0.060,0.052,0,0,-0.67,-0.78)$ at $h=1.0$, and $(0.0137,0,0,-0.70,-1.55,-2.79)$ at $h=0.5$. The two near-zero eigenvalues belong to the antisymmetric combinations $\sum_i(X_iY_{i+1}-Y_iX_{i+1})$ and $\sum_i(X_iZ_{i+1}-Z_iX_{i+1})$, whose curvature is below $10^{-4}$ at every field: these are unresolved directions at the measured budgets of \cref{cor:spectral-certificate}(iii), and they occur here without being engineered. The commuting generator $M_z$ behaves differently. Since $[M_z,C]=0$ its diagonal entry vanishes, but it stays coupled off-diagonally to $\sum_i(X_iZ_{i+1}+Z_iX_{i+1})$ at strength $0.15$ to $0.20$, and at $h=0.5$ it is in fact the leading positive mode: the unit coordinate eigenvector of $\lambda_1=0.014$ is approximately $-0.996\,e_{M_z}-0.093\,e_{XZ+ZX}$, where the coordinates refer to the normalized dictionary generators. A zero diagonal entry is therefore not a zero of the release matrix, which is why the certificate is applied to the spectrum and not entrywise.

For scalar probes, each of the $q=21$ polarization directions uses $M$ shots at $0$ and at $\pm\delta_\beta$. The statistical part is a simultaneous two-sided empirical Bernstein interval over all $3q$ loss estimates. The truncation part is computed from the known operators through \eqref{eq:fd-operator-bias}, without the ground state or the population curvature. Their polarization envelope supplies one radius for both release selection and descent. Here $S_M=M^{-1}I$ and $G_N=0$ are declared surrogate costs, and $L_3=8\|B(\widehat v_1)\|_{\op}^3\|C\|_{\op}$ supplies the step bound. The state-independent matrix bias bounds are $0.1252$, $0.7588$, and $1.8375$ for $\delta_\beta=0.1,0.25,0.4$, respectively; the smaller realized biases $0.0290,0.1748,0.4200$ at $h=1.5$ are used only for retrospective comparison.

Across $200$ repetitions at each of $45$ scalar settings, every operator error lies within its radius and no selected direction has nonpositive true penalized gain. At $h=1.5$, all $200$ repetitions certify the leading direction at $M=10^7$ for $\delta_\beta=0.1$, and at $M=10^6,10^7$ for $\delta_\beta=0.25$. The descent bound holds in all $600$ asserted steps. The $\delta_\beta=0.4$ grid gives no certificate: its smaller sampling variance is offset by the larger state-independent bias bound. At $h=1.0$ and $0.5$, the tested budgets certify no release. These are finite experimental checks of coverage and descent, alongside the distributional guarantees proved above.

For matrix-element estimation, the $21$ curvature observables contain $1248$ distinct Pauli strings of weight at most four. Their maximum coefficient $\ell_1$ norm is $11.0$, their weighted-sum shadow-norm upper bound is $78.0$, and the maximum deterministic snapshot bound in \eqref{eq:deterministic-shadow-range} is $103.5$. One shared bank estimates every entry, using the empirical variances and deterministic ranges in \cref{prop:empirical-curvature-certificate}. At $h=1.5$, the mean radius is $5.842$ at $10^3$ snapshots, which certifies no direction; it falls to $0.883$ at $10^4$ snapshots and $0.183$ at $10^5$, and every one of the $20$ repetitions at each of these latter budgets certifies a leading direction. The corresponding mean operator errors are $0.377$, $0.106$, and $0.026$.

On the tested grids, shared shadows therefore certify release at $10^4$ total shots, while the scalar mode at $\delta_\beta=0.25$ first does so at $6.3\times10^7$ total shots, a factor of $6300$. All $280$ repetitions in the mode comparison satisfy their confidence radii and produce zero false releases. The comparison prices these specified estimators and bias bounds; it does not optimize all measurement strategies. Low-weight commutators explain the advantage of shared records here. A nonlocal dictionary can change that ordering, and the scalar method remains available when direct commutator measurement is expensive.

\begin{figure}[!htbp]
\centering
\includegraphics[width=0.98\linewidth]{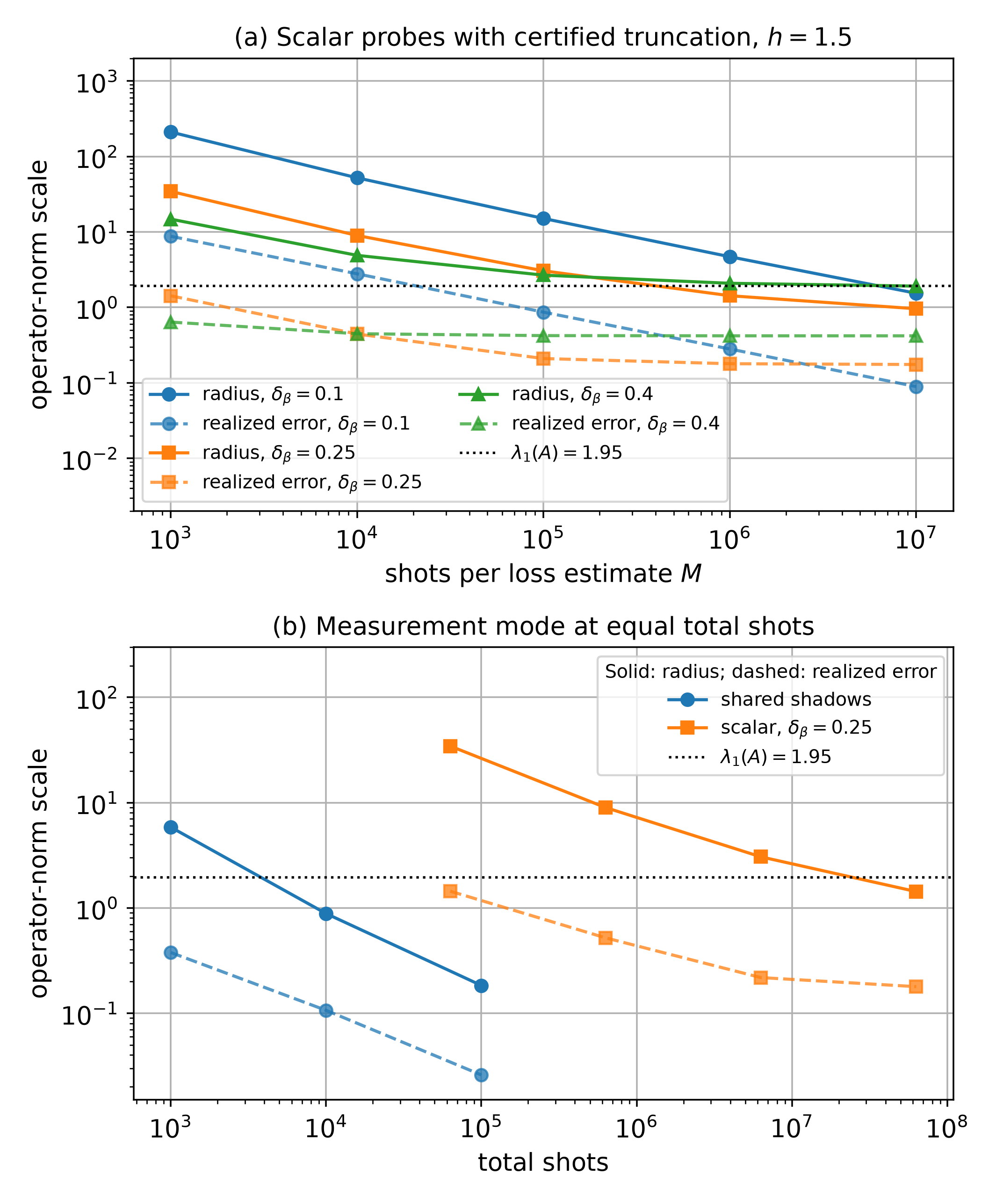}
\caption{Certified release on the $L=8$ Ising ground state at $h=1.5$. (a) Scalar-probe radii include state-independent truncation bounds; dashed curves show realized errors against the exact curvature. (b) Shared-shadow and scalar-probe radii versus total shots. Every decision uses the empirical eigenvalue and its radius; the dotted population eigenvalue is only a reference scale. On the tested grids the first fully successful budgets differ by a factor of $6300$.}
\label{fig:executable-certification}
\end{figure}

\FloatBarrier

\subsection{Release selection with finite data and finite shots}
\label{subsec:exp-selection}

Synthetic curvature matrices isolate estimation and subspace selection. Regression examples examine the corresponding capacity cost and the protection supplied by independent validation.

\subsubsection{Multi-branch release matrix and spectral selection}
We construct an $8\times8$ population curvature matrix with two positive eigenvalues after the declared $0.08I$ penalty and six negative penalized modes. The observed matrix is $\widehat K=K+W$, where the symmetric noise standard deviation scales as $M^{-1/2}$. The selection rule releases the positive spectrum of $\widehat K-0.08I-r_MI$, with the aggregate deterministic quadratic penalty $0.08I$ and the illustrative margin $r_M=0.9M^{-1/2}$. \Cref{fig:multi} shows that the selected dimension approaches the true dimension and the false-release count goes to zero as shot budget increases; the estimated subspace error decreases consistently with \cref{thm:subspace-stability}.

\begin{figure}[!htbp]
\centering
\begin{subfigure}{0.48\linewidth}
\centering
\includegraphics[width=\linewidth]{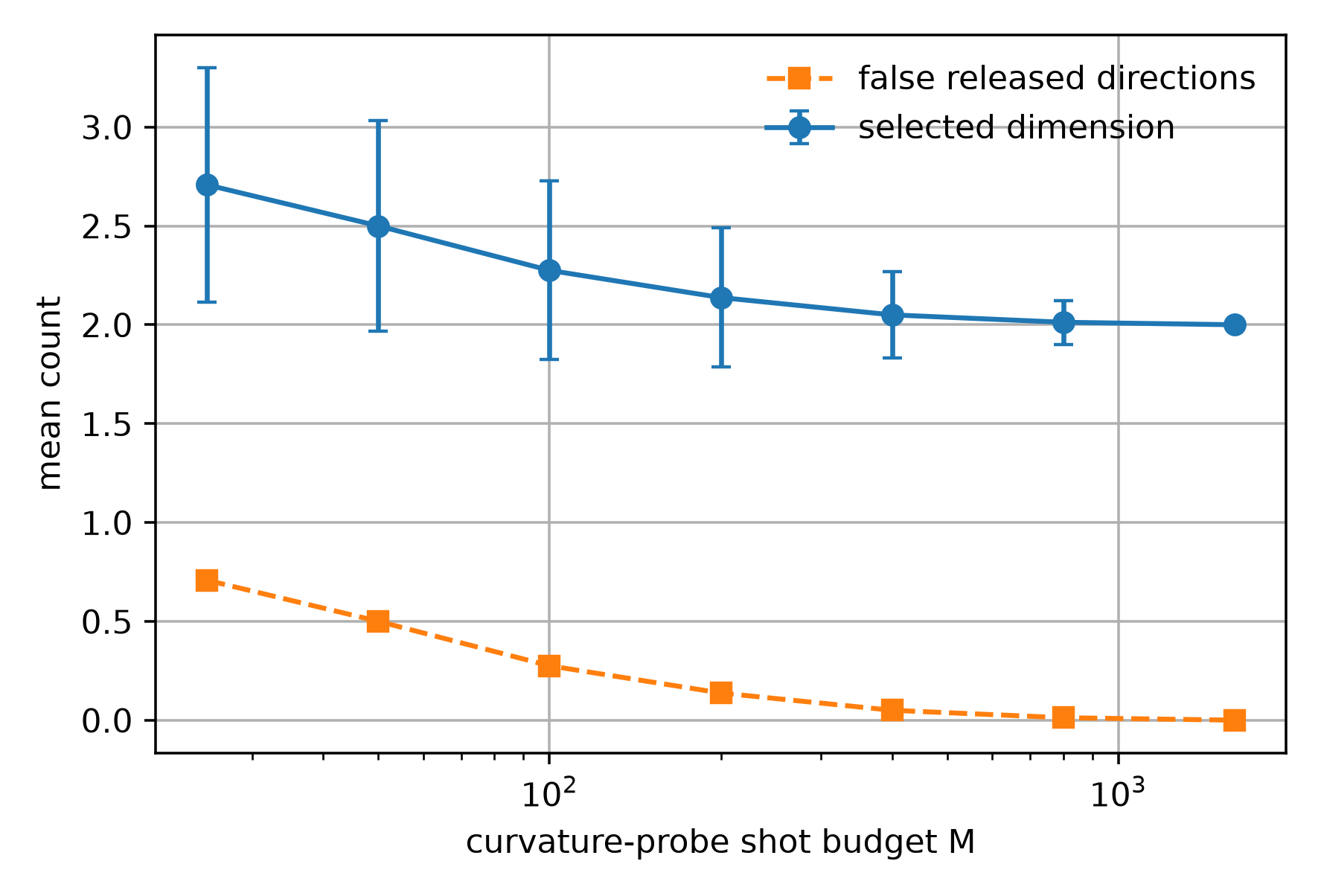}
\caption{Selected and false release counts.}
\label{fig:multi-select}
\end{subfigure}\hfill
\begin{subfigure}{0.48\linewidth}
\centering
\includegraphics[width=\linewidth]{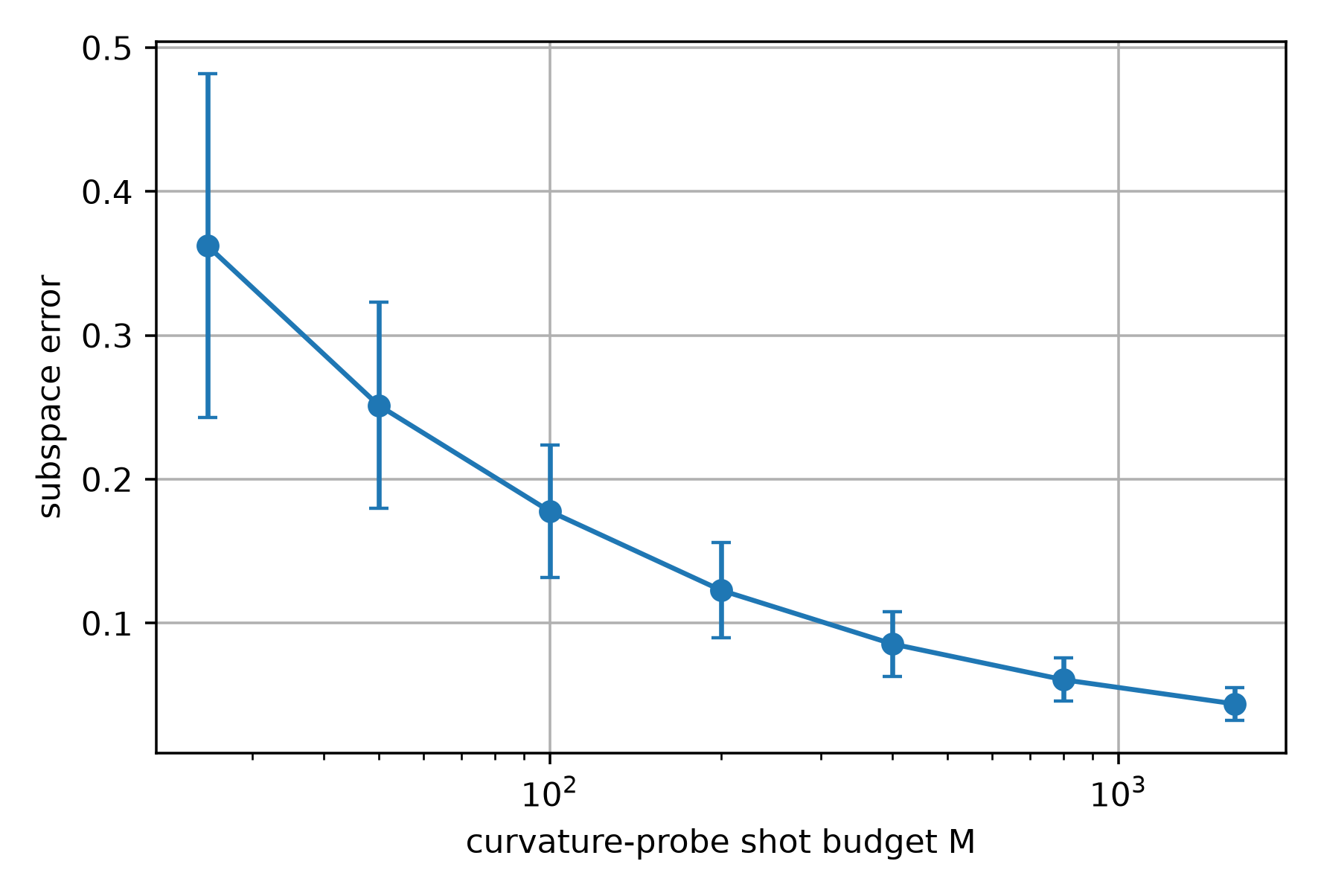}
\caption{Release-subspace error.}
\label{fig:multi-subspace}
\end{subfigure}
\caption{Multi-direction curvature-matrix selection. The prescribed spectral margin reduces over-release in this synthetic noise model; the Ising experiment separately evaluates a rigorous finite-shot certificate.}
\label{fig:multi}
\end{figure}

\subsubsection{Finite-data--finite-shot release selection}
Sample noise and shot noise enter the same margin, so we vary both training sample size $N$ and curvature-probe shot budget $M$ in a Gaussian-matrix surrogate with a fully specified population spectrum. The population matrix has two strong release modes. Empirical task noise scales as $N^{-1/2}$ and shot noise scales as $M^{-1/2}$. \Cref{fig:finite-data-shot} shows the quadratic matrix-estimation excess and selected dimension. The selector is conservative when either $N$ or $M$ is small, and approaches the correct release dimension when both increase. This surrogate isolates the additive data and shot variances; its matrix-estimation loss is distinct from a supervised prediction risk.

\begin{figure}[!htbp]
\centering
\begin{subfigure}{0.48\linewidth}
\centering
\includegraphics[width=\linewidth]{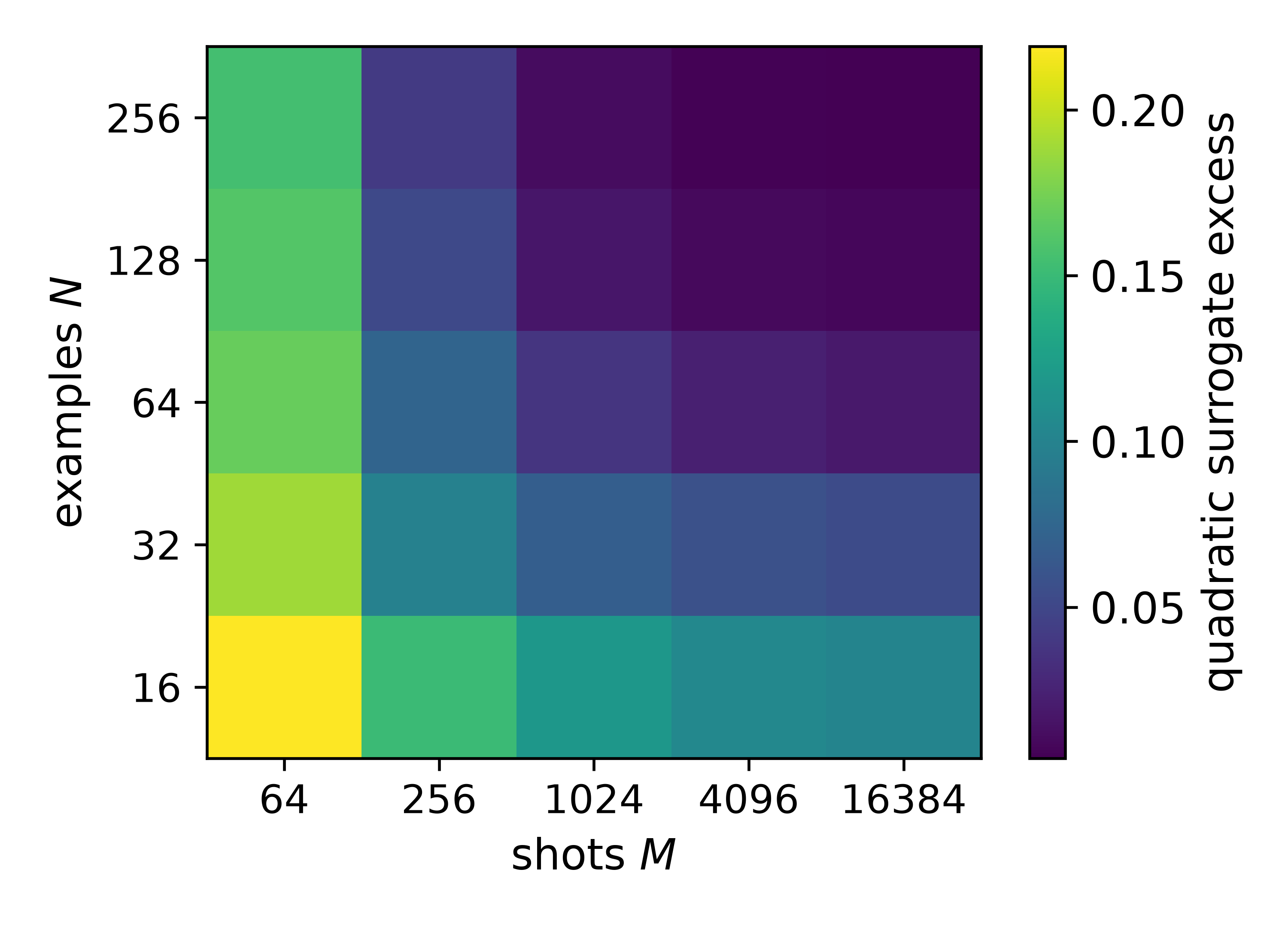}
\caption{Quadratic surrogate excess.}
\end{subfigure}\hfill
\begin{subfigure}{0.48\linewidth}
\centering
\includegraphics[width=\linewidth]{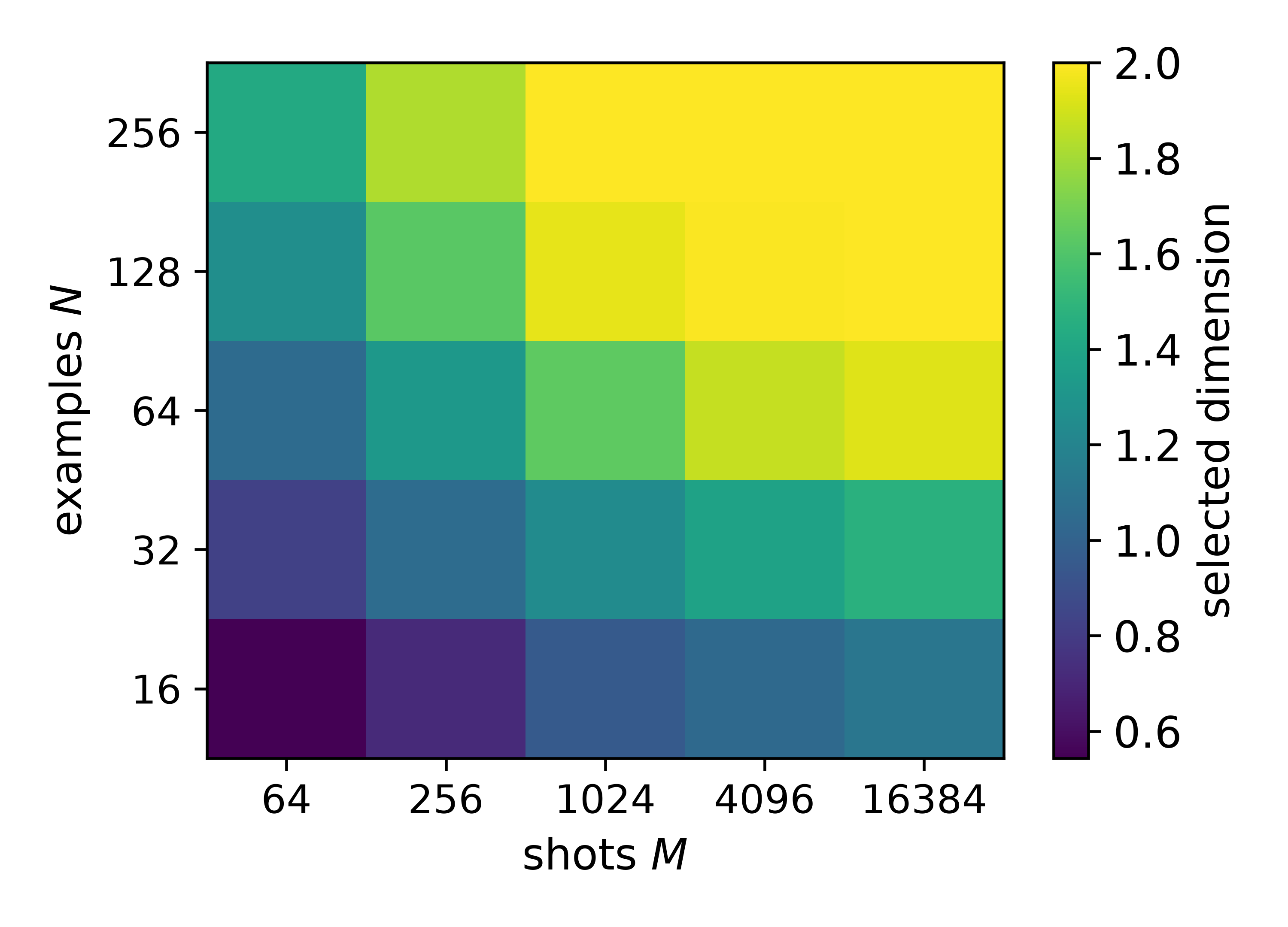}
\caption{Selected release dimension.}
\end{subfigure}
\caption{Finite-data--finite-shot matrix selection. The error is measured against the positive part of the population penalized matrix; both examples and shots control the selected dimension.}
\label{fig:finite-data-shot}
\end{figure}

\subsubsection{Soft-release effective dimension}
To illustrate the effective-dimension cost in \cref{thm:soft-effective-dimension}, we simulate a breaking feature spectrum with a decaying task signal and compare the proxy risk ``bias + variance + shot'' over the soft penalty $\alpha$. \Cref{fig:soft-effective} shows that the optimal soft penalty $\alpha$ decreases as $N$ grows, so the selected release becomes stronger: with more data, the variance cost of an opened direction is cheaper to pay. The gate continuously controls the effective dimension through the squared amplitudes in \eqref{eq:soft-trace}.

\begin{figure}[!htbp]
\centering
\includegraphics[width=0.74\linewidth]{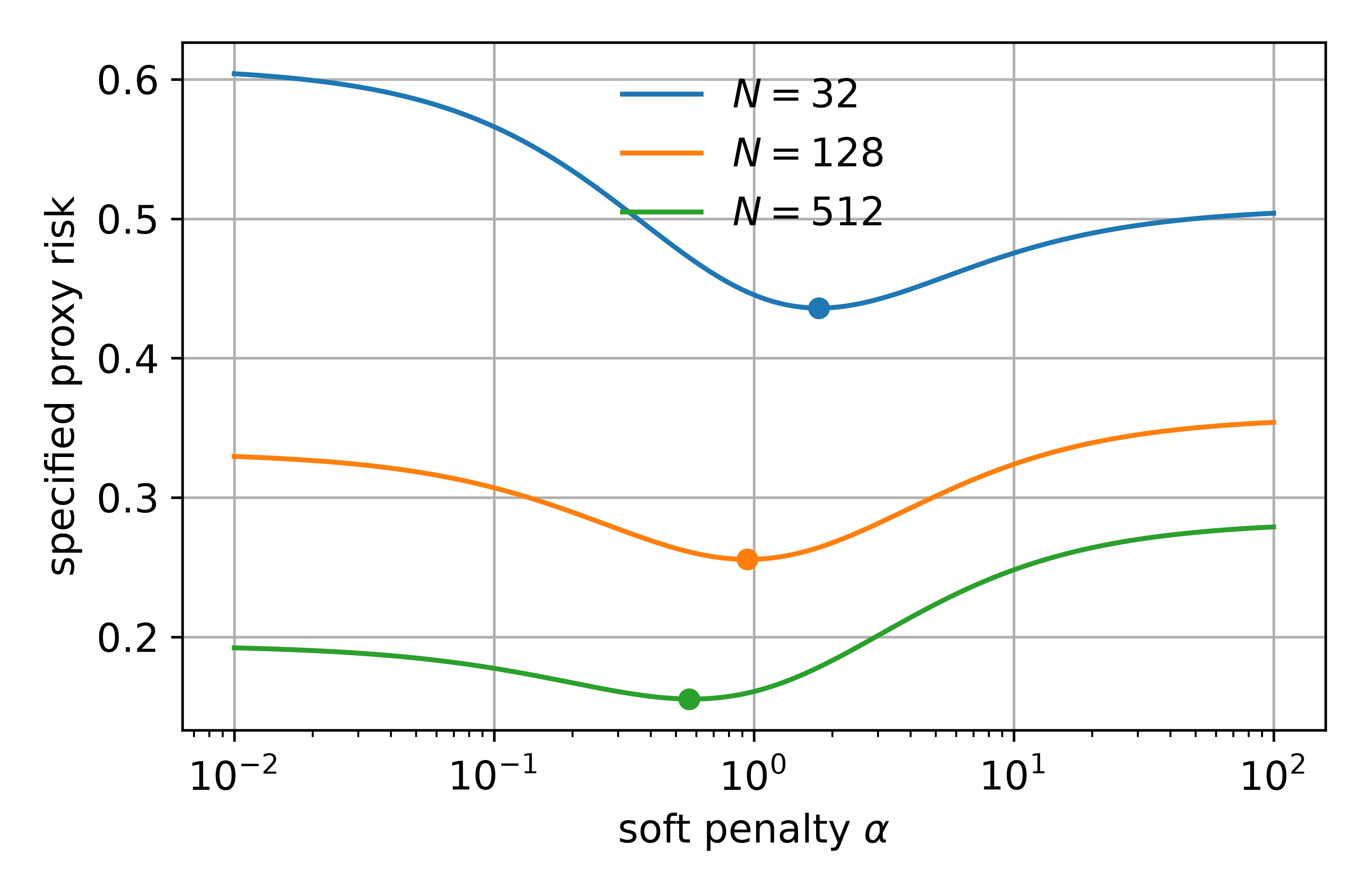}
\caption{Soft-release effective dimension. Larger $\alpha$ means harder symmetry. The optimal soft penalty decreases as sample size increases, because the variance cost of released directions becomes cheaper.}
\label{fig:soft-effective}
\end{figure}

\subsubsection{Structural-risk selection: when to retain and when to release}
We use a transparent even/odd regression task. Inputs are $x\sim\operatorname{Unif}[-1,1]$ and labels are
\begin{equation}
 y=0.75x^2+bx+\epsilon,
 \qquad \epsilon\sim\mathcal N(0,0.25^2).
\end{equation}
The hard model uses even features $(1,x^2)$ and the released model adds the odd feature $x$. The illustrative score is training mean-squared error plus $\lambda\sqrt{d/N}$ and a release shot penalty. Its Gaussian-noise risk curves examine the bias--complexity tradeoff; they do not implement the independent-validation score of \cref{thm:post-selection-oracle}. \Cref{fig:srm-selection} shows that the selector retains symmetry for $b=0$ or weak breaking at small $N$, and releases increasingly often as either $b$ or $N$ grows.

\begin{figure}[!htbp]
\centering
\includegraphics[width=0.74\linewidth]{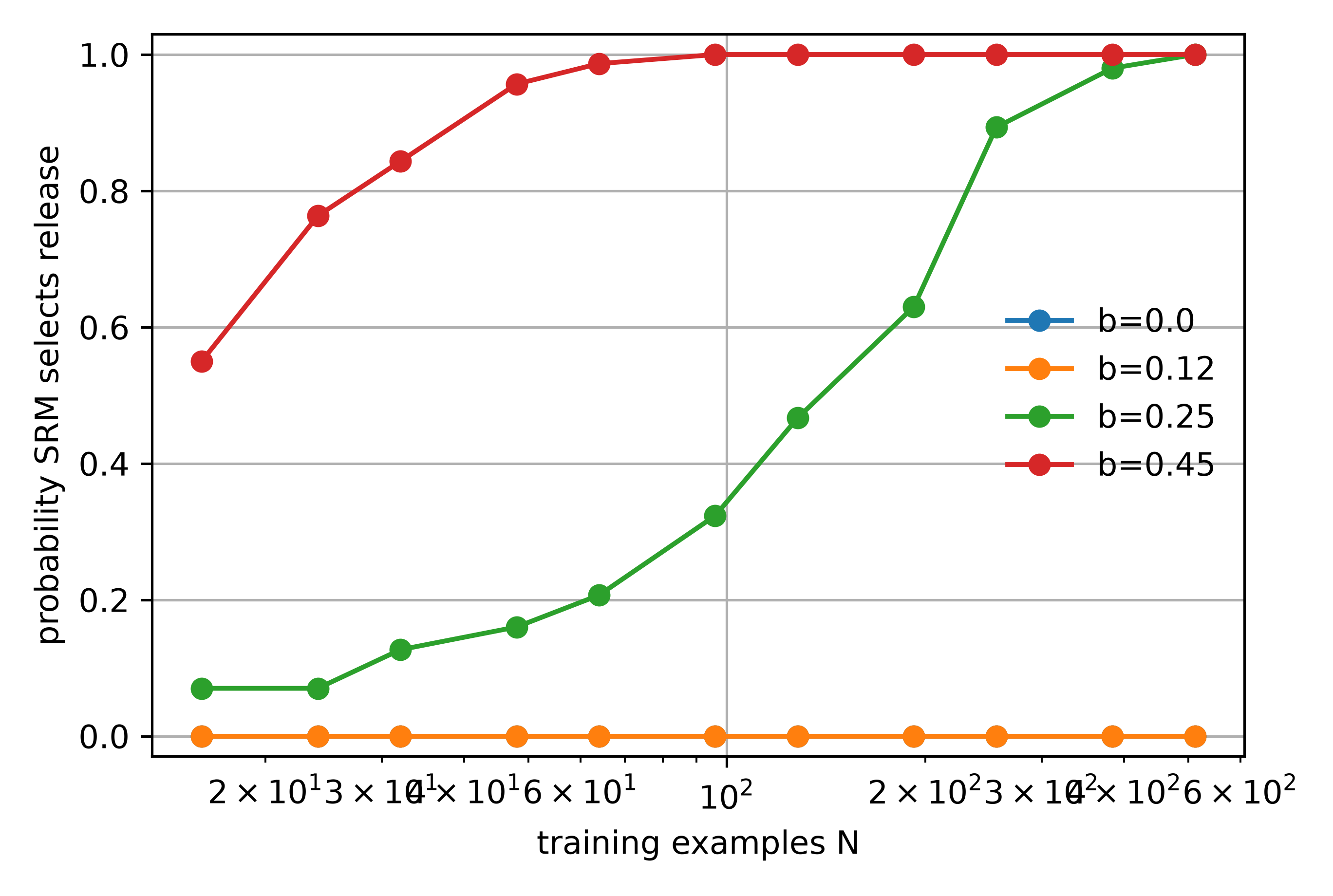}
\caption{Structural-risk release probability in the even/odd toy task. Weak breaking and small samples favor retention; stronger breaking or larger samples favor release.}
\label{fig:srm-selection}
\end{figure}

\subsubsection{Post-selection validity: training search versus independent validation gate}
Search and validation on the same data inflate the reported gain; this test isolates that bias. The hard model again uses even features \((1,x^2)\). During training, the release search considers one true odd feature \(x\) and twelve additional odd polynomial branches, then chooses the branch with the largest training mean-squared error improvement. If release is decided directly from the training improvement, the zero-threshold procedure almost always releases even under the null \(b=0\). Adding a regressor to a nested least-squares model already gives a nonnegative training gain; searching thirteen branches further amplifies this effect. The independent validation gate evaluates only the training-selected branch on fresh data and applies the studentized improvement threshold specified in \cref{app:repro}. This simulation evaluates its false-release behavior; the rigorous finite-sample bounds in \cref{thm:post-selection-oracle,cor:nested-adaptivity} require their stated loss and confidence assumptions. \Cref{fig:postselection-gate} shows that validation suppresses false release at \(b=0\) while recovering power for \(b=0.25\) as \(N\) grows.

\begin{figure}[!htbp]
\centering
\includegraphics[width=0.74\linewidth]{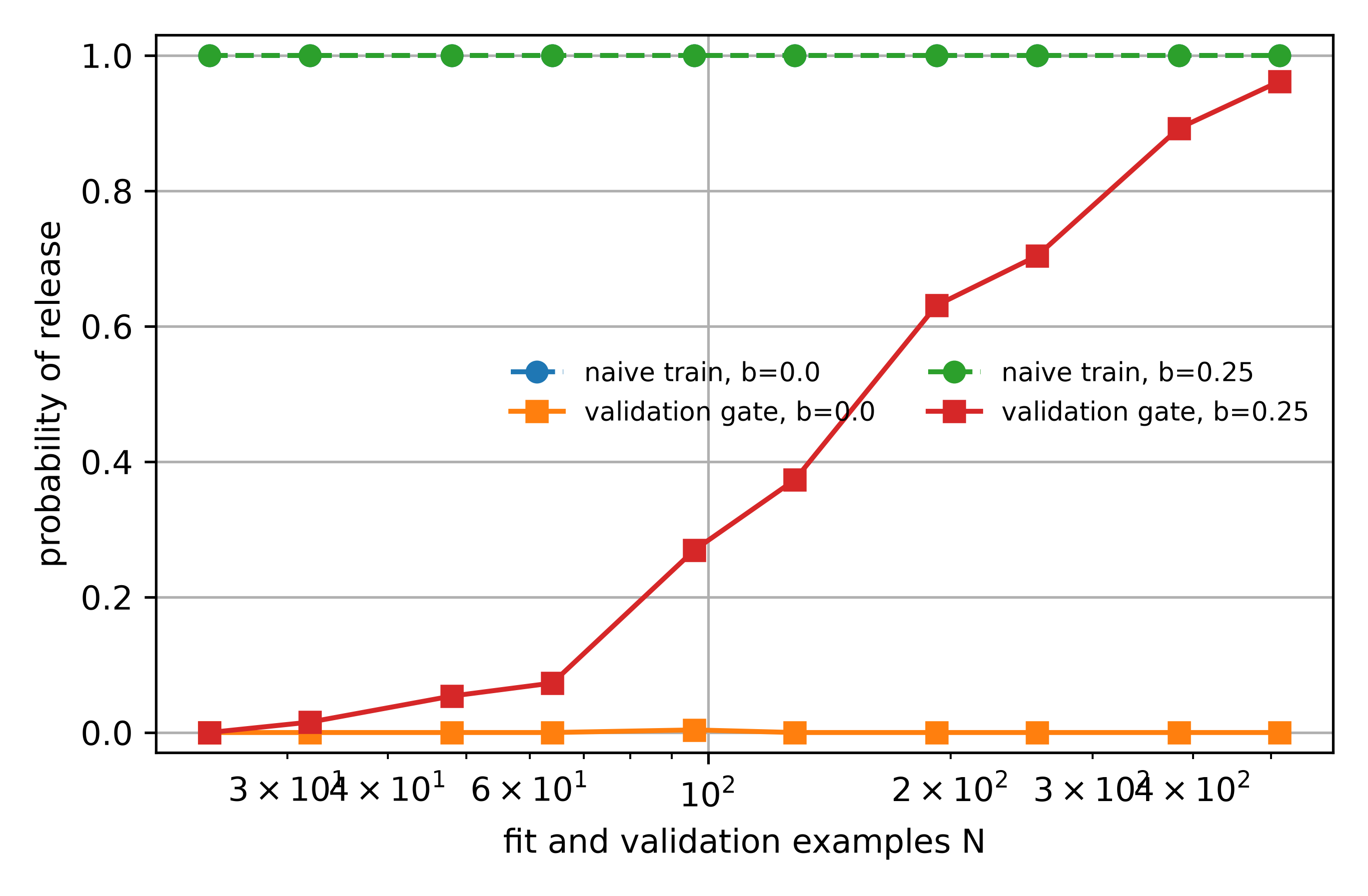}
\caption{Post-selection validation gate. Training-set search finds spurious release directions under no breaking, while an independent validation gate suppresses the false release and retains power when a genuine breaking term is present.}
\label{fig:postselection-gate}
\end{figure}

\FloatBarrier

\subsection{Quotient QNG and mechanism checks}
\label{subsec:exp-qng}

Normal-form dynamics separate the effect of gauge projection from loss optimization and test how projector error changes vertical drift.

\subsubsection{Quotient QNG: exact and estimated projector stress tests}
The normal-form loss separates the physical symmetric coordinate, the release coordinate, and an exact gauge:
\begin{equation}
\cL(\theta,b,v)=L_0+\frac12\theta^2-\frac{\mu}{2}b^2+\frac16b^4+c\theta b^2,
\label{eq:normal-form}
\end{equation}
where $\theta$ is a symmetric coordinate, $b$ is a released coordinate, and $v$ is a pure gauge coordinate. The Fisher matrix is $F=\diag(1,1/2,0)$. Full-space Tikhonov and gauge-projected QNG have similar loss reduction, but full-space damping performs a random walk in $v$, while gauge-projected QNG keeps $v=0$ by construction; see \cref{fig:qng-normal}. The multi-qubit counterpart, on Slater and Givens circuits with a continuous-release comparison of soft modes against exact gauge directions, is \cite{ChenQNGQuotient2026}; the normal form here isolates the same mechanism in three coordinates.

\begin{figure}[!htbp]
\centering
\begin{subfigure}{0.48\linewidth}
\centering
\includegraphics[width=\linewidth]{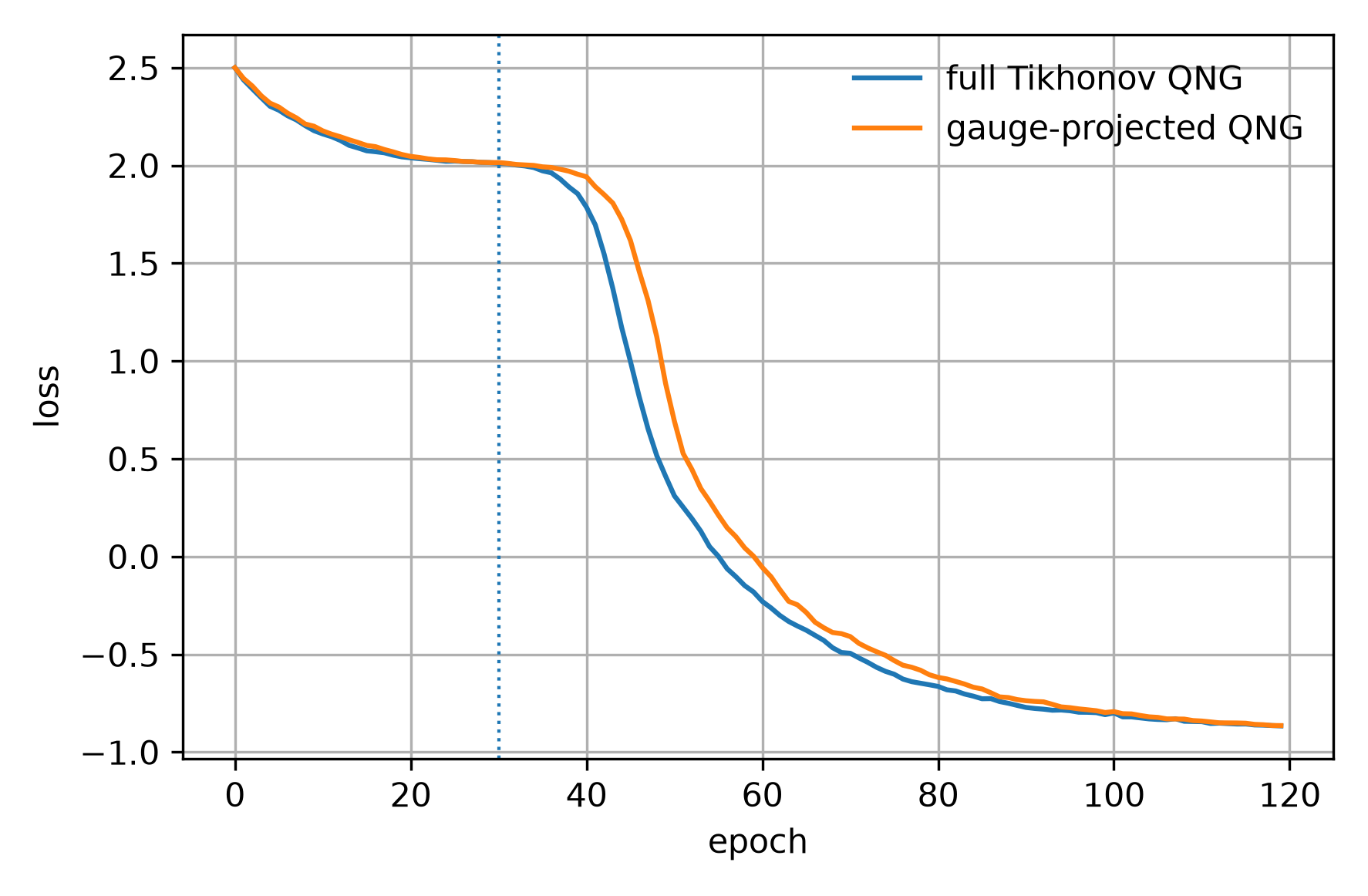}
\caption{Loss after release.}
\label{fig:qng-loss}
\end{subfigure}\hfill
\begin{subfigure}{0.48\linewidth}
\centering
\includegraphics[width=\linewidth]{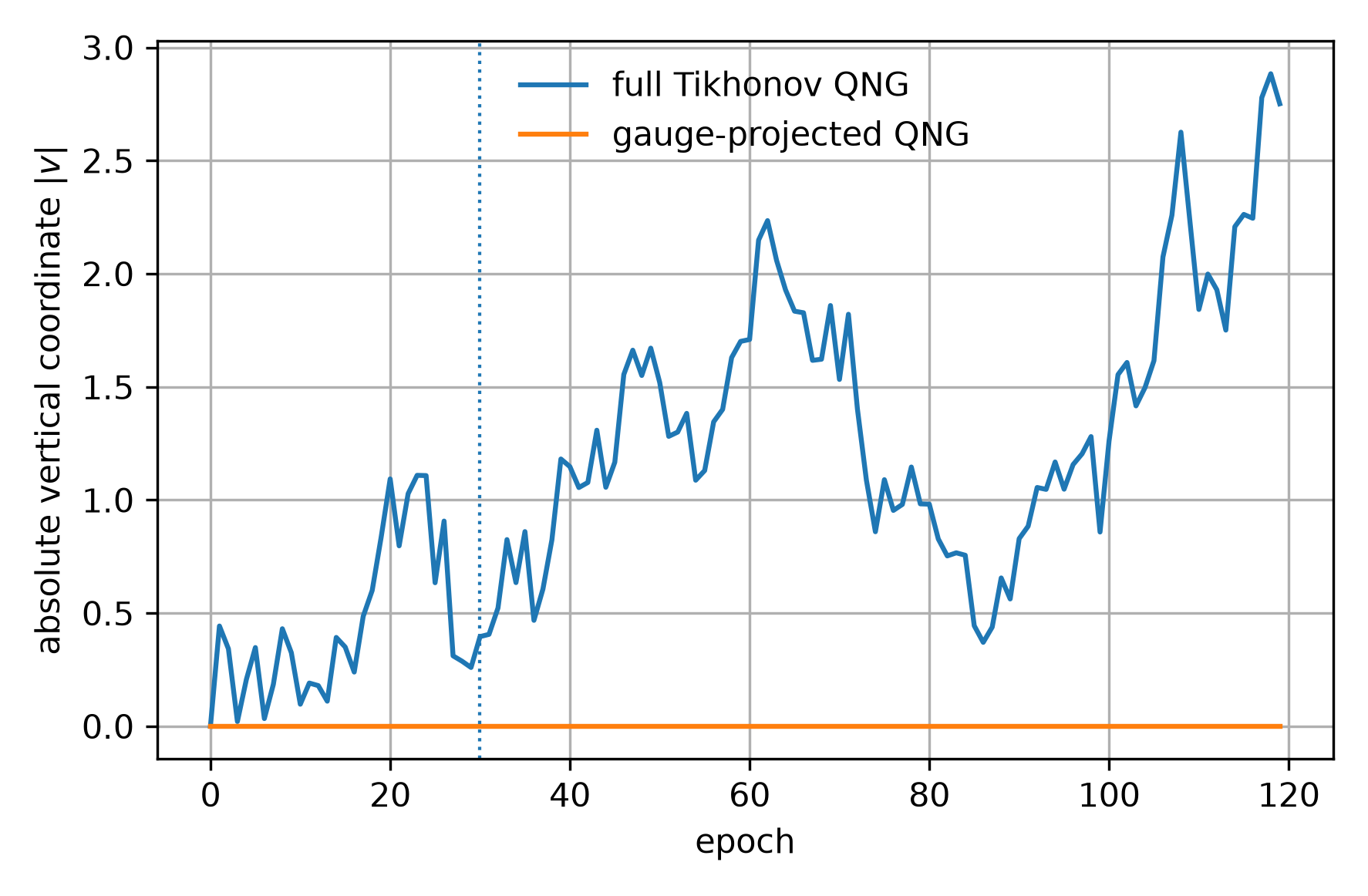}
\caption{Vertical coordinate drift.}
\label{fig:qng-drift}
\end{subfigure}
\caption{Noisy normal-form QNG. Full-space Tikhonov and gauge-projected QNG optimize the physical coordinates similarly, but only the quotient update suppresses vertical representative drift.}
\label{fig:qng-normal}
\end{figure}

Rotating the estimated horizontal projector into the vertical direction by an operator-norm error $\varepsilon_P$ tests sensitivity to imperfect gauge identification. \Cref{fig:qng-leakage} shows that vertical drift grows continuously with the projector error and remains far below the full-space Tikhonov baseline for small errors, matching \cref{thm:estimated-projector}.

\begin{figure}[!htbp]
\centering
\includegraphics[width=0.74\linewidth]{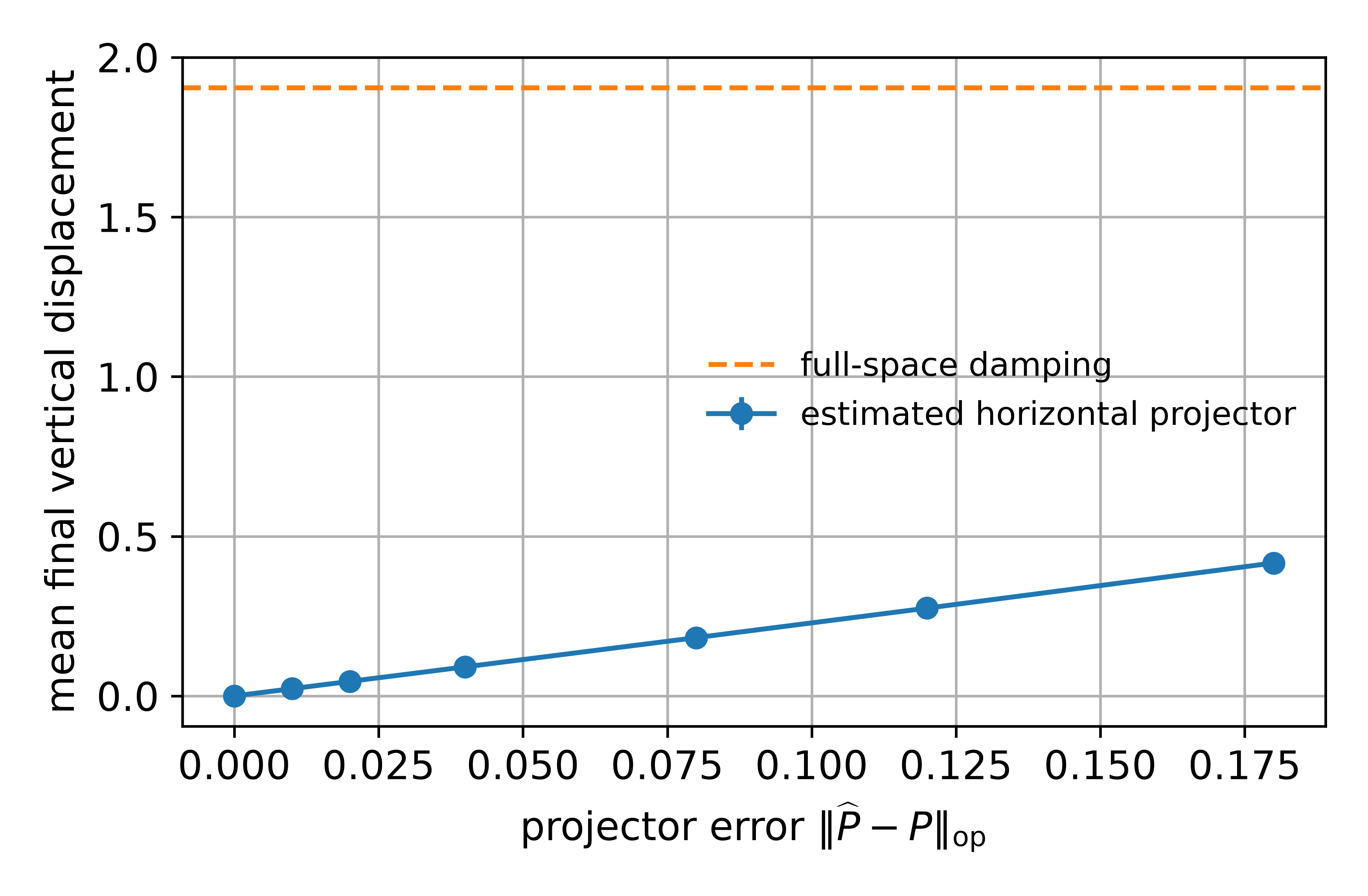}
\caption{Estimated-projector leakage. The true vertical drift of gauge-projected QNG scales with the horizontal-projector subspace error, while full-space Tikhonov damping gives a much larger vertical random walk.}
\label{fig:qng-leakage}
\end{figure}

\subsubsection{Additional mechanism checks}
The final examples isolate curvature concentration, the certified step calculation, and metric damping; \cref{fig:new-mechanisms} displays their governing quantities.

The trainability filter in \cref{cor:dla-bp-margin} is tested in a stylized branch whose curvature variance follows the dynamical Lie-algebra concentration model $C_{\mathrm{BP}}/D_S$ with $D_S=4^n$. Under a fixed polynomial shot budget, the probability of certifying this exponentially flat branch decreases with the algebra dimension and remains below the Markov--concentration bound in \eqref{eq:dla-bp-filter}; see \cref{fig:new-mechanisms}(a). It isolates the statistical decision rule and confirms that the finite-shot margin rejects branches whose curvature is already below the measurable scale.

The cubic benchmark evaluates the step-size calculation in \cref{thm:certified-release-descent}. It uses the known curvature $\lambda_{\mathrm{true}}=0.5$ and subtracts a prescribed sampling margin to obtain the reference gap. The resulting population loss drop lies above \eqref{eq:certified-descent-drop}, as shown in \cref{fig:new-mechanisms}(b). Noisy empirical curvatures are used for the naive-step comparison, whose failure rate rises at small shot budgets. This isolates the deterministic descent calculation; the Ising experiment tests the complete procedure with estimated curvature and a simultaneous confidence radius.

The soft-damping comparison in \cref{fig:new-mechanisms}(c) applies three metric penalties to the same sequence of noisy horizontal gradients. Increasing the penalty reduces the step amplitude while preserving its sign, and a large penalty approximates retention. The displayed paths illustrate the coordinate bound in \eqref{eq:soft-qng-coordinate-bound}. The inverse-square variance law is the analytical statement \eqref{eq:soft-qng-noise-bound}; this experiment reports one shared-noise trajectory for each penalty.

\begin{figure}[!htbp]
\centering
\includegraphics[width=0.98\linewidth]{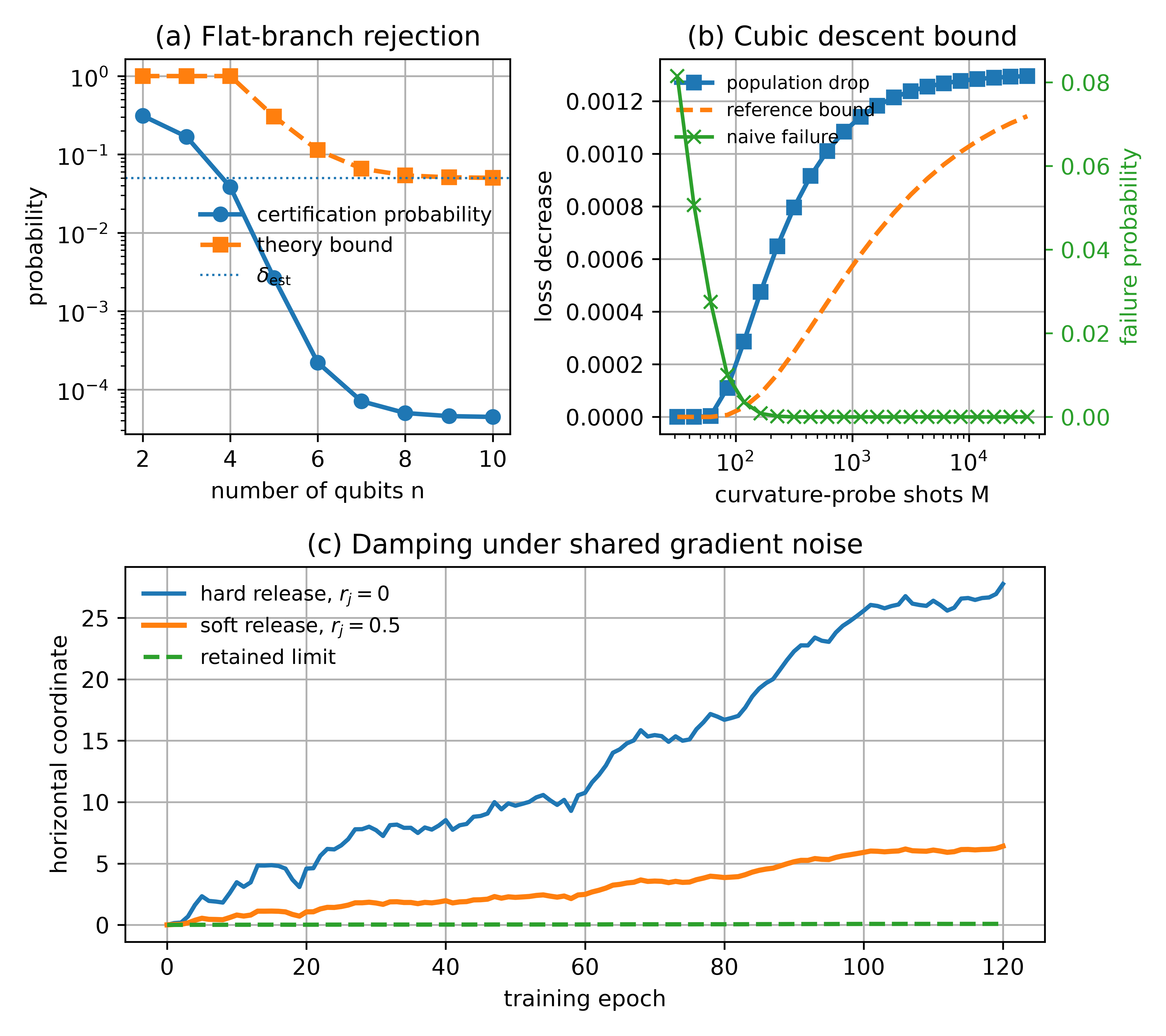}
\caption{Mechanism checks. (a) Certification probability and its bound for a scalar branch satisfying the assumed curvature concentration law. (b) A cubic model compares the reference-gap descent bound with the population drop; noisy naive steps can fail. (c) Three damping penalties applied to one shared-noise gradient sequence illustrate suppression of horizontal step amplitude.}
\label{fig:new-mechanisms}
\end{figure}

\section{Discussion and Conclusion}
\label{sec:discussion}

Symmetry becomes a statistical modeling choice once its information loss is compared with the cost of recovering that information. The group twirl identifies the readout signal removed by a hard constraint. The release gradient and curvature identify which part can improve a specified loss through available circuit directions. Their connection gives a sequence of decisions: detect useful breaking signal, certify a local enlargement, and validate the trained model. The norm-dual characterization fixes the scale of the first decision, including the rank factor needed to control all bounded-outcome readouts.

Simultaneous confidence is what allows the local decision to survive direction selection. Shared Pauli records provide an effective certificate when the release commutators are local; scalar probing remains available through the original loss observable, with truncation bias included before selection. The Ising comparison exhibits the practical consequence of this distinction: on the tested grids, the two specified estimators differ by a factor of $6300$ in the shots required to certify the leading direction. The controlling resources are the measurement ensemble, the operator structure, and the bias bound as well as the size of the release space.

Independent validation carries the local search into a comparison of trained predictors. For bounded squared loss, its fast excess-risk bound states when a nested path preserves its approximation--estimation rate. Quotient training then removes independently identified state-preserving redundancy before damping physical modes \cite{ChenQNGQuotient2026}. Physical group inference \cite{CompanionDiscovery}, statistical model selection, and parameter-gauge identification therefore contribute different information to the final guarantee.

Two extensions would sharpen the decision theory. The minimax copy complexity for an unknown composite symmetric null would determine how much the global gate can be improved beyond structured examples. Uniform or sequential confidence for continuously adapted probe families would allow discovery, direction selection, and finite-difference step choice to share measurements more efficiently. Further resource improvements depend on estimators for nonlocal curvature, controlled hardware correlations, and candidate paths with explicit approximation guarantees. These questions extend the same principle: a symmetry should be relaxed to the extent that the task signal, measurement access, and independently validated improvement support it.

\appendix

\section{Finite-Difference Curvature Estimation}
\label{app:fd}

The scalar-probe mode of \cref{subsec:shot-estimation} is the fallback whenever $[B,[B,C]]$ expands into high-weight observables and the shadow bound \eqref{eq:shadow-local-shot} becomes vacuous. Throughout, $B$ is a fixed Hermitian breaking generator in the declared coordinate normalization, $e^{-i\beta B}$ is implementable, and $\cL(\beta):=\Tr(Ce^{-i\beta B}\rho_0e^{i\beta B})$ with $\kappa_B=-\cL''(0)$.

\begin{proposition}[Release certification from scalar probes]
\label[proposition]{prop:fd-certification}
Assume each $\widehat\cL(\beta)$, $\beta\in[-\delta_\beta,\delta_\beta]$, is unbiased for $\cL(\beta)$, built from $M$ independent shots, with $\Var(\widehat\cL(\beta))\le\sigma_C^2/M$, and set
\begin{equation}
\widehat\kappa_B(\delta_\beta):=
\frac{2\widehat\cL(0)-\widehat\cL(\delta_\beta)-\widehat\cL(-\delta_\beta)}{\delta_\beta^2}.
\label{eq:fd-kappa}
\end{equation}
Then $\bbE\widehat\kappa_B(\delta_\beta)=\kappa_B+O(\delta_\beta^2\norm{B}_{\op}^4\norm{C}_{\op})$ and, for independent probe estimates, $\Var(\widehat\kappa_B)\le6\sigma_C^2/(M\delta_\beta^4)$. If the three probe estimates are independent and each is sub-Gaussian with proxy variance $\sigma_C^2/M$ and $L_4:=\sup_{\abs{\beta}\le\delta_\beta}\abs{\cL^{(4)}(\beta)}$, then with probability at least $1-\eta$
\begin{equation}
\abs{\widehat\kappa_B(\delta_\beta)-\kappa_B}
\le \frac{L_4\delta_\beta^2}{12}
+\frac{\sigma_C}{\delta_\beta^2}\sqrt{\frac{12\log(2/\eta)}{M}},
\label{eq:fd-high-prob}
\end{equation}
so for a finite-shot threshold $\tau_M$ the decision $\widehat\kappa_B>\tau_M$ has the same sign as the ideal test $\kappa_B>\tau_M$ whenever $\abs{\kappa_B-\tau_M}$ exceeds the right-hand side.
\end{proposition}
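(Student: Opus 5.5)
The plan is to split the estimation error into a deterministic truncation part and a stochastic part,
\[
\widehat\kappa_B-\kappa_B=\bigl(\bbE\widehat\kappa_B-\kappa_B\bigr)+\bigl(\widehat\kappa_B-\bbE\widehat\kappa_B\bigr),
\]
and to bound each part separately.

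\emph{Bias.} By unbiasedness of the probes,
\[
\bbE\widehat\kappa_B=\frac{2\cL(0)-\cL(\delta_\beta)-\cL(-\delta_\beta)}{\delta_\beta^2}.
\]
Since $\cL$ is real-analytic, Taylor's theorem with the Lagrange remainder at fourth order applies. Odd terms cancel in the symmetric sum, and the two remainders combine by the intermediate value theorem. This gives
\[
\cL(\delta_\beta)+\cL(-\delta_\beta)-2\cL(0)=\delta_\beta^2\cL''(0)+\frac{\delta_\beta^4}{12}\cL^{(4)}(\xi)
\]
for some $\abs{\xi}\le\delta_\beta$. Dividing by $\delta_\beta^2$ and using $\kappa_B=-\cL''(0)$, the bias is bounded by $L_4\delta_\beta^2/12$.

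For the $O(\cdot)$ form, I will use the Heisenberg-picture formula from \cref{rem:analytic-L3}, extended to fourth order:
\[
\cL^{(4)}(\beta)=\Tr\bigl(\rho_v(\beta)\operatorname{ad}_B^4(C)\bigr).
\]
The estimate $\norm{[X,Y]}_{\op}\le2\norm{X}_{\op}\norm{Y}_{\op}$ then gives $L_4\le16\norm{B}_{\op}^4\norm{C}_{\op}$.

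\emph{Variance.} With independent probes, the variance of the estimator is
\[
\delta_\beta^{-4}\bigl(4\Var\widehat\cL(0)+\Var\widehat\cL(\delta_\beta)+\Var\widehat\cL(-\delta_\beta)\bigr)\le\frac{6\sigma_C^2}{M\delta_\beta^4}.
\]

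\emph{High-probability bound.} The centered estimator is a linear combination, with coefficients $(2,-1,-1)/\delta_\beta^2$, of independent sub-Gaussian variables with proxy $\sigma_C^2/M$. It is therefore sub-Gaussian with proxy $6\sigma_C^2/(M\delta_\beta^4)$. The two-sided sub-Gaussian tail gives, with probability at least $1-\eta$, a deviation of at most
\[
\sqrt{2\cdot 6\sigma_C^2\log(2/\eta)/(M\delta_\beta^4)},
\]
which is exactly the stochastic term in \eqref{eq:fd-high-prob}. Adding the deterministic bias bound gives the full inequality \eqref{eq:fd-high-prob}.

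\emph{Sign consistency.} On this event, if $\abs{\kappa_B-\tau_M}$ exceeds the radius in \eqref{eq:fd-high-prob}, then $\widehat\kappa_B-\tau_M$ has the same sign as $\kappa_B-\tau_M$. This is a one-line triangle-inequality argument.

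The only step needing care is the fourth-derivative commutator bound and the correct symmetric Taylor constant $1/12$. I expect the constant to come from the $2\delta_\beta^4/24$ coefficient; if the mean-value combination of the two remainders is awkward, I would fall back on the integral form of the remainder, which yields the same bound.
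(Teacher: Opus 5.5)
Your proposal is correct and follows essentially the same route as the paper: a symmetric Taylor expansion with cancellation of the odd terms, a fourth-derivative bound through $\operatorname{ad}_B^4(C)$, the independence variance computation with squared weights $4,1,1$, and the sub-Gaussian tail for the resulting weighted sum. You are more explicit than the paper's terse proof about the $1/12$ constant (Lagrange remainder plus intermediate value theorem) and about $L_4\le16\norm{B}_{\op}^4\norm{C}_{\op}$, which reproduces the constant $4/3$ in \eqref{eq:fd-operator-bias}.
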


\begin{proof}
Taylor expansion gives
\begin{equation}
\cL(\pm\delta_\beta)=\cL(0)\pm\delta_\beta\cL'(0)+\tfrac12\delta_\beta^2\cL''(0)\pm\tfrac16\delta_\beta^3\cL'''(0)+O\bigl(\delta_\beta^4\norm{B}_{\op}^4\norm{C}_{\op}\bigr),
\end{equation}
the fourth-derivative bound following from repeated commutators with $B$; summing the two expansions cancels the odd terms and returns $-\cL''(0)$ up to the quartic remainder. Independence gives $\Var(\widehat\kappa_B)=\delta_\beta^{-4}(4\Var\widehat\cL(0)+\Var\widehat\cL(\delta_\beta)+\Var\widehat\cL(-\delta_\beta))$. For \eqref{eq:fd-high-prob}, the deterministic part is the central-difference remainder and the random part is a weighted sum of three independent sub-Gaussian variables with squared weights $4,1,1$ and common factor $\delta_\beta^{-4}$, hence proxy variance $6\sigma_C^2/(M\delta_\beta^4)$.
\end{proof}

For the matrix certificate, truncation must be bounded before selecting a direction. Define known Hermitian operators
\begin{equation}
C_{\delta}(B)=\frac{2C-e^{i\delta B}Ce^{-i\delta B}-e^{-i\delta B}Ce^{i\delta B}}{\delta^2},
\qquad O(B)=[B,[B,C]].
\end{equation}
For every prepared state,
\begin{equation}
|\Tr((C_{\delta}(B)-O(B))\rho_0)|
\le b_\delta(B):=\|C_{\delta}(B)-O(B)\|_{\op}
\le\frac{\delta^2}{12}\|\operatorname{ad}_B^4(C)\|_{\op}
\le\frac{4\delta^2}{3}\|B\|_{\op}^4\|C\|_{\op}.
\label{eq:fd-operator-bias}
\end{equation}
The middle bound follows from the central-difference remainder in operator norm and unitary invariance of the fourth derivative. It uses $B$ and $C$, with no knowledge of $\rho_0$.

For polarization directions $e_a$ and $e_a+e_b$, let $u_a$ and $u_{a+b}$ be simultaneous directional error bounds, each including its statistical error and \eqref{eq:fd-operator-bias}. The entry envelope is $E_{aa}=u_a$ and $E_{ab}=(u_{a+b}+u_a+u_b)/2$ for $a\ne b$. Thus $r_\delta=\|E\|_{\op}$ bounds $\|\widehat K-K\|_{\op}$ by the envelope argument in \cref{prop:empirical-curvature-certificate}. The same radius is used for release selection and the descent step. The population difference $\|K_\delta-K\|_{\op}$ is a retrospective diagnostic and never an input to the decision.

\begin{remark}[Optimal step schedule and the shadow-norm bypass]
\label[remark]{rem:fd-schedule}
The two terms in \eqref{eq:fd-high-prob} are the bias--variance trade-off of dynamical probing. With $L_4>0$ fixed and no hardware constraint on $\delta_\beta$, the bound is minimized at
\begin{equation}
\delta_{\beta,\star}
=\left(\frac{12\sigma_C}{L_4}\sqrt{\frac{12\log(2/\eta)}{M}}\right)^{1/4}
\asymp M^{-1/8},
\label{eq:fd-delta-star}
\end{equation}
where the certified curvature error scales as $O(M^{-1/4})$: the slow exponent is the price of estimating a second derivative from noisy scalar forward passes. The compensation is that the sampling scale is $\sigma_C^2/(\delta_\beta^4M)$ and the truncation scale $L_4\delta_\beta^2$, neither of which involves the shadow norm of $[B,[B,C]]$ that controls the direct estimate of \cref{cor:shadow-commutator}. Nonlocal dynamics still carry a circuit-synthesis cost, but the measurement task becomes a controlled scalar experiment along an available perturbation. In practice $\delta_\beta$ is held fixed and the residual bias is added to the confidence radius, as in \cref{subsec:executable-certification}, so the certificate refers to $K$ rather than to its finite-difference surrogate.
\end{remark}

A closed-form check is supplied by the $L=8$ chain at $h=1$, with product backbone $|+x\rangle^{\otimes L}$, energy loss, and the unnormalized generator $B=M_y$. Here $\cL(\beta)=-L\sin^2(2\beta)-hL\cos(2\beta)$, so $\kappa_B=4L(2-h)=32$. Central differences give $29.08$, $31.26$, $31.81$, and $31.95$ at steps $0.2$, $0.1$, $0.05$, and $0.025$, respectively. The directional Fisher test for pure states applies a related central difference to fidelity, with remainder $L_4s^2/6$ \cite{ChenQNGQuotient2026}; its measurement and bias budget is allocated separately from the loss probes.

\section{Experiment Details}
\label{app:repro}

\subsection{Twirling blind-spot classification}

We use $n=6$ qubits and class-conditional product states $\rho_y=((I+ymZ)/2)^{\otimes n}$ for $m\in\{0,0.05,\ldots,0.8\}$. For each $M\in\{1,2,4,8,16,32,64\}$, all $n$ qubits are measured in the $Z$ basis on $M$ independent copies. The released classifier predicts the sign of the empirical magnetization and breaks ties uniformly at random. Each point uses $5000$ Monte Carlo repetitions. The hard invariant baseline is fixed at accuracy $1/2$ by \cref{cor:binary-release-separation}.

\subsection{Quantum-native detection}

For the equal-budget comparison, \(H=\{I,X^{\otimes n}\}\), \(n=2,\ldots,7\), and every point uses \(2000\) total state copies. States are formed by mixing a rank-at-most-16 Ginibre state with its spin-flip twirl at breaking weight \(0.6\). The twirl--swap estimator uses \(500\) pairs in each arm and sixty state repetitions. The single-copy estimator uses fifty independent local Haar bases, twenty shots on \(\rho\) and twenty on \(X^{\otimes n}\rho X^{\otimes n}\) per basis, and twenty-five state repetitions. The \(S_3\) spectroscopy check uses \(200{,}000\) shared samples on a three-qubit Ginibre state. The Pauli scan uses \(d=16\), \(r^2=1/64\), \(117\) samples in each of seventeen stabilizer bases, all \(255\) nonidentity Pauli statistics, and \(2000\) repetitions under each hypothesis.

\subsection{Detection boundary}

We use $Z=\mu+\xi$, $\xi\sim\mathcal N(0,I_k)$, $k\in\{2,4,8,16,32\}$, and the norm test \eqref{eq:norm-test} with $\delta=0.05$. For each signal level, $5000$ Monte Carlo repetitions are used, with $\|\mu\|^2/\sqrt{k}$ on a $71$-point grid from $0$ to $14$ and seed $20260905$.

\subsection{Sparse search and post-selection validation}
The sparse search uses $p=64$, unit noise scale, and a one-sided maximum test at $\delta=0.05$. The budgets are $M\in\{20,40,80,160,320,640,1280\}$. A uniformly selected coordinate has mean $\mu=a\sqrt{\log(p)/M}$, with $a$ on a $16$-point uniform grid from zero to three; all other means vanish. The simulation uses $3000$ repetitions per setting, threshold $\sqrt{2\log(p/\delta)/M}$, and seed $20260602$. For the post-selection experiment, $x$ is uniform on $[-1,1]$, $y=0.75x^2+bx+\epsilon$, and $\epsilon\sim\mathcal N(0,0.35^2)$. Besides $x$, the dictionary contains $x\cos(j\arccos(2x^2-1))$ for $j=1,\ldots,12$. Fitting selects the largest training mean-squared-error improvement. On an independent validation sample of the same size $N$, let $d_i$ be the paired squared-loss improvement and let $s_d$ be its sample standard deviation. The simulation releases when $\bar d>z_{1-0.05/13}s_d/\sqrt N$, where $z_u$ is the standard normal quantile. This studentized normal approximation is a simulation threshold; the unbounded Gaussian labels do not satisfy the bounded-loss hypotheses of \cref{cor:nested-adaptivity}. We use $b\in\{0,0.25\}$, $N\in\{24,32,48,64,96,128,192,256,384,512\}$, $260$ repetitions, and seed $424242$.

\subsection{Ising exact diagonalization}

The Hamiltonian is \eqref{eq:tfim} with periodic boundary conditions. Sparse Lanczos diagonalization is used for $L=6,8,10,12,14$ and $h=0.5,1.0,1.5$. The computed quantities are $E_0$, $\langle M_z\rangle$, $\langle M_x\rangle$, $\langle M_z^2\rangle$, $\langle M_x^2\rangle$, $\langle P\rangle$, $F_{M_z}$, and $\kappa_B$.

\subsection{Multi-branch release}

The true curvature matrix is an orthogonal conjugate of
\begin{equation}
\diag(1.3,0.75,0.02,-0.02,-0.08,-0.15,-0.24,-0.35).
\end{equation}
The noise standard deviation is $0.85M^{-1/2}$, the structural penalty is $0.08I$, the illustrative spectral threshold is $0.9M^{-1/2}$, and each $M$ is repeated $400$ times.

\subsection{Finite-data--finite-shot release}

Set
\[
K=Q\diag(0.60,0.36,0.02,0,-0.04,-0.10,-0.18,-0.26)Q^\top,
\]
where $Q$ is obtained by QR factorization of a seeded Gaussian matrix. Empirical data noise has standard deviation $0.34N^{-1/2}$ per matrix entry and shot noise has standard deviation $0.90M^{-1/2}$. The release penalty is $1.05\sqrt{\log(2p)/N}+1.20\sqrt{\log(2p^2)/M}+0.06$, with $p=8$. The target is $T=(K-0.06I)_+$ and the reported excess is $\tfrac12\|\widehat T-T\|_{\mathrm F}^2$. The selected estimate retains the eigenvalues of $\widehat K-0.06I$ above the sampling margin; the naive estimate keeps every positive eigenvalue, and the soft estimate subtracts the margin before taking the positive part. Use $N=16,32,64,128,256$, $M=64,256,1024,4096,16384$, $800$ repetitions, and seed $20260906$.

\subsection{Soft release}

There are $q=14$ breaking features. Their covariance eigenvalues are $\nu_j=\exp(-(j-1)/4.5)$ and the true breaking coefficients are $\theta_j=0.35\exp(-(j-1)/3)$. The squared approximation bias is $(1-s)^2\sum_j\nu_j\theta_j^2$, and the proxy risk adds
\begin{equation}
1.20\sqrt{(2+\sum_js^2\nu_j)/N}+2s^2d_{\br}/M,
\end{equation}
with $M=500$ and $d_{\br}=q=14$. Evaluate $N=32,128,512$ and a $161$-point logarithmic grid $\alpha\in[10^{-2},10^2]$. The minimizing penalties are approximately $1.778$, $0.944$, and $0.562$, respectively.

\subsection{Structural-risk toy task}

The sample sizes are
\begin{equation}
N\in\{16,24,32,48,64,96,128,192,256,384,512\},
\end{equation}
with breaking strengths $b=0,0.12,0.25,0.45$ and $300$ repetitions per point. The hard and released dimensions are $2$ and $3$. The score is training mean-squared error plus $0.60\sqrt{d/N}$ and a release shot penalty $0.80/200$.

\subsection{QNG normal forms}

For \eqref{eq:normal-form}, $\mu=2$, $c=0.4$, $L_0=2$, $F=\diag(1,1/2,0)$, step size $0.06$, damping $\gamma=0.05$, gradient-noise standard deviation $0.18$, and release epoch $30$. The projector-leakage experiment uses the horizontal basis $[e_1,\sqrt{1-\varepsilon_P^2}e_2+\varepsilon_Pe_3]$ for $\varepsilon_P\in\{0,0.01,0.02,0.04,0.08,0.12,0.18\}$. It starts at $(1,0,0)$, seeds $b=0.05$ at epoch $30$, runs $120$ steps for $400$ repetitions, and uses seed $20260907$. 

\subsection{Mechanism checks}

For the barren-plateau filter, we use $n=2,\ldots,10$, $D_S=4^n$, $M=1000$, $C_{\mathrm{BP}}=1$, and Gaussian estimator noise of standard deviation $\sigma/\sqrt M$ with $\sigma=1$. Set $r_\delta=2\sigma z_{1-\delta_{\mathrm{est}}/2}/\sqrt M$ and $\delta_{\mathrm{est}}=0.05$, so the estimator exceeds $r_\delta/2$ in absolute value with probability exactly $0.05$, as required by \cref{cor:dla-bp-margin}. The cubic check uses $L_3=8$, known curvature $0.5$, and a reference gap over shot budgets $M\in[10^{1.5},10^{4.5}]$. The soft-damping check has horizontal Fisher eigenvalue $0.1$, learning rate $0.1$, Tikhonov damping $0.05$, and gradient-noise standard deviation $0.8$ over $120$ epochs.

\subsection{Executable certification on a quantum state}

Both experiments use $L=8$, the same $p=6$ dictionary with $\Tr(B_aB_b)/2^L=\delta_{ab}$, and $C=-(M_z/L)^2$. Scalar settings use $h=0.5,1.0,1.5$, $M=10^3,\ldots,10^7$, $\delta_\beta=0.1,0.25,0.4$, and $200$ repetitions each. The comparison uses $h=0.5,1.5$, $20$ repetitions, shadow budgets $10^3,10^4,10^5$, and scalar budgets $10^3,\ldots,10^6$ per loss estimate at $\delta_\beta=0.25$. Each fixed setting has confidence budget $0.05$. Empirical Bernstein intervals use unbiased sample variances; the shadow ranges cover all compatible Pauli measurement bases, and the scalar truncation bound uses only the specified operators. We evaluate realized operator errors, confidence coverage, empirical selection counts, and false release against $K-I/M$. The operator calculations and exact-state simulation are feasible at this size; their classical cost is not included in the state-shot count. Known-state quantities audit the result after selection and do not enter the confidence radius.

\end{document}